\newcommand{\vect}[1]{\boldsymbol{#1}}
\newcommand{\mat}[1]{\boldsymbol{#1}}
\newcommand{\wt}[1]{\widetilde{#1}}
\newcommand{\wh}[1]{\widehat{#1}}
\newtheorem{remark}{Remark}
\newtheorem{theorem}{Theorem}
\newtheorem{lemma}{Lemma}
\newtheorem{definition}{Definition}
\newtheorem{corollary}{Corollary}
\newtheorem{property}{Property}
\newtheorem{assumption}{Assumption}
\newtheorem{fact}{Fact}
\newtheorem{objective}{Objective}
\begin{document}

\title{Evolution of Social Power in Social Networks with Dynamic Topology}
%
%
%

\author{\IEEEauthorblockN{Mengbin Ye, \emph{Student Member, IEEE} $\quad$ }
\and
\IEEEauthorblockN{Ji Liu, \emph{Member, IEEE}$\quad $ }
\and 
\IEEEauthorblockN{Brian D.O. Anderson, \emph{Life Fellow, IEEE} $\quad$ }
\and \\
\IEEEauthorblockN{Changbin Yu, \emph{Senior Member, IEEE $\quad$ }
\and
\IEEEauthorblockN{Tamer Ba\c{s}ar, \emph{Life Fellow, IEEE} }
}

\thanks{M.~Ye, B.D.O.~Anderson and C.~Yu are with the Research School of Engineering, Australian National University
\texttt{\{Mengbin.Ye, Brian.Anderson, Brad.Yu\}@anu.edu.au}. B.D.O.~Anderson is also with Hangzhou Dianzi University, Hangzhou, China, and with Data61-CSIRO (formerly NICTA Ltd.) in Canberra, A.C.T., Australia. J.~Liu and T.~Ba\c{s}ar are with the Coordinated Science
Laboratory, University of Illinois at Urbana-Champaign
\texttt{\{jiliu, basar1\}@illinois.edu}.}

}

\maketitle

\begin{abstract}
The recently proposed DeGroot-Friedkin model describes the dynamical evolution of individual social power in a social network that holds opinion discussions on a sequence of different issues. This paper revisits that model, and uses nonlinear contraction analysis, among other tools, to establish several novel results. First, we show that for a social network with constant topology, each individual's social power converges to its equilibrium value exponentially fast, whereas previous results only concluded asymptotic convergence. Second, when the network topology is dynamic (i.e., the relative interaction matrix may change between any two successive issues), we show that each individual exponentially forgets its initial social power. Specifically, individual social power is dependent only on the dynamic network topology, and initial (or perceived) social power is forgotten as a result of sequential opinion discussion. Last, we provide an explicit upper bound on an individual's social power as the number of issues discussed tends to infinity; this bound depends only on the network topology. Simulations are provided to illustrate our results.
\end{abstract}

\begin{IEEEkeywords}
opinion dynamics, social networks, influence networks, social power, dynamic topology, nonlinear contraction analysis, discrete-time systems
\end{IEEEkeywords}

%
\IEEEpeerreviewmaketitle

\section{Introduction}\label{sec:intro}
%
%
%
%

\IEEEPARstart{S}{ocial} network analysis is the study of a group of social actors (individuals or organisations) who interact in some way according to a social connection or relationship. The study of social networks has spanned several decades \cite{cartwright1959social_book,friedkin2015_socialsurvey} and across several scientific communities. In the past few years, perhaps in part due to lessons learned and tools developed from extensive research on coordination of autonomous multi-agent systems \cite{murray}, the systems and control community has taken an interest in social network analysis.

Of particular interest in this context is the problem of ``opinion dynamics'', which is the study of how individuals in a social network interact and exchange their opinions on an issue or topic. A critical aspect is to develop models which simultaneously capture observed social phenomena and are simple enough to be analysed, particularly from a system-theoretic point of view. The seminal works of \cite{french1956_socialpower,degroot1974OpinionDynamics} proposed a discrete-time opinion pooling/updating rule, now known as the French-DeGroot (or simply DeGroot) model. A continuous-time counterpart, known as the Abelson model, was proposed in \cite{abelson1964op_dyn}. These opinion updating rules are closely related to consensus algorithms for coordinating autonomous multi-agent systems \cite{jadbabaie2003_CoordinationAgents,ren2005consensus}. The Friedkin-Johnsen model \cite{friedkin1990_FJsocialmodel,friedkin2006structural_theory_book} extended the French-DeGroot model by introducing the concept of a ``stubborn individual'', i.e., an individual who remains attached to its initial opinion. This helped to model \emph{social cleavage}\cite{friedkin2015_socialsurvey}, a phenomenon where opinions tend towards separate clusters. Other models which attempt to explain social cleavage include the Altafini model with negative/antagonistic interactions \cite{altafini2013antagonistic_interactions,altafini2015predictable_opinions,proskurnikov2016opinion,liu2017ex} and the Hegelsmann-Krause bounded confidence model \cite{hegselmann2002opinion,etesami2015game}. Simultaneous opinion discussion on multiple, logically interdependent topics was studied with a multidimensional Friedkin-Johnsen model \cite{parsegov2017_multiissue,friedkin2016network_science}. 

The concept of \emph{social power} or \emph{social influence} has been integral throughout the development of these models. Indeed, French Jr's seminal paper \cite{french1956_socialpower} was an attempt to quantitatively study an individual's social power in a group discussion. Broadly speaking, in the context of opinion dynamics, individual social power is the amount of influence an individual has on the overall opinion discussion. Individuals which maximise the spread of an idea or rumour in diffusion models were identified in \cite{kempe2003_maxspread}. The social power of an individual in a group can change over time as group members interact and are influenced by each other. Recently, the DeGroot-Friedkin model was proposed in \cite{jia2015opinion_SIAM} to study the dynamic evolution of an individual's social power as a social network discusses opinions on a sequence of issues. In this paper, we present several major, novel results on the DeGroot-Friedkin model. In Section \ref{sec:background}, we shall provide a precise mathematical formulation of the model, but here we provide a brief description to better motivate the study, and elucidate the contributions of the paper.

The discrete-time DeGroot-Friedkin model \cite{jia2015opinion_SIAM} is a two-stage model. 
In the first stage, individuals update their opinions on a particular issue, and in the second stage, each individual's level of self-confidence for the next issue is updated.
For a given issue, the social network discusses opinions using the DeGroot opinion updating model, which has been empirically shown to outperform Bayesian learning methods in the modelling of social learning processes \cite{chandrasekhar2012testing}. The row-stochastic opinion update matrix
used in the DeGroot model is parametrised by two sets of variables. The first is individual social powers, which are the diagonal entries of the opinion update matrix (i.e. the weight an individual places on its own opinion). The second is the relative interaction matrix, which is used to scale the off-diagonal entries of the opinion update matrix to ensure that, for any given values of individual social powers, the opinion update matrix remains row-stochastic. In the original model \cite{jia2015opinion_SIAM}, the relative interaction matrix was assumed to be constant over all issues, and constant throughout the opinion discussion on any given issue. Under some mild conditions on the entries of the relative interaction matrix, the opinions reach a consensus on every issue.

At the end of the period of discussion of an issue, i.e., when opinions have effectively reached a consensus, each individual undergoes a sociological process of \emph{self-appraisal} (detailed in the seminal work \cite{cooley1992human_nature}) to determine its impact or influence on the final consensus value of opinion. Such a mechanism is well accepted as a hypothesis \cite{shrauger1979symbolic_selfappraisal,gecas1983self_appraisal} and has been empirically validated \cite{yeung2003selfappraisal_empirical}. Immediately before discussion on the next issue, each individual self-appraises and updates its individual social power (the weight an individual places on its own opinion) according to the impact or influence it had on discussion of the previous issue.	In updating its individual social power, an individual also updates the weight it accords its neighbours' opinions, by scaling using the relative interaction matrix, to ensure that the opinion updating matrix for the next issue remains row-stochastic. This process is repeated as issues are discussed in sequence. The primary objective of the DeGroot-Friedkin model is to \emph{study the dynamical evolution of the individual social powers over the sequence of discussed issues.}

The model is centralised in the sense that individuals are able to observe and detect their impact relative to every other individual in the opinion discussions process, which indicates that the DeGroot-Friedkin model is best suited for networks of small or moderate size. Such networks are found in many decision making groups such as boards of directors, government cabinets or jury panels. Distributed models of self-appraisal have been studied in continuous time \cite{chen2017_DFdistributed} as well as discrete time \cite{xu2015_modified_DF, xia2016_modified_DF_timevary} to extend the original DeGroot-Friedkin model. Dynamic topology, but restricted to doubly-stochastic relative interaction matrices, was studied in \cite{xia2016_modified_DF_timevary}.

\subsection{Contributions of This Paper}
This paper significantly expands on the original DeGroot-Friedkin model in several different respects. In the original paper \cite{jia2015opinion_SIAM}, LaSalle's Invariance Principle was used to arrive at an asymptotic stability result. Exponential convergence was conjectured but not proved. In this paper, a novel approach based on nonlinear contraction analysis \cite{lohmiller1998contraction} is used to conclude an exponential convergence property for non-autocratic social power configurations. Autocratic social power configurations are shown to be unstable, or asymptotically stable, but not exponentially so. Additional insights are also developed; an upper bound on the individual social power at equilibrium is established, dependent only on the relative interaction matrix. The ordering of individuals' equilibrium social powers can be determined \cite{jia2015opinion_SIAM}, but numerical values for nongeneric network topologies cannot be determined.

The paper is also the first to provide a complete proof of convergence for the DeGroot-Friedkin model with \emph{dynamic topology}. Dynamic topology for the DeGroot-Friedkin model was studied in \cite{friedkin2016tevo_power} and a stability result was conjectured based on extensive simulation. By dynamic topology, we mean relative interaction matrices which are different between issues, \emph{but remain constant during the period of discussion for any given issue}. Relative interaction matrices encode trust or relationship strength between individuals in a network. A network discussing sometimes sports and sometimes politics will have different interaction matrices; some individuals are experts on sports and others on politics. These factors can influence the trust or relationship strength between individuals. This gives rise to the concept of \emph{issue-driven} topology change. In addition, allowing for dynamic relative interaction matrices is a natural way of describing \emph{network structural changes over time}. For many reasons, new relationships may form and others may die out. For example, an individual may attempt to, after each issue, form new relationships, disrupt other relationships, and adjust relationship strengths in order to maximise its individual social power. This gives rise to the concept of \emph{individual-driven} topology change. The idea that an individual intentionally modifies topology to gain its social power was studied in \cite{ye2017socialpower_mod} by assuming constant topology, but this can be more naturally modelled using dynamic topology.

A conference paper \cite{ye2017DF_IFAC} by the authors studied the special case of periodically varying topology and proved the existence of periodic trajectories, but did not provide a convergence proof. In this paper, we show that for relative interaction matrices which vary arbitrarily across issues, the individual social powers converge exponentially fast to a unique trajectory (as opposed to unique stationary values for constant interactions). Specifically, every individual forgets its initial social power estimate (initial condition) for each issue exponentially fast. For any given issue, and as the number of issues discussed tends to infinity, individuals' social powers are determined only by the network interactions on the previous issue. This paper therefore concludes that a social network described by the DeGroot-Friedkin model is {\em self-regulating} in the sense that, even on dynamic topologies, sequential discussion combined with reflected self-appraisal removes perceived social power (initial estimates of social power). True social power is determined by topology. Periodically varying topologies are presented as a special case.

\subsection{Structure of the Rest of the Paper}
Section~\ref{sec:background} introduces mathematical notations, nonlinear contraction analysis and the DeGroot-Friedkin model. Section~\ref{sec:constant_C} uses nonlinear contraction analysis to study the original DeGroot-Friedkin model. Dynamic topologies are studied in Section~\ref{sec:dyn_top}. Simulations are presented in Section~\ref{sec:sim}, and concluding remarks are given in Section~\ref{sec:conclusion}.

\section{Background and Problem Statement}\label{sec:background}
We begin by introducing some mathematical notations used in the paper. Let $\vect 1_n$ and $\vect 0_n$ denote, respectively, the $n\times 1$ column vectors of all ones and all zeros. For a vector $\vect x\in\mathbb{R}^n$, $0\preceq\vect x$ and $0 \prec \vect x$ indicate component-wise inequalities, i.e., for all $i\in\{1,\ldots,n\}$, $0\leq x_i$ and $0<x_i$, respectively. The $n$-simplex is \mbox{$\Delta_n = \{\vect x\in \mathbb{R}^n : 0 \preceq \vect x, \vect 1_n^\top \vect x = 1 \}$}. The canonical basis of $\mathbb{R}^n$ is given by $\mathbf{e}_1, \ldots, \mathbf{e}_n$. Define $\wt{\Delta}_n = \Delta_n \backslash \{ \mathbf{e}_1, \ldots, \mathbf{e}_n \}$ and \mbox{$\text{int}(\Delta_n) = \{\vect x\in \mathbb{R}^n : 0 \prec \vect x, \vect 1_n^\top \vect x = 1 \}$}. The $1$-norm and infinity-norm of a vector, and their induced matrix norms, are denoted by $\Vert \cdot\Vert_1$ and $\Vert \cdot \Vert_{\infty}$, respectively. For the rest of the paper, we shall use the terms ``node'', ``agent'', and ``individual'' interchangeably. We shall also interchangeably use the words ``self-weight'', ``social power'', and ``individual social power''. 

An $n\times n$ matrix with all entries nonnegative is called a {\em row-stochastic matrix} (respectively \emph{doubly stochastic}) if its row sums all equal 1 (respectively if its row and column sums all equal 1). We now provide a result on eigenvalues of a matrix product, to be used later.

\begin{lemma}[Corollary 7.6.2 in \cite{horn2012matrixbook}]\label{cor:AB_real}
Let $\mat{A}, \mat{B} \in \mathbb{R}^{n\times n}$ be symmetric. If $\mat{A}$ is positive definite, then $\mat{AB}$ is diagonalizable and has real eigenvalues. If, in addition, $\mat{B}$ is positive definite or positive semidefinite, then the eigenvalues of $\mat{AB}$ are all strictly positive or nonnegative, respectively.
\end{lemma}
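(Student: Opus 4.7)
The plan is to reduce the statement to the spectral theorem for real symmetric matrices by exploiting a symmetric positive definite square root of $\mat{A}$. Because $\mat{A}$ is symmetric positive definite, its spectral decomposition $\mat{A}=\mat{Q}\mat{D}\mat{Q}^\top$ with $\mat{Q}$ orthogonal and $\mat{D}$ diagonal with strictly positive diagonal entries furnishes a symmetric positive definite (hence invertible) square root $\mat{A}^{1/2}=\mat{Q}\mat{D}^{1/2}\mat{Q}^\top$ satisfying $\mat{A}^{1/2}\mat{A}^{1/2}=\mat{A}$. This factorization is the only nontrivial ingredient; everything else is manipulation.

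Next I would exhibit the similarity
\begin{equation*}
\mat{A}^{-1/2}(\mat{A}\mat{B})\mat{A}^{1/2} \;=\; \mat{A}^{1/2}\mat{B}\mat{A}^{1/2} \;=:\; \mat{M}.
\end{equation*}
Since both $\mat{A}^{1/2}$ and $\mat{B}$ are symmetric, $\mat{M}^\top=\mat{M}$, so $\mat{M}$ is a real symmetric matrix. The spectral theorem then yields that $\mat{M}$ is diagonalizable by an orthogonal matrix with real eigenvalues. Similarity preserves both diagonalizability and the spectrum (with algebraic multiplicities), so the same conclusion transfers to $\mat{A}\mat{B}$; this settles the first assertion.

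For the second assertion I would observe that if $\mat{B}$ is positive (semi)definite then, for every $\vect{x}\neq\vect{0}$, the vector $\vect{y}:=\mat{A}^{1/2}\vect{x}$ is nonzero because $\mat{A}^{1/2}$ is invertible, and therefore
\begin{equation*}
\vect{x}^\top\mat{M}\vect{x} \;=\; \vect{y}^\top \mat{B}\, \vect{y} \;>\;0 \quad(\text{respectively } \geq 0).
\end{equation*}
Hence $\mat{M}$ inherits positive (semi)definiteness from $\mat{B}$, so the real eigenvalues of $\mat{M}$, and thus those of the similar matrix $\mat{A}\mat{B}$, are strictly positive (respectively nonnegative).

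There is no genuine obstacle here: the construction of the symmetric square root and the symmetric congruence $\mat{B}\mapsto\mat{A}^{1/2}\mat{B}\mat{A}^{1/2}$ reduce everything to standard facts about real symmetric matrices. The only point requiring mild care is to use the invertibility of $\mat{A}^{1/2}$ when translating (semi)definiteness of $\mat{B}$ to that of $\mat{M}$, so that the semidefinite case yields nonnegative rather than only real eigenvalues.
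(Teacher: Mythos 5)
Your proof is correct. The paper does not prove this lemma at all --- it simply cites Corollary 7.6.2 of Horn and Johnson --- and your argument via the symmetric positive definite square root $\mat{A}^{1/2}$, the similarity $\mat{A}^{-1/2}(\mat{A}\mat{B})\mat{A}^{1/2}=\mat{A}^{1/2}\mat{B}\mat{A}^{1/2}$, and the congruence preserving (semi)definiteness is the standard textbook proof of exactly this result, with the one delicate point (invertibility of $\mat{A}^{1/2}$ when transferring definiteness) handled properly.
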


\subsection{Graph Theory}\label{ssec:graph_theory}
The interaction between individuals in a social network is modelled using a weighted directed graph, denoted as $\mathcal{G} = (\mathcal{V}, \mathcal{E}, \mat{C})$. Each individual corresponds to a node in the finite, nonempty set of nodes $\mathcal{V} = \{v_1, \ldots, v_n\}$. The set of ordered edges is $\mathcal{E} \subseteq \mathcal{V}\times \mathcal{V}$. We denote an ordered edge as $e_{ij} = (v_i, v_j) \in \mathcal{E}$, and because the graph is directed, in general, $e_{ij}$ and $e_{ji}$ may not both exist. An edge $e_{ij}$ is said to be outgoing with respect to $v_i$ and incoming with respect to $v_j$. The presence of an edge $e_{ij}$ connotes that individual $j$ learns of, and takes into account, the opinion value of individual $i$  when updating its own opinion. The incoming and outgoing neighbour sets of $v_i$ are respectively defined as $\mathcal{N}_i^+ = \{v_j \in \mathcal{V} : e_{ji} \in \mathcal{E}\}$ and $\mathcal{N}_i^- = \{v_j \in \mathcal{V} : e_{ij} \in \mathcal{E}\}$. The relative interaction matrix $\mat C\in\mathbb{R}^{n\times n}$ is associated with $\mathcal{G}$, the relevance of which is explained below. The matrix $\mat{C}$ has nonnegative entries $c_{ij}$, termed ``relative interpersonal weights'' in \cite{jia2015opinion_SIAM}.  The entries of $\mat C$ have properties such that $0 < c_{ij} \leq 1 \Leftrightarrow e_{ji} \in \mathcal{E}$ and $c_{ij} = 0$ otherwise. It is assumed that $c_{ii} = 0$ (i.e., there are no self-loops), and we impose the restriction that $\sum_{j\in\mathcal{N}_i^+} c_{ij} = 1$ (i.e., $\mat C$ is a row-stochastic matrix). The word ``relative'' therefore refers to the fact that $c_{ij}$ can be considered as a percentage of the total weight or trust individual $i$ places on individual $j$ compared to all of individual $i$'s incoming neighbours.

A directed path is a sequence of edges of the form $(v_{p_1}, v_{p_2}), (v_{p_2}, v_{p_3}), \ldots$ where $v_{p_i} \in \mathcal{V}, e_{ij} \in \mathcal{E}$. Node $i$ is reachable from node $j$ if there exists a directed path from $v_j$ to $v_i$. A graph is said to be strongly connected if every node is reachable from every other node. The relative interaction matrix $\mat C$ is irreducible if and only if the associated graph $\mathcal{G}$ is strongly connected. If $\mat C$ is irreducible, then it has a unique left eigenvector $\vect{\gamma}^\top \succ 0$ satisfying $\vect{\gamma}^\top \vect{1}_n = 1$, associated with the eigenvalue 1 (Perron-Frobenius Theorem, see \cite{godsil2001algebraic}). Henceforth, we call $\vect{\gamma}^\top$ the \emph{dominant left eigenvector of $\mat C$}.

\subsection{The DeGroot-Friedkin Model}\label{ssec:df_model}
We define $\mathcal{S} = \{0, 1, 2, \ldots\}$ to be the set of indices of sequential issues which are being discussed by the social network. For a given issue $s \in \mathcal{S}$, the social network discusses it using the discrete-time DeGroot consensus model (with constant weights throughout the discussion of the issue). At the end of the discussion (i.e. when the DeGroot model has effectively reached steady state), each individual undergoes reflected self-appraisal, with ``reflection'' referring to the fact that self-appraisal occurs following the completion of discussion on the particular issue $s$. Each individual then updates its own self-weight, and discussion begins on the next issue $s+1$ (using the DeGroot model but now with adjusted weights).

\begin{remark}[Time-scales]
The DeGroot-Friedkin model assumes the opinion dynamics process operates on a different time-scale than that of the reflected appraisal process. This allows for a simplification in the modelling and is reasonable if we consider that having separate time-scales merely implies that the social network reaches a consensus on opinions on one issue before beginning discussion on the next issue. If this assumption is removed, i.e., the time-scales are comparable, then the distributed DeGroot-Friedkin model is used \cite{xu2015_modified_DF}. However, at this point the analysis of the distributed model is much more involved, and has not yet reached the same level of understanding as the original model.  
\end{remark}

We next explain the mathematical modelling of the opinion dynamics for an issue and the updating of self-weights from one issue to the next. 

\subsubsection{DeGroot Consensus of Opinions}\label{sssec:degroot}
For each issue $s \in \mathcal{S}$, individual $i$ updates its opinion $y_i(s,\cdot) \in \mathbb{R}$ at time $t+1$ as
\begin{equation*}
y_i(s, t+1) = w_{ii}(s) y_i(s, t) + \sum_{j\neq i}^n w_{ij}(s) y_j(s, t)
\end{equation*}
where $w_{ii}(s)$ is the self-weight individual $i$ places on its own opinion and $w_{ij}(s)$ is the weight placed by individual $i$ on the opinion of its neighbour individual $j$. Note that $\forall\,i,j$, $w_{ij}(s) \in [0,1]$ is constant for any given $s$. As will be made apparent below, $\sum_{j = 1}^n w_{ij} = 1$, which implies that individual $i$'s new opinion value $y_i(s, t+1)$ is a convex combination of its own opinion and the opinions of its neighbours at the current time instant. The opinion dynamics for the entire social network can be expressed as 
\begin{equation}\label{eq:opinion_network}
\vect y(s, t+1) = \mat W(s) \vect y(s, t)
\end{equation}
where $\vect y(s, t) = [y_1(s, t), \, \hdots ,\, y_n(s, t)]^\top$ is the vector of opinions of the $n$ individuals in the network at time instant $t$. This model was studied in \cite{french1956_socialpower,degroot1974OpinionDynamics} with $\mathcal{S} = \{0\}$ (i.e., only one issue was discussed), and with individuals who remember their initial opinions $y_i(s,0)$ \cite{friedkin1990_FJsocialmodel,friedkin2006structural_theory_book}.

Let the self-weight (individual social power) of individual $i$ be denoted by $x_i(s) = w_{ii}(s) \in [0,1]$ (the $i^{th}$ diagonal entry of $\mat W(s)$) \cite{jia2015opinion_SIAM}, with the individual social power vector given as $\vect{x}(s) = [x_1, \hdots, x_n]^\top$. For a given issue $s$, the influence matrix $\mat W(s)$ is defined as 
\begin{equation}\label{eq:W_matrix}
\mat W(s) = \mat X(s) + (\mat I_n - \mat X(s))\mat C
\end{equation}
where $\mat C$ is the relative interaction matrix associated with the graph $\mathcal{G}$, and the matrix $\mat X(s) \doteq diag[\vect x(s)]$. From the fact that $\mat C$ is row-stochastic with zero diagonal entries, \eqref{eq:W_matrix} implies that $\mat W(s)$ is a row-stochastic matrix. It has been shown in \cite{jia2015opinion_SIAM} that $\mat{W}(s)$ defined as in \eqref{eq:W_matrix} ensures that for any given $s$, there holds $\lim_{t \to \infty} \vect y(s,t) = (\vect{\zeta}(s)^\top \vect y(s,0))\vect{1}_n$. Here, $\vect \zeta(s)^\top$ is the unique nonnegative left eigenvector of $\mat W(s)$ associated with the eigenvalue $1$, normalised such that $\vect 1_n^\top \vect \zeta(s) = 1$. That is, the opinions converge to a constant consensus value.

Next, we describe the model for the updating of $\mat W(s)$ (specifically $w_{ii}(s)$ via a reflected self-appraisal mechanism).  Kronecker products may be used if each individual has simultaneous opinions on $p$ unrelated topics, $\vect y_i \in \mathbb{R}^p, p \geq 2$. Simultaneous discussion of $p$ logically interdependent topics is treated in \cite{friedkin2016network_science,parsegov2017_multiissue} under the assumption that $\mathcal{S} = \{0\}$. 

\subsubsection{Friedkin's Self-Appraisal Model for Determining Self-Weight}\label{sssec:friedkin} 
The Friedkin component of the model proposes a method for updating the individual self-weights, $\vect{x}(s)$. We assume the starting self-weights $x_i(0)\geq 0$ satisfy $\sum_i x_i(0) = 1$.\footnote{The assumption that $\sum_i x_i(0) = 1$ is not strictly required, as we will prove in Section~\ref{sec:dyn_top} that if $0\leq x_i(0) < 1,\forall\,i$ and $\exists\,j : x_j(0) > 0$, then the system will remain inside the simplex $\Delta_n$ for all $s \geq 1$.}
At the end of the discussion of issue $s$, the self-weight vector updates as
\begin{equation}\label{eq:x_update}
\vect x(s+1) = \vect{\zeta}(s)
\end{equation}
Note that $\vect{\zeta}(s)^\top \vect{1}_n = 1$ implies that $\vect x(s) \in \Delta_n$, i.e., $\sum_{i=1}^n x_i(s) = 1$ for all $s$. From \eqref{eq:W_matrix}, and because $\mat{C}$ is row-stochastic, it is apparent that by adjusting $w_{ii}(s+1) = \zeta_i(s)$, individual $i$ also scales $w_{ij}(s+1), j \neq i$ using $c_{ij}$ to be $(1-w_{ii}(s+1))c_{ij}$ to ensure that $\mat{W}(s)$ remains row-stochastic.

\begin{remark}[Social Power]\label{rem:selfweight_update}
The precise motivation behind using \eqref{eq:x_update} as the updating model for $\vect{x}(s)$ is detailed in \cite{jia2015opinion_SIAM}, but we provide a brief overview here in the interest of making this paper self-contained. As discussed in Subsection~\ref{sssec:degroot}, for any given $s$, there holds $\lim_{t \to \infty} \vect{y}(s,t) = (\vect{\zeta}(s)^\top \vect y(s,0))\vect{1}_n$. In other words, for any given issue $s$, the opinions of every individual in the social network reaches a consensus value $\vect{\zeta}(s)^\top \vect y(s,0)$ equal to a convex combination of their initial opinion values $\vect y(s,0)$. The elements of $\vect{\zeta}(s)^\top$ are the convex combination coefficients. For a given issue $s$, $\zeta_i(s)$ is therefore a precise manifestation of individual $i$'s social power or influence in the social network, as it is a measure of the ability of individual $i$ to control the outcome of a discussion \cite{cartwright1959social_book}. The reflected self-appraisal mechanism therefore describes an individual $1)$ observing how much power it had on the discussion of issue $s$ (the nonnegative quantity $\zeta_{i}(s)$), and $2)$ for the next issue $s+1$, adjusting its self-weight to be equal to this power, i.e., $x_i(s+1) = w_{ii}(s+1) = \zeta_i(s)$.
\end{remark}

Lemma 2.2 of \cite{jia2015opinion_SIAM} showed that the system \eqref{eq:x_update} is equivalent to the discrete-time system
\begin{equation}\label{eq:DF_system}
\vect x(s+1) = \vect F(\vect x(s))
\end{equation}
where the nonlinear map $\vect F(\vect x(s))$ is defined as
\begin{align}\label{eq:map_F_DF}
\vect F( \vect x(s) ) =   \begin{cases} 
   \mathbf e_i & \hspace*{-6pt} \text{if } \vect x(s) = \mathbf e_i \; \text{for any } i \\ \\
   \alpha (\vect x(s)) \begin{bmatrix} \frac{\gamma_1}{1-x_1(s)} \\ \vdots \\ \frac{\gamma_n}{1-x_n(s)} \end{bmatrix}       & \text{otherwise }
  \end{cases}
\end{align}
with $\alpha(\vect x(s)) = 1/\sum_{i=1}^n \frac{\gamma_i}{1- x_i(s)}$ where  $\vect{\gamma} = [\gamma_1, \gamma_2, \hdots, \gamma_n]^\top$ is the dominant left eigenvector of $\mat{C}$. Note that $\sum_i F_i = 1$, where $F_i$ is the $i^{th}$ entry of $\vect{F}$. We now introduce an assumption \emph{which will be invoked throughout the paper}.

\begin{assumption}\label{assm:C_matrix}
The matrix $\mat{C} \in \mathbb{R}^{n\times n}$, with $n \geq 3$, is irreducible, row-stochastic, and has zero diagonal entries. Irreducibility of $\mat{C}$ implies, and is implied by, the strongly connectedness of the graph $\mathcal{G}$ associated with $\mat{C}$.
\end{assumption}
This assumption was in place in \cite{jia2015opinion_SIAM} by and large throughout its development. Dynamic topology involving reducible $\mat{C}$ is a planned future work of the authors. A special topology studied in \cite{jia2015opinion_SIAM} is termed ``star topology'', the definition and relevance of which follow.
\begin{definition}[Star topology]\label{def:star}
A strongly connected graph\footnote{While it is possible to have a star graph that is not strongly connected, this paper, similarly to \cite{jia2015opinion_SIAM}, deals only with strongly connected graphs.} $\mathcal{G}$ is said to have star topology if $\exists$ a node $v_i$, called the centre node, such that every edge of $\mathcal{G}$ is either to or from $v_i$.
\end{definition}
The irreducibility of $\mat C$ implies that a star $\mathcal{G}$ must include edges in both directions between the centre node $v_i$ and every other node $v_j, j \neq i$. We now provide a lemma and a theorem (the key result of \cite{jia2015opinion_SIAM}) regarding the convergence of $\vect{F}(\vect{x}(s))$ as $s\to \infty$, and a fact useful for analysis throughout the paper. 

\begin{lemma}[Lemma 3.2 in \cite{jia2015opinion_SIAM}]\label{lem:star}
Suppose that $n \geq 3$, and suppose further that $\mathcal{G}$ has star topology, which without loss of generality has centre node $v_1$. Let $\mat{C}$ satisfy Assumption~\ref{assm:C_matrix}. Then, $\forall\,\vect{x}(0) \in \wt{\Delta}_n$, $\lim_{s\to\infty} \vect x(s) = \mathbf{e}_1$.
\end{lemma}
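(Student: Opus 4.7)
The plan is to track $x_1(s)$ directly and prove it converges monotonically to $1$. The first step is to exploit the star structure to pin down $\mat{C}$ and its dominant left eigenvector: each leaf $v_j$ ($j \neq 1$) has only $v_1$ as an in-neighbour, so strong connectivity forces $c_{j1} = 1$ and $c_{ji}=0$ for $i \neq 1$, while the centre satisfies $c_{11}=0$ and $c_{1j} > 0$ for every $j \neq 1$ with $\sum_{j\neq 1} c_{1j}=1$. Solving $\vect{\gamma}^\top \mat{C} = \vect{\gamma}^\top$ subject to $\vect{\gamma}^\top \vect{1}_n = 1$ then yields $\gamma_1 = 1/2$ and $\gamma_j = c_{1j}/2$ for $j \neq 1$. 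Substituting into \eqref{eq:map_F_DF} produces the closed form
\begin{equation*}
F_1(\vect{x}) \,=\, \frac{1}{1 + (1-x_1)\sum_{j\neq 1} c_{1j}/(1-x_j)},
\end{equation*}
which will be the main object of study.

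The second step is to establish strict monotone increase of $x_1(s)$ along the trajectory. Since $\gamma_i > 0$ and $x_i(0) < 1$ for every $i$ when $\vect{x}(0) \in \wt{\Delta}_n$, the expression $x_i(1) = \alpha(\vect{x}(0))\gamma_i/(1-x_i(0))$ is strictly positive for every $i$, so $\vect{x}(s) \in \text{int}(\Delta_n)$ for all $s \geq 1$. A direct algebraic manipulation shows that the inequality $F_1(\vect{x}) > x_1$, for $x_1 < 1$, is equivalent to
\begin{equation*}
\sum_{j\neq 1} c_{1j}\,\frac{1 - x_1 - x_j}{1 - x_j} \,>\, 0.
\end{equation*}
Each summand is non-negative because $1 - x_1 - x_j = \sum_{k\neq 1, j} x_k \geq 0$. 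Strict positivity of the sum requires at least one index $j \neq 1$ for which both $c_{1j} > 0$ (automatic from the star structure) and $\sum_{k \neq 1, j} x_k > 0$; the latter holds on $\text{int}(\Delta_n)$ precisely when $n \geq 3$, which is where the dimensional hypothesis enters.

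The third step is a compactness argument to upgrade monotonicity to convergence. Boundedness of $x_1(s) \in [0,1]$ gives $x_1(s) \nearrow x_1^* \in (0, 1]$, with $x_1^* > 0$ because $x_1(1) > 0$. Suppose for contradiction that $x_1^* < 1$. Then for $s \geq 1$ the trajectory lies in the compact set $K := \{\vect{x} \in \Delta_n : x_1(1) \leq x_1 \leq x_1^*\}$, which is disjoint from every canonical vector $\mathbf{e}_j$ (the lower bound excludes $\mathbf{e}_j$ for $j \neq 1$ and the upper bound excludes $\mathbf{e}_1$). The continuous function $G(\vect{x}) := F_1(\vect{x}) - x_1$ is strictly positive on $K$ by the previous step, hence attains a positive minimum $\delta > 0$, forcing $x_1(s+1) \geq x_1(s) + \delta$ for all large $s$ and contradicting $x_1(s) \leq 1$. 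Thus $x_1^* = 1$, and since $\sum_i x_i(s) = 1$ with $x_i(s) \geq 0$, we conclude $\vect{x}(s) \to \mathbf{e}_1$.

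The main obstacle is the strict monotonicity in the second step: one must pinpoint exactly when equality could occur in the summation, and the argument reveals it can happen only at the canonical vectors $\mathbf{e}_i$, with $n \geq 3$ essential—when $n = 2$ the relevant residual $1 - x_1 - x_j$ vanishes identically, and $x_1$ becomes a fixed point of the update rather than strictly increasing.
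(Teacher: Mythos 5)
The paper itself does not prove this lemma --- it is imported verbatim as Lemma 3.2 of \cite{jia2015opinion_SIAM} --- so there is no in-paper argument to compare against; I can only assess your proof on its own terms, and it is essentially correct. Your identification of the star structure ($c_{j1}=1$ for $j\neq 1$, $\gamma_1 = 1/2$, $\gamma_j = c_{1j}/2$) is right, the closed form for $F_1$ and the equivalence $F_1(\vect{x}) > x_1 \Leftrightarrow \sum_{j\neq 1} c_{1j}(1-x_1-x_j)/(1-x_j) > 0$ check out, and the monotonicity-plus-compactness argument closes the loop. One imprecision should be patched: in Step 3 you take the minimum of $G = F_1 - x_1$ over the compact set $K = \{\vect{x}\in\Delta_n : x_1(1)\le x_1 \le x_1^*\}$ and justify $G>0$ there ``by the previous step,'' but Step 2 only establishes strict positivity on $\mathrm{int}(\Delta_n)$, whereas $K$ contains simplex boundary points with $x_k = 0$ for some $k\neq 1$. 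The fix is one line and follows from your own display: since $c_{1j}>0$ for every $j\neq 1$, the sum can vanish only if $1 - x_1 - x_j = 0$ for all $j\neq 1$; summing these $n-1$ identities gives $(n-2)(1-x_1)=0$, hence $x_1 = 1$, which is excluded on $K$ because $x_1 \le x_1^* < 1$. (Continuity of $F_1$ on $K$ is also fine, since $x_1 \ge x_1(1) > 0$ keeps every $x_j$, $j\neq 1$, bounded away from $1$.) With that observation added, the proof is complete.
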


This implies that $\forall\,\vect{x}(0) \in \wt{\Delta}_n$, a network with star topology converges to an ``autocratic configuration'' where centre individual $1$ holds all of the social power. 

\begin{fact}\cite{jia2015opinion_SIAM}\label{fact:star_gamma}
Suppose that $n\geq 3$ and let $\vect{\gamma}^\top$, with entries $\gamma_i$, be the dominant left eigenvector of $\mat{C}\in \mathbb{R}^{n\times n}$, satisfying Assumption~\ref{assm:C_matrix}. Then, $\Vert \vect{\gamma}\Vert_{\infty} = 0.5$ if and only if $\mat{C}$ is associated with a star topology graph, and in this case $\gamma_i = 0.5$ where $i$ is the centre node; otherwise,  $\Vert \vect{\gamma}\Vert_{\infty} < 0.5$.
\end{fact}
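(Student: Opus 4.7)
The plan is to exploit the left-eigenvector equation $\vect{\gamma}^\top \mat{C} = \vect{\gamma}^\top$ componentwise, combined with the normalisation $\vect{\gamma}^\top \vect{1}_n = 1$ and the fact that $\vect{\gamma} \succ 0$ (from Perron--Frobenius applied to the irreducible matrix $\mat{C}$). First, I would establish the universal bound $\gamma_j \leq 1/2$, then characterise equality, and finally handle the converse direction.

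For the bound: using $c_{ii}=0$, the $j$-th component of $\vect{\gamma}^\top \mat{C} = \vect{\gamma}^\top$ reads $\gamma_j = \sum_{i \neq j} \gamma_i c_{ij}$. Since $c_{ij} \in [0,1]$ and $\sum_{i \neq j} \gamma_i = 1 - \gamma_j$, one immediately gets $\gamma_j \leq 1-\gamma_j$, hence $\gamma_j \leq 1/2$ for every $j$. For equality, the bound $\sum_{i \neq j} \gamma_i c_{ij} = \sum_{i \neq j} \gamma_i$ must be tight; since every $\gamma_i$ is strictly positive, this forces $c_{ij} = 1$ for every $i \neq j$. Combined with row-stochasticity, each row $i \neq j$ of $\mat C$ then has all mass on column $j$, i.e.\ $c_{ik} = 0$ for $k \notin \{i,j\}$.

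This zero/nonzero pattern is then translated back to the graph: for every $i \neq j$, the only in-neighbour of $v_i$ (in the convention $c_{ij} > 0 \Leftrightarrow e_{ji} \in \mathcal{E}$) is $v_j$, so there are no edges whatsoever between two non-centre nodes. Strong connectivity under Assumption~\ref{assm:C_matrix} then forces $v_j$ to have outgoing edges to every other $v_i$, which is exactly the definition of a star with centre $v_j$. Conversely, if $\mathcal{G}$ is a star with centre $v_i$, then each non-centre node has only $v_i$ as an in-neighbour, so $c_{ki} = 1$ for all $k \neq i$; plugging into $\gamma_i = \sum_{k \neq i} \gamma_k c_{ki}$ yields $\gamma_i = 1 - \gamma_i = 1/2$. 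For $k \neq i$, $\gamma_k = \gamma_i c_{ik} = c_{ik}/2$, and since $\sum_{k \neq i} c_{ik} = 1$ with $n \geq 3$ (so at least two positive terms), $c_{ik} < 1$ and hence $\gamma_k < 1/2$.

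Putting the pieces together gives the ``if and only if'' and the identification of the centre node as the unique achiever of $\gamma = 1/2$; the contrapositive of the equality characterisation then yields $\Vert \vect{\gamma}\Vert_\infty < 1/2$ whenever $\mathcal{G}$ is not a star. I do not anticipate a serious obstacle: the main subtlety is the bookkeeping around the edge-direction convention $c_{ij} > 0 \Leftrightarrow e_{ji}\in \mathcal{E}$ when translating the extreme-point argument on $\mat{C}$ into the star topology statement, and the invocation of $n \geq 3$ to ensure the strict inequality $\gamma_k < 1/2$ for non-centre nodes in the star case.
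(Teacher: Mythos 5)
Your proposal is correct. Note that the paper itself offers no proof of Fact~\ref{fact:star_gamma}: it is imported verbatim from \cite{jia2015opinion_SIAM}, so there is no in-paper argument to compare against. Your derivation is the natural self-contained one: the componentwise identity $\gamma_j=\sum_{i\neq j}\gamma_i c_{ij}\leq 1-\gamma_j$ gives the universal bound, positivity of $\vect{\gamma}$ (Perron--Frobenius on the irreducible $\mat{C}$) forces $c_{ij}=1$ for all $i\neq j$ in the equality case, row-stochasticity then pins down the star pattern, and the converse plus the $n\geq 3$ count of positive entries in the centre's row yields $\gamma_k<1/2$ for non-centre nodes. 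The edge-direction bookkeeping ($c_{ij}>0\Leftrightarrow e_{ji}\in\mathcal{E}$) is handled consistently, so I see no gap.
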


\begin{theorem}[Theorem 4.1 in \cite{jia2015opinion_SIAM}]\label{thm:DFmain}
For $n\geq 3$, consider the DeGroot-Friedkin dynamical system \eqref{eq:DF_system} with $\mat C$ satisfying Assumption~\ref{assm:C_matrix}. Assume further that the digraph $\mathcal{G}$ associated with $\mat C$ does not have star topology. Then,
\begin{enumerate}[label=(\roman*)]
\item \label{prty:thm_DFmain01} For all initial conditions $\vect x(0) \in \wt{\Delta}_n $, the self-weights $\vect x(s)$ converge to $\vect x^*$ as $s\to\infty$, where $\vect x^* \in \text{int}({\Delta}_n)$ is the unique fixed point satisfying $\vect x^* = \vect F(\vect x^*)$. 
\item \label{prty:thm_DFmain02} There holds $x^*_i < x^*_j$ if and only if $\gamma_i < \gamma_j$ for any $i,j$, where $\gamma_i$ is the $i^{th}$ entry of the dominant left eigenvector $\vect\gamma$. There holds $x^*_i = x^*_j$ if and only if $\gamma_i = \gamma_j$. 
\item \label{prty:thm_DFmain03} The unique fixed point $\vect x^*$ is determined only by $\vect\gamma$, and is independent of the initial conditions. 
\end{enumerate}
\end{theorem}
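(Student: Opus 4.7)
The plan is to tackle the three claims in the order (i) existence/uniqueness of the interior fixed point, then (iii), then (ii), and finally global convergence, exploiting forward invariance of $\Delta_n$ under $\vect F$ (so only the $(n-1)$-dimensional affine slice $\vect 1_n^\top \vect x = 1$ is free) and using the nonlinear contraction analysis framework that the introduction flags as the paper's principal tool.

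For existence and uniqueness of an interior fixed point, I would multiply the coordinate fixed-point equation $F_i(\vect x^*)=x_i^*$ through by $(1-x_i^*)$ to obtain the algebraic identity $x_i^*(1-x_i^*) = \alpha^* \gamma_i$ with $\alpha^* = 1 - \Vert \vect x^*\Vert_2^2$, whose per-coordinate solutions are $x_i^* = \tfrac12(1\pm\sqrt{1-4\alpha^*\gamma_i})$. Because the non-star assumption combined with Fact~\ref{fact:star_gamma} forces $\Vert \vect\gamma\Vert_\infty < 1/2$, the scalar equation $\sum_i \tfrac12(1-\sqrt{1-4\alpha\gamma_i}) = 1$ has a unique solution $\alpha^*$ in the admissible range by strict monotonicity of the left-hand side in $\alpha$, and one checks that only the ``minus'' branch is admissible, yielding a unique interior fixed point depending only on $\vect\gamma$ and with every $x_i^*\in(0,1/2)$. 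This simultaneously establishes uniqueness of $\vect x^*$ and property~(iii). Property~(ii) then follows immediately: since $t\mapsto t(1-t)$ is strictly increasing on $(0,1/2)$, the identity $x_i^*(1-x_i^*)=\alpha^*\gamma_i$ gives $\gamma_i\lessgtr\gamma_j \Leftrightarrow x_i^*\lessgtr x_j^*$, with equalities matched likewise.

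For global convergence from arbitrary $\vect x(0)\in\wt\Delta_n$ (property (i)), I would first verify that $\vect F(\wt\Delta_n)\subset \text{int}(\Delta_n)$: any non-vertex simplex point has at least two coordinates strictly less than $1$, so every $F_i$ is strictly positive. Hence after one step the trajectory lies in $\text{int}(\Delta_n)$, and I can then apply the contraction framework of \cite{lohmiller1998contraction}. The concrete task is to compute $D\vect F(\vect x)$ restricted to the tangent space of the simplex and to exhibit a Riemannian metric $\mat M(\vect x)$ --- plausibly diagonal with entries weighted by $(1-x_i)^{-2}$, or by $\gamma_i(1-x_i)^{-2}$, so as to match the singular structure of $\vect F$ near the vertices --- for which $D\vect F^\top \mat M\, D\vect F - \mat M$ is uniformly negative definite on every compact subset of $\text{int}(\Delta_n)$. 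Combined with forward invariance of a compact neighbourhood of $\vect x^*$, this would give exponential (and in particular asymptotic) convergence of $\vect x(s)$ to the unique $\vect x^*$ identified above.

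The main obstacle I anticipate is the metric construction in the contraction step. A plain Euclidean analysis fails because $D\vect F$ becomes singular as $\vect x$ approaches any vertex $\mathbf e_i$, so the metric must penalise approach to the boundary while respecting the $\vect 1_n^\top \vect x=1$ constraint and producing a contraction rate that is uniform over compact interior sets. Lemma~\ref{cor:AB_real} looks useful for controlling the spectrum of the symmetrised Jacobian product, but identifying the right weighting --- and verifying the required negative-definiteness on the full simplex tangent space, not merely at $\vect x^*$ --- is the technical core of the proof and not a routine calculation.
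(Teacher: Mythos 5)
There are two genuine gaps. First, your claim that ``only the minus branch is admissible'' in $x_i^* = \tfrac12\bigl(1\pm\sqrt{1-4\alpha^*\gamma_i}\bigr)$ is false, and this breaks both your existence/uniqueness argument and your justification of (ii). The non-star condition only guarantees $\gamma_i<1/2$, and the fixed point can have one coordinate exceeding $1/2$: for $\vect\gamma=(0.4,0.2,0.2,0.2)^\top$ the fixed point is $\vect x^*=(4/7,1/7,1/7,1/7)^\top$ (check: $\alpha^*=1-\Vert\vect x^*\Vert_2^2=30/49$ and $x_i^*(1-x_i^*)=\alpha^*\gamma_i$ for all $i$), so $x_1^*=4/7>1/2$ lies on the plus branch. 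Consistently, your scalar equation $\sum_i\tfrac12(1-\sqrt{1-4\alpha\gamma_i})=1$ has no solution in the admissible range $\alpha\le 1/(4\max_i\gamma_i)$ for this $\vect\gamma$ (its supremum over that range is about $0.94$), so the monotonicity argument proves uniqueness of a solution that does not exist. The ordering claim (ii) can still be rescued --- since $n\ge 3$ forces $x_i^*+x_j^*<1$, the symmetry of $t\mapsto t(1-t)$ about $1/2$ gives $x_i^*(1-x_i^*)>x_j^*(1-x_j^*)\Leftrightarrow x_i^*>x_j^*$ even when one coordinate exceeds $1/2$ --- but as written your argument rests on the false premise that every $x_i^*\in(0,1/2)$. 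Note this is essentially why the paper only states the \emph{inequality} $x_i^*<\gamma_i/(1-\gamma_i)$ (Corollary~\ref{cor:x_i_upper}) and remarks that no general closed form for $\vect x^*$ is known.

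Second, the contraction condition you propose, uniform negative definiteness of $D\vect F^\top\mat M\,D\vect F-\mat M$, is contraction in a weighted \emph{Euclidean} metric, and the paper explicitly reports (Remark~\ref{rem:contraction_metrics}) that with the natural weighting $\mat\Theta=\mathrm{diag}\bigl[1/(1-x_i)\bigr]$ the transformed Jacobian $\mat H=\mat\Theta(\vect F(\vect x))\mat J_{\vect F}\mat\Theta(\vect x)^{-1}$ frequently has $\lambda_{\max}(\mat H^\top\mat H)>1$, so this route fails. The argument that works (Theorem~\ref{thm:contract_DF}) establishes $\Vert\mat H\Vert_1<1$ uniformly --- a column-sum computation exploiting $x_i+x_j<1$ --- i.e.\ contraction in a weighted $1$-norm, not the $2$-norm; Lemma~\ref{cor:AB_real} only controls the spectrum of $\mat H$, which (as Remark~\ref{rem:schur_stable} explains) is insufficient for a state-dependent Jacobian. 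Finally, be aware that the statement you were asked to prove is quoted background (Theorem 4.1 of the cited reference, originally proved via LaSalle plus separate uniqueness inequalities); the present paper's own contribution is the strengthened exponential version, and your high-level plan of ``fixed point first, then contraction'' is closer in spirit to that strengthened result than to the original proof.
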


\subsection{Quantitative Aspects of the Dynamic Topology Problem}\label{ssec:problem_def}
In the introduction, we discussed in qualitative terms that we are seeking to study the evolution, and in particular the convergence properties, of social power in dynamically changing social networks. Now, we provide quantitative details on the problem of interest. Specifically, we will consider dynamic relative interaction matrices $\mat{C}(s)$ which are \emph{issue-driven or individual-driven}. As we have now properly introduced the DeGroot-Friedkin model, it is appropriate for us to expand on this motivation, using the following two examples.

\emph{Example 1 [Issue-driven]:} Consider a government cabinet that meets to discuss the issues of defence, economic growth, social security programs and foreign policy. Each minister (individual in the cabinet) has a specialist portfolio (e.g. defence) and perhaps a secondary portfolio (e.g. foreign policy). While every minister will partake in the discussion of each issue, the weights $c_{ij}(s)$ will change. For example, if minister $i$'s portfolio is on defence, then $c_{ji}(s_{\text{defence}})$ will be high as other ministers $j$ place more trust on minister $i$'s opinion. On the other hand, $c_{ji}(s_{\text{security}})$ will be low. It is then apparent that $\mat{C}(s_{\text{defence}}) \neq \mat{C}(s_{\text{security}})$ in general. This motivates the incorporation of \emph{issue-dependent or issue-driven} topology into the DeGroot-Friedkin model.

\emph{Example 2 [Individual-driven]:} Consider individual $i$ and individual $j$ in a network, and suppose that $c_{ij}(s) = 0$ for $s = 0$. However, after several discussions (say $5$ issues), individual $i$ has observed that individual $j$ consistently has a high impact on discussions, i.e., $\zeta_j(s)$ is large. Then, individual $i$ may form an interpersonal relationship such that $c_{ij}(s) > 0$ for $s \geq 6$ (which implies that individual $i$ begins to take into consideration the opinion of individual $j$). 

The two examples above are different from each other, but both equally provide motivation for \emph{dynamic} topology. We assume that $\forall\,s$, $\mat{C}(s)$ satisfies Assumption~\ref{assm:C_matrix}. Given that $\mat{C}(s)$ is dynamic, the opinion dynamics for each issue is then given by $\vect{y}(s,t+1) = \mat{W}(s)\vect{y}(s,t)$ where 
\begin{equation}\label{eq:W_s_dynamic_C}
\mat{W}(s) = \mat{X}(s) + (\mat{I}_n - \mat{X}(s))\mat{C}(s)
\end{equation}
which records the fact that $\mat{C}(s)$ is dynamic, in distinction to \eqref{eq:W_matrix}. Precise details of the adjustments to the model arising from dynamic $\mat{C}$ are left for Section~\ref{sec:dyn_top}. We can thus formulate the key objective of this paper at this point as follows.
\begin{objective}
To study the dynamic evolution (including convergence) of $\vect{x}(s)$ over a sequence of discussed issues by using the DeGroot model \eqref{eq:opinion_network} for opinion discussion, where $\mat{W}(s)$ is given in \eqref{eq:W_s_dynamic_C}, with the reflected self-appraisal mechanism \eqref{eq:x_update} used to update $\vect{x}(s)$. 
\end{objective}

\subsection{Contraction Analysis for Nonlinear Systems}\label{ssec:contraction_background}
In this subsection, we present results on nonlinear contraction analysis in \cite{lohmiller1998contraction}, specifically results on discrete-time systems from Section 5 of \cite{lohmiller1998contraction}. This analysis will be used to obtain a fundamental convergence result for the original DeGroot-Friedkin model. The analysis framework that we build will enable an extension to the study of dynamic $\mat{C}$.

Consider a deterministic discrete-time system of the form
\begin{equation}\label{eq:general_discrete_system}
\vect{x}(k+1) = \vect{f}_k(\vect{x}(k), k)
\end{equation}
with $n\times 1$ state vector $\vect{x}$ and $n\times 1$ vector-valued function $\vect{f}$. It is assumed that $\vect{f}$ is smooth, by which we mean that any required derivative or partial derivative exists, and is continuous. The associated virtual\footnote{The term ``virtual'' is taken from \cite{lohmiller1998contraction}; $\delta \vect{x}$ is a virtual, i.e. infinitesimal, displacement.} dynamics is 
\begin{equation*}
\delta\vect{x}(k+1) = \frac{\partial \vect{f}_k}{\partial \vect{x}(k)}\delta{\vect{x}}(k)
\end{equation*}
Define the transformation
\begin{equation*}
\delta{\vect{z}}(k) = \mat{\Theta}_k(\vect{x}(k), k) \delta{\vect{x}}(k)
\end{equation*}
where $\mat{\Theta}_k(\vect{x}(k), k) \in \mathbb{R}^{n\times n}$ is uniformly nonsingular. More specifically, uniform nonsingularity means that there exist a real number $\kappa > 0$ and a matrix norm $\Vert \cdot \Vert^\prime$ such that $\kappa < \Vert \mat{\Theta}_k(\vect{x}(k), k) \Vert^\prime < \kappa^{-1}$ holds for all $\vect{x}$ and $k$. If the uniformly nonsingular condition holds, then exponential convergence of $\delta\vect{z}$ to $\vect{0}_n$ implies, and is implied by, exponential convergence of $\delta\vect{x}$ to $\vect{0}_n$. The transformed virtual dynamics can be computed as
\begin{equation}\label{eq:contract_gen_virt_dyn}
\delta{\vect{z}}(k+1) = \mat{F}(k) \delta\vect{z}(k)
\end{equation}
where $\mat{F}(k) =  \mat{\Theta}_{k+1}(\vect{x}(k+1), k+1) \frac{\partial \vect{f}_k}{\partial \vect{x}(k)} \mat{\Theta}_k(\vect{x}(k), k)^{-1}$ is the transformed Jacobian.

\begin{definition}[Generalised Contraction Region]\label{def:gen_contract_region}
Given the discrete-time system \eqref{eq:general_discrete_system}, a region of the state space is called a generalised contraction region with respect to the metric $\Vert \vect{x} \Vert_{\mat{\Theta}, 1} = \Vert \mat{\Theta}_k(\vect{x}(k), k)\vect{x}(k) \Vert_1$ if in that region, $\Vert \mat{F}(k) \Vert_1 < 1 - \eta$ holds for all $k$, where $\eta > 0$ is an arbitrarily small constant.
\end{definition}

Note that here we are in fact working with the $1$-norm metric in the variable space $\delta{\vect{z}}$ which in turn leads to a \emph{weighted} $1$-norm in the variable space $\delta{\vect{x}}$. Here, the weighting matrix is $\mat{\Theta}_k(\vect{x}(k),k)$ and the weighted $1$-norm is well defined over the entire state space because $\mat{\Theta}$ is required to be uniformly nonsingular.

\begin{theorem}\label{thm:gen_contraction_result}
Given the system \eqref{eq:general_discrete_system}, consider a tube of constant radius with respect to the metric $\Vert \vect{x} \Vert_{\mat{\Theta}, 1}$, centred at a given trajectory of \eqref{eq:general_discrete_system}. Any trajectory, which starts in this tube and is contained at all times in a generalised contraction region, remains in that tube and converges exponentially fast to the given trajectory as $k \to \infty$.

Furthermore, global exponential convergence to the given trajectory is guaranteed if the whole state space is a generalised contraction region with respect to the metric $\Vert \vect{x} \Vert_{\mat{\Theta}, 1}$.
\end{theorem}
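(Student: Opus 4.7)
The plan is to lift the infinitesimal contraction provided by \eqref{eq:contract_gen_virt_dyn} to a finite-distance statement via path integration in the transformed coordinate $\vect{z}$. First I would observe that inside any generalised contraction region the hypothesis $\Vert \mat{F}(k)\Vert_1 < 1-\eta$ directly yields $\Vert \delta\vect{z}(k+1)\Vert_1 \leq (1-\eta)\Vert \delta\vect{z}(k)\Vert_1$, so iterating gives $\Vert \delta\vect{z}(k)\Vert_1 \leq (1-\eta)^k \Vert \delta\vect{z}(0)\Vert_1$, i.e. exponential contraction of infinitesimal displacements. The uniform nonsingularity of $\mat{\Theta}_k$ then guarantees (in view of the norm equivalence it provides between $\delta\vect{z}$ and $\delta\vect{x}$) that the same rate carries over to the virtual displacements $\delta\vect{x}$.

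To pass from infinitesimal to finite separations between the given trajectory $\vect{x}_0(k)$ and a second trajectory $\vect{x}(k)$, I would introduce a smooth one-parameter family $\vect{\xi}(k,\mu)$, $\mu\in[0,1]$, with $\vect{\xi}(k,0) = \vect{x}_0(k)$ and $\vect{\xi}(k,1) = \vect{x}(k)$, obtained by propagating a smooth path of initial conditions forward through $\vect{f}_k$. Defining the transformed path length
\begin{equation*}
L(k) = \int_0^1 \left\Vert \mat{\Theta}_k(\vect{\xi}(k,\mu),k)\,\frac{\partial \vect{\xi}(k,\mu)}{\partial \mu}\right\Vert_1 d\mu,
\end{equation*}
and applying the pointwise bound $\Vert \mat{F}(k)\Vert_1 < 1-\eta$ under the integral sign along a path that lies entirely inside the contraction region at each $k$, I obtain $L(k+1)\leq (1-\eta) L(k)$ and hence $L(k)\leq (1-\eta)^k L(0)$. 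Uniform nonsingularity of $\mat{\Theta}$ is then invoked a second time to bound $\Vert \vect{x}(k)-\vect{x}_0(k)\Vert$ above by a constant multiple of $L(k)$ and $L(0)$ by a constant multiple of $\Vert \vect{x}(0)-\vect{x}_0(0)\Vert$, yielding the claimed exponential convergence in the metric $\Vert\cdot\Vert_{\mat{\Theta},1}$.

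For the first (local) assertion I would choose the tube radius so that at $k=0$ the connecting path lies inside the contraction region; since $L(k)$ is monotonically non-increasing by the contraction estimate, and the tube is defined via the same metric $\Vert\cdot\Vert_{\mat{\Theta},1}$, the trajectory cannot escape the tube, and the connecting paths between $\vect{x}(k)$ and $\vect{x}_0(k)$ remain inside the contraction region at every subsequent $k$ by induction. For the second (global) assertion, the whole state space is a contraction region by hypothesis, so the confinement step is automatic and the same path-integration argument delivers global exponential convergence to the given trajectory.

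The main obstacle I anticipate is precisely this confinement argument for the tube case: one has to choose a path family whose image at each $k$ remains within the generalised contraction region, and reconcile the geometry of the tube (defined in the weighted metric) with the freedom in selecting $\vect{\xi}(k,\cdot)$. This is essentially bookkeeping once the uniform bounds on $\mat{\Theta}$ are exploited, but it is the only place where merely invoking $\Vert \mat{F}(k)\Vert_1<1-\eta$ pointwise is insufficient; everything else is a direct consequence of the infinitesimal contraction in the transformed coordinates together with the equivalence of norms provided by uniform nonsingularity.
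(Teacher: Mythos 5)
Your proposal is correct and follows essentially the same route as the paper's own proof sketch: infinitesimal contraction of $\Vert \delta\vect{z}\Vert_1$ at rate $1-\eta$, lifted to finite separations by integrating the transformed displacement along a connecting path between the two trajectories, with uniform nonsingularity of $\mat{\Theta}$ used to transfer the conclusion back to $\delta\vect{x}$ and the tube/whole-space dichotomy handling the local versus global statements. Your explicit one-parameter family $\vect{\xi}(k,\mu)$ and path length $L(k)$ simply make rigorous the path-integral step that the paper states more briefly.
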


Detailed proof of the theorem can be found in the seminal paper \cite{lohmiller1998contraction}, but with a focus on contraction in the Euclidean metric $\Vert \vect{x} \Vert_{\mat{\Theta}, 2} = \Vert \mat{\Theta}_k(\vect{x}(k), k)\vect{x}(k) \Vert_2$, as opposed to the absolute sum metric. However, norms other than the Euclidean norm can be studied because the solutions of \eqref{eq:contract_gen_virt_dyn} can be \emph{superimposed}. This is because \eqref{eq:contract_gen_virt_dyn} around a specific trajectory $\vect{x}(k)$ represents a linear time-varying system in $\delta{\vect{z}}$ coordinates (Section 3.7,  \cite{lohmiller1998contraction}). In the paper, we require use of the $1$-norm metric because the $2$-norm metric does not deliver a convergence result. We provide a sketch of the proof here, modified for the $1$-norm metric, and refer the reader to \cite{lohmiller1998contraction} for precise details.

\begin{proof}
In a generalised contraction region, there holds
\begin{align*}
\Vert \delta{\vect{z}}(k+1) \Vert_1 & = \Vert \mat{F}(k) \delta{\vect{z}}(k) \Vert_1 \nonumber \\
\Vert \delta{\vect{z}}(k+1)\Vert_1 & < (1-\eta) \Vert\delta{\vect{z}}(k)\Vert_1
\end{align*}
since $\Vert \mat{F}(k)\Vert_1 < 1 - \eta$ holds for all $k$ inside the generalised contraction region\footnote{We need $\eta > 0$ to eliminate the possibility that $\lim_{k\to \infty} \Vert \mat{F}(k) \Vert_1 = 1$, which would not result in exponential convergence.}. This implies that $\lim_{k\to\infty} \delta{\vect{z}}(k) = \vect{0}_n$ exponentially fast, which in turn implies that $\lim_{k\to\infty} \delta{\vect{x}}(k) = \vect{0}_n$ exponentially fast due to uniform nonsingularity of $\mat{\Theta}_k(\vect{x}(k),k)$. The definition of $\delta{\vect{x}}$ then implies that \emph{any two infinitesimally close trajectories} of \eqref{eq:general_discrete_system} converge to each other exponentially fast.

The distance between two points, $P_1$ and $P_2$, with respect to the metric $\Vert \cdot\Vert_{\mat{\Theta},1}$ is defined as the shortest path length between $P_1$ and $P_2$, i.e., the smallest path integral $\int_{P_1}^{P_2} \Vert \delta{\vect{z}} \Vert_1 = \int_{P_1}^{P_2} \Vert \delta{\vect{x}} \Vert_{\mat{\Theta},1}$. A tube centred about a trajectory $\vect{x}_1(k)$ and with radius $R$ is then defined as the set of all points whose distances to $\vect{x}_1(k)$ with respect to $\Vert \cdot\Vert_{\mat{\Theta},1}$ are strictly less than $R$.

Let $\vect{x}_2(k) \neq \vect{x}_1(k)$ be any trajectory that starts inside this tube, separated from $\vect{x}_1(k)$ by a finite distance with respect to the metric $\Vert \cdot\Vert_{\mat{\Theta},1}$. Suppose that the tube is contained at all times in a generalised contraction region. The fact that $\lim_{k\to\infty} \Vert \delta{\vect{x}}(k) \Vert_{\mat{\Theta},1} = 0$ then implies that $\lim_{k\to\infty} \int_{\vect{x}_1(k)}^{\vect{x}_2(k)} \Vert \delta{\vect{x}} (k) \Vert_{\mat{\Theta},1} = 0$ exponentially fast. That is, given the trajectories $\vect{x}_2(k)$ and $\vect{x}_1(k)$, separated by a finite distance with respect to the metric $\Vert \cdot\Vert_{\mat{\Theta},1}$, $\vect{x}_2(k)$ converges to $\vect{x}_1(k)$ exponentially fast. Global convergence is obtained by setting $R = \infty$.
\end{proof}

\begin{corollary}\label{cor:convex_contract_region}
If the contraction region is convex, then all trajectories converge exponentially fast to a unique trajectory.
\begin{proof}
This immediately follows because any finite distance between two trajectories shrinks exponentially in the convex region.
\end{proof}
\end{corollary}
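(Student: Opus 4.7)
The plan is to leverage Theorem~\ref{thm:gen_contraction_result} directly, promoting the ``tube of constant radius'' conclusion to a fully global one whenever the contraction region is convex. The key observation is that the only obstacle in Theorem~\ref{thm:gen_contraction_result} to concluding global convergence is the requirement that the entire tube connecting two trajectories lie inside the contraction region; convexity removes precisely this obstacle, because the straight-line segment between any two points of the region is itself contained in the region.

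First I would fix two arbitrary trajectories $\vect{x}_1(k)$ and $\vect{x}_2(k)$ of \eqref{eq:general_discrete_system}, both contained in the convex generalised contraction region for all $k$. At each time $k$, I would parametrise the straight segment from $\vect{x}_1(k)$ to $\vect{x}_2(k)$ by $\sigma\in[0,1]$, obtaining a one-parameter family $\vect{x}(\sigma,k)$ of points in state space with $\vect{x}(0,k)=\vect{x}_1(k)$ and $\vect{x}(1,k)=\vect{x}_2(k)$. Convexity guarantees $\vect{x}(\sigma,k)$ lies in the contraction region for every $\sigma$, so the path-length distance $D(k)=\int_0^1 \bigl\Vert \mat{\Theta}_k(\vect{x}(\sigma,k),k)\,\partial_\sigma \vect{x}(\sigma,k)\bigr\Vert_1\,d\sigma$ is a legitimate upper bound for the distance in the metric $\Vert\cdot\Vert_{\mat{\Theta},1}$ between $\vect{x}_1(k)$ and $\vect{x}_2(k)$.

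Next, I would apply the infinitesimal contraction estimate established in the proof of Theorem~\ref{thm:gen_contraction_result}, namely $\Vert\delta\vect{z}(k+1)\Vert_1 < (1-\eta)\Vert\delta\vect{z}(k)\Vert_1$, uniformly to every infinitesimal displacement along the segment. Integrating over $\sigma\in[0,1]$ then yields $D(k+1)<(1-\eta)D(k)$, and iterating gives $D(k)<(1-\eta)^k D(0)$. Uniform nonsingularity of $\mat{\Theta}_k$ translates this back into exponential decay of the distance between $\vect{x}_1(k)$ and $\vect{x}_2(k)$ in the ordinary state-space metric. Since the pair was arbitrary, every pair of trajectories collapses onto a single asymptotic trajectory exponentially fast, which is the ``unique trajectory'' referred to in the statement. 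Setting $R=\infty$ in Theorem~\ref{thm:gen_contraction_result} makes this formally precise.

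The main subtlety I expect is bookkeeping rather than a genuine obstacle: one must be careful that the path $\vect{x}(\sigma,k)$ is \emph{not} required to be a trajectory of \eqref{eq:general_discrete_system}, but merely a family of points in state space at fixed $k$, and that the contraction bound of Theorem~\ref{thm:gen_contraction_result} is legitimately applicable to every such virtual displacement. This is justified by the superposition property noted in the proof of Theorem~\ref{thm:gen_contraction_result}—equation~\eqref{eq:contract_gen_virt_dyn} is linear time-varying in $\delta\vect{z}$, so its contraction estimate applies uniformly to arbitrary virtual displacements, not only to those tangent to a particular solution. Once this point is acknowledged, the corollary indeed follows immediately, as the author's one-line proof asserts.
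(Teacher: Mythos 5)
Your proposal is correct and follows essentially the same route as the paper: the one-line proof in the text is precisely the observation that convexity keeps the connecting segment inside the contraction region, so the path-integral contraction estimate from the proof of Theorem~\ref{thm:gen_contraction_result} applies to the (finite) distance between any two trajectories and forces it to zero exponentially. The only cosmetic caveat is that the iteration $D(k+1)<(1-\eta)D(k)$ is cleanest when stated for the infimum path-length distance rather than the straight-segment length (the image of a straight segment need not be straight, nor the shortest path in the weighted metric), but this is the same level of detail at which the paper itself operates.
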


\section{Contraction Analysis for Constant $\mat{C}$}\label{sec:constant_C}

In this section, before we address dynamic topology in Section~\ref{sec:dyn_top}, we derive a convergence result for the constant DeGroot-Friedkin model \eqref{eq:DF_system} (i.e., $\mat{C}$ is constant for all $s\in \mathcal{S}$) using nonlinear contraction analysis methods as detailed in Section~\ref{ssec:contraction_background}. The framework built using nonlinear contraction analysis is then applied in the next section to the DeGroot-Friedkin Model with dynamic topology.

In order to obtain a convergence result, we make use of two properties of $\vect{F}(\vect x(s))$ established in \cite{jia2015opinion_SIAM}, but it must be noted that beyond these two properties, the analysis method is novel.

\begin{property}\label{prty01:continuousDF}
The map $\vect{F}(\vect{x}(s))$ is continuous on $\Delta_n$.
\end{property}

If $\mathcal{G}$ does not have star topology, then the following \emph{contraction-like} property holds [pp. 390, Appendix F, \cite{jia2015opinion_SIAM}].
\begin{property}\label{prty02:contraction_like}
Define the set $\mathcal{A} = \{\vect{x} \in \Delta_n : 1-r \geq x_i \geq 0, \forall\,i \in \{1, \hdots, n\}\}$, where $r \ll 1$ is a small strictly positive scalar. Then, there exists a sufficiently small $r$ such that $x_i(s) \leq 1-r$ implies $x_i(s+1) < 1-r$, for all $i$.
\end{property}
By choosing $r$ sufficiently small, it follows that $\vect{x}(s) \in \mathcal{A}, \forall\, s > 0$. In other words, $\vect{F}(\mathcal{A}) \subset \mathcal{A}$. We term this a \emph{contraction-like} property so as not to confuse the reader with our main result; this property establishes a contraction only near the boundary of the simplex $\Delta_n$.

As a consequence of the above two properties, one can easily show, using Brouwer's Fixed Point Theorem (as shown in \cite{jia2015opinion_SIAM}), that there exists \emph{at least} one fixed point $\vect{x}^* = \vect{F}(\vect{x}^*)$ in the convex compact set $\mathcal{A}$. In \cite{jia2015opinion_SIAM}, a method involving multiple inequalities is used to show that the fixed point $\vect{x}^*$ is unique. This is done separately to the convergence proof. In the following proof, we are able to establish exponential convergence to a fixed point, and as a consequence of the method used, immediately prove that it is unique. Lastly, we present a third, easily verifiable property.

\begin{property}\label{prty03:alpha_positive}
If $\vect{x}(s_1) \in \wt{\Delta}_n$ for some $s_1 < \infty$, then $\vect{x}(s) \in \text{int}(\Delta)_n$ for all $s > s_1$.
\end{property}
\begin{proof}
Since $\vect{x}(s_1) \in \wt{\Delta}_n$, $\exists\,j : x_j(s_1) > 0$. In addition, $\gamma_i > 0,\,\forall\,i$ because $\mat{C}$ is irreducible. It then follows that $\alpha(\vect{x}(s_1)) > 0$, and thus $x_i(s_1+1) > 0,\forall\,i$. Thus, $\vect{x}(s) \in \text{int}(\Delta_n)$ for all $s > s_1$.
\end{proof}



\subsection{Fundamental Contraction Analysis}

We now state a fundamental convergence result of the system \eqref{eq:DF_system}. In the original work \cite{jia2015opinion_SIAM}, LaSalle's Invariance Principle for discrete-time systems was used to prove an asymptotic convergence result. The result in this paper strengthens this by establishing exponential convergence. In the following proof, when we say a property holds uniformly, we mean that the property holds for all $\vect{x}(s) \in \mathcal{A}$.

\begin{theorem}\label{thm:contract_DF}
Suppose that $n\geq 3$ and suppose further that $\mat{C}$ satisfies Assumption~\ref{assm:C_matrix} and the associated $\mathcal{G}$ does not have star topology. The system \eqref{eq:DF_system}, with initial conditions $\vect{x}(0) \in \wt{\Delta}_n$, converges exponentially fast to a unique equilibrium point $\vect{x}^* \in \text{int}(\Delta_n)$.
\begin{proof}
Consider any given initial condition $\vect{x}(0) \in \wt{\Delta}_n$. According to Property~\ref{prty02:contraction_like}, $\vect{x}(s) \in \mathcal{A}, \forall\, s > 0$ for a sufficiently small $r$. It remains for us to study the system \eqref{eq:DF_system} for $\vect{x}(s) \in \mathcal{A}$. Therefore, in the following analysis, we assume that $s > 0$. The proof heavily utilises the concepts and terminology of Section~\ref{ssec:contraction_background}. 

Define the Jacobian of $\vect{F}(\vect{x}(s))$ at the $s^{th}$ issue as $\mat{J}_{\vect{F}}(\vect{x}(s)) = \{\frac{\partial F_i}{\partial x_j} (\vect{x}(s))\}$. We obtain, for $j = i$,
\begin{align}
\frac{\partial F_i}{\partial x_i}(\vect{x}(s)) & = \frac{\gamma_i \alpha(\vect x(s)) }{(1 - x_i(s))^2} - \frac{\gamma_i^2 \alpha(\vect x(s))^2 }{(1-x_i(s))^3} \nonumber \\
& = x_i(s+1) \frac{1 - x_i(s+1)}{1 - x_i(s)} \label{eq:J_diag}
\end{align}
Similarly, we obtain, for $j \neq i$, 
\begin{align}
\frac{\partial F_i}{\partial x_j}(\vect{x}(s)) & = - \frac{\gamma_i \gamma_j \alpha(\vect x(s))^2 }{(1-x_i(s))(1 - x_j(s))^2} \nonumber \\
& = - \frac{x_i(s+1) x_j(s+1) }{1 - x_j(s)}\label{eq:J_offdiag}
\end{align}
Accordingly, we have the following virtual dynamics 
\begin{equation*}
\delta \vect{x}(s+1) = \mat{J}_{\vect{F}}(\vect{x}(s)) \delta \vect{x}(s)
\end{equation*}
Note that $\mat{J}_{\vect{F}}(\vect{x}(s))$ is uniformly well defined and continuous because $x_i(s) < 1-r, \forall\, i,s$, thus enabling nonlinear contraction analysis to be used.

Because there are scenarios where $|\lambda_{\max}(\mat{J}_{\vect{F}}(\vect{x}(s)))| > 1$ (as observed in our simulations), this implies that it is not always possible to find a matrix norm such that $\Vert \mat{J}_{\vect{F}}(\vect{x}(s)) \Vert < 1$ uniformly. We are therefore motivated to seek a contraction result via a coordinate transform. However, rather than study a transformation of $\vect{x}(s)$, we will study a transformation of the virtual displacement $\delta \vect{x}(s)$ as detailed in Section~\ref{ssec:contraction_background}. Specifically, consider the following transformed virtual displacement
\begin{equation}\label{eq:gen_virt_disp}
\delta \vect{z}(s) = \mat{\Theta}(\vect{x}(s),s) \delta \vect{x}(s)
\end{equation}
where $\mat{\Theta}(\vect{x}(s),s) = diag[1/(1-x_i(s))]$, i.e., $\mat{\Theta}$ is a diagonal matrix with the $i^{th}$ diagonal element being $1/(1-x_i(s))$.
It should be noted here that $\mat{\Theta}(\vect{x}(s),s)$ in this proof explicitly depends only on the argument $\vect{x}(s)$, unlike the general result presented in Section~\ref{ssec:contraction_background}, and so we shall write it henceforth as $\mat{\Theta}(\vect{x}(s))$.

The contraction-like Property~\ref{prty02:contraction_like} establishes that $1 > 1-x_i(s) > r > 0$, which in turn implies that $\mat{\Theta}(\vect{x}(s))$ is uniformly nonsingular, with $\lambda_{\min}\big(\mat{\Theta}(\vect{x}(s))\big) > 1$ and $\lambda_{\max}\big(\mat{\Theta}(\vect{x}(s))\big) < 1/r$. In other words, $\kappa < \Vert \mat{\Theta}(\vect{x}(s)) \Vert_1 < \kappa^{-1}$ for some $\kappa > 0$, $\forall\,\vect{x}(s) \in \mathcal{A}$, as required in Section~\ref{ssec:contraction_background}. 

The transformed virtual dynamics is given by
\begin{align}
\delta \vect{z}(s+1) & = \mat{\Theta}(\vect{x}(s+1)) \mat{J}_{\vect{F}}(\vect{x}(s)) \mat{\Theta}(\vect{x}(s))^{-1} \delta \vect{z}(s) \nonumber\\
& = \bar{\mat{H}}(\vect{x}(s)) \delta\vect{z}(s) \label{eq:gen_virt_dyn}
\end{align}
where $\bar{\mat{H}}(\vect{x}(s)) = \mat{\Theta}(\vect{F}(\vect{x}(s))) \mat{J}_{\vect{F}}(\vect{x}(s)) \mat{\Theta}(\vect{x}(s))^{-1}$ is the Jacobian associated with the transformed virtual dynamics. By denoting $\bar{\mat{\Phi}}(\vect{x}(s)) = \mat{J}_{\vect{F}}(\vect{x}(s)) \mat{\Theta}(\vect{x}(s))^{-1}$, one can write $\bar{\mat{H}}(\vect{x}(s)) = \mat{\Theta}(\vect{F}(\vect{x}(s))) \bar{\mat{\Phi}}(\vect{x}(s))$.

The matrix $\bar{\mat{\Phi}}(\vect{x}(s))$ is computed in \eqref{eq:phi_calc} below, and note that it can be considered as being solely dependent on $\vect{x}(s+1) = \vect{F}(\vect{x}(s))$. Therefore, we let $\mat{\Phi}(\vect{x}(s+1)) = \bar{\mat{\Phi}}(\vect{x}(s))$. For brevity, we drop the argument $\vect{x}(s+1)$ where there is no ambiguity and write simply $\mat{\Phi}$.
\begin{figure*}
\begin{align}\label{eq:phi_calc}
\bar{\mat{\Phi}}(\vect{x}(s)) & = 
\begin{bmatrix}
x_1(s+1)\frac{1-x_1(s+1)}{1-x_1(s)} & -\frac{x_1(s+1) x_2(s+1)}{1-x_2(s)} & \cdots & - \frac{x_1(s+1) x_n(s+1)}{1-x_n(s)} \\
-\frac{x_1(s+1) x_2(s+1)}{1-x_1(s)} & x_2(s+1)\frac{1-x_2(s+1)}{1-x_2(s)} & \hdots & \vdots \\ 
\vdots & \vdots & \ddots & \vdots \\
-\frac{x_1(s+1) x_n(s+1)}{1-x_1(s)} & -\frac{x_2(s+1) x_n(s+1)}{1-x_2(s)} & \hdots & \quad x_n(s+1)\frac{1-x_n(s+1)}{1-x_n(s)}\\
\end{bmatrix}\times 
\begin{bmatrix}
1- x_1(s)& & \\
& \ddots & \\
& & 1-x_n(s)
\end{bmatrix} \nonumber \\
& = 
\begin{bmatrix}
x_1(s+1)\big(1-x_1(s+1)\big) & -x_1(s+1) x_2(s+1) & \hdots & -x_1(s+1) x_n(s+1) \\
-x_1(s+1) x_2(s+1) & x_2(s+1)\big(1-x_2(s+1)\big) & \hdots & \vdots \\
\vdots & \vdots & \ddots & \vdots \\
-x_1(s+1) x_n(s+1) &  -x_2(s+1) x_n(s+1) & \hdots & \quad x_n(s+1)\big(1-x_n(s+1)\big)
\end{bmatrix}
\end{align}
\end{figure*}
Note that for each row $i$, $\phi_{ii} = x_i(s+1) \big(1-x_i(s+1)\big)$ and \mbox{$\phi_{ij} = - x_i(s+1) x_j(s+1)$} where $\phi_{ij}$ is the $(i,j)^{th}$ element of $\mat{\Phi}$. From the fact that $0 < x_i(s) < 1-r, \forall\, i$, it follows that all diagonal entries of $\mat{\Phi}$ are uniformly strictly positive and all off-diagonal entries of $\mat{\Phi}$ are uniformly strictly negative. Notice that $\mat{\Phi} = \mat{\Phi}^\top$. Lastly, for any row $i$, there holds
\begin{align*}
\sum_{j = 1}^n \phi_{ij} & = x_i(s+1) \big[ 1 - x_i(s+1) - \sum_{j = 1, j\neq i}^n x_j(s+1) \big] = 0
\end{align*}
because $x_i(s+1) + \sum_{j = 1, j\neq i}^n x_j(s+1) = 1$. In other words, $\mat{\Phi}$ has row and column sums equal to $0$. We thus conclude that $\mat{\Phi}$ is the weighted Laplacian associated with an undirected, \emph{completely connected}\footnote{By completely connected, we mean that there is an edge going from every node $i$ to every other node $j$.} graph with edge weights which vary with $\vect{x}(s+1)$. The edge weights, $-\phi_{ij}$, are uniformly lower bounded away from zero and upper bounded away from 1. This implies that $0 = \lambda_{1}(\mat{\Phi}) < \lambda_{2}(\mat{\Phi}) \leq \hdots \leq \lambda_{n}(\mat{\Phi}) < \infty$ \cite{godsil2001algebraic}, i.e., $\mat{\Phi}$ is uniformly positive semidefinite with a single eigenvalue at $0$, with the associated eigenvector $\vect{1}_n$.

Since $\bar{\mat{\Phi}}(\vect{x}(s)) = \mat{\Phi}(\vect{x}(s+1))$ and  $\mat{\Theta}(\vect{x}(s+1)) = \mat{\Theta}(\vect{F}(\vect{x}(s)))$, we note that $\bar{\mat{H}}(\vect{x}(s))$ can be considered as depending solely on $\vect{x}(s+1)$. Letting $\mat{H}(\vect{x}(s+1)) = \bar{\mat{H}}(\vect{x}(s))$, we complete the calculation $\mat{H}(\vect{x}(s+1)) = \mat{\Theta}(\vect{x}(s+1)) \mat{\Phi}(\vect{x}(s+1))$ to obtain that, for any $i\in \{1, \ldots, n\}$, 
\begin{align*}
{h}_{ii}(\vect{x}(s+1)) & = x_i(s+1) \\
{h}_{ij}(\vect{x}(s+1)) & = -\frac{x_i(s+1) x_j(s+1)}{1-x_i(s+1)}\, , \quad j \neq i
\end{align*}
where ${h}_{ij}(\vect{x}(s+1))$ is the $(i,j)^{th}$ element of $\mat{H}(\vect{x}(s+1))$. For brevity, and when there is no risk of ambiguity, we drop the argument $\vect{x}(s+1)$ and simply write $\mat{H}$. We note that the diagonal entries and off-diagonal entries of $\mat{H}(\vect{x}(s+1))$ are uniformly strictly positive and uniformly strictly negative, respectively. Notice that $\mat{\Phi} \vect{1}_n = \vect{0}_n \Rightarrow \mat{H}\vect{1}_n = \mat{\Theta}(\vect{x}(s+1)) \mat{\Phi}(\vect{x}(s+1)) \vect{1}_n = \vect {0}_n$. In other words, each row of $\mat{H}$ sums to zero. It follows that $\mat{H}$ is the weighted Laplacian matrix associated with a directed, \emph{completely connected} graph with edge weights which vary with $\vect{x}(s+1)$. The edge weights, $-h_{ij}$, are uniformly upper bounded away from infinity and lower bounded away from zero. It is well known that if a directed graph contains a directed spanning tree, the associated Laplacian matrix has a single eigenvalue at $0$, and all other eigenvalues have positive real parts \cite{ren2005consensus}. 

With $\mat{A} = \mat{\Theta}(\vect{x}(s+1))$ uniformly positive definite and $\mat{B} = \mat{\Phi}(\vect{x}(s+1))$ uniformly positive semidefinite, it follows from Lemma~\ref{cor:AB_real} that $\mat{H} = \mat{A}\mat{B}$ has a single zero eigenvalue and \emph{all other eigenvalues are strictly positive and real}. By observing that $\text{trace}(\mat{H}) = \sum_{i=1}^n x_i(s+1) = 1 = \sum_{i=1}^n \lambda_i(\mat{H})$, we conclude that $\max_i \big(\lambda_{i}(\mat{H})\big) < 1$ uniformly, since $n\geq 3$. 

We now establish the stronger result that $\Vert \mat{H} \Vert_1 < 1$ uniformly, which is required to obtain our stability result. See Remark~\ref{rem:schur_stable} below for more insight. Observe that $\Vert \mat{H} \Vert_1 < 1$ if and only if, for all $i \in \{1, \hdots, n\}$, there holds $ \sum_{j=1}^n \vert h_{ji} \vert < 1$, or equivalently,
\begin{align}
x_i + \sum_{j= 1, j \neq i}^n \left( \frac{x_i}{1-x_j} \right) x_j & < 1 \label{eq:H_norm_ineq}
\end{align}
and notice that we have dropped the time argument $s+1$ for brevity. From the fact that $x_i > 0, \forall\, i$ (recall $\alpha(\vect{x}(s)) > 0$), and $n \geq 3$, we obtain $x_i + x_j < 1 \Rightarrow x_i/(1-x_j) < 1$ for all $j \neq i$. Combining this with the fact that $x_i + \sum_{j=1, j\neq i}^n x_j = 1$, we immediately verify that \eqref{eq:H_norm_ineq} holds for all $i$.  Because $\mathcal{A}$ is bounded, this implies that $\Vert \mat{H} \Vert_1 < 1 - \eta$ for some $\eta > 0$ and all $\vect{x}(s) \in \mathcal{A}$. Recalling the transformed virtual dynamics in \eqref{eq:gen_virt_dyn}, we conclude that
\begin{align*}
\Vert \delta \vect{z}(s+1) \Vert_1 & = \Vert \mat{H}(\vect{x}(s+1))\delta \vect{z}(s) \Vert_1 < (1-\eta) \Vert \delta\vect{z}(s) \Vert_1
\end{align*}
We thus conclude that the transformed virtual displacement $\delta \vect{z}$ converges to zero exponentially fast. Recall the definition of $\delta\vect{z}(s)$ in \eqref{eq:gen_virt_disp}, and the fact that $\mat{\Theta}(\vect{x}(s))$ is uniformly nonsingular. It then follows that $\delta\vect{x}(s) \to \vect{0}_n$ exponentially, $\forall\,\vect{x}(s) \in \mathcal{A}$. 

We have thus established that $\mathcal{A}$ is a \emph{generalised contraction region} in accordance with Definition~\ref{def:gen_contract_region}. Because $\mathcal{A}$ is compact and convex, we conclude from Theorem~\ref{thm:gen_contraction_result} and Corollary~\ref{cor:convex_contract_region} that \emph{all trajectories} of $\vect{x}(s+1) = \vect{F}(\vect{x}(s))$ with $\vect{x}(0) \in \wt{\Delta}_n$, converge exponentially \emph{to a single trajectory}. According to Brouwer's Fixed Point Theorem, there is at least one fixed point $\vect{x}^* = \vect{F}(\vect{x}^*) \in \text{int}(\Delta_n)$, \emph{which is a trajectory of $\vect{x}(s+1) = \vect{F}(\vect{x}(s))$}. It then immediately follows that all trajectories of $\vect{x}(s+1) = \vect{F}(\vect{x}(s))$ converge exponentially to \emph{a unique fixed point} $\vect{x}^*\in \text{int}(\Delta_n)$ (recall Property~\ref{prty03:alpha_positive}). 
\end{proof}
\end{theorem}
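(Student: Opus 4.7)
The plan is to leverage the nonlinear contraction framework of Section~\ref{ssec:contraction_background} by exhibiting a region $\mathcal{A} \subset \Delta_n$ that is forward-invariant and serves as a generalised contraction region under an appropriate metric. First, I would invoke Property~\ref{prty02:contraction_like} to conclude that, after one step, every trajectory starting from $\vect x(0) \in \wt{\Delta}_n$ enters and remains in the compact, convex set $\mathcal{A}=\{\vect x\in\Delta_n : x_i \le 1-r,\ \forall i\}$ for some sufficiently small $r>0$; together with Property~\ref{prty03:alpha_positive}, this confines the analysis to a set where all $x_i$ are uniformly bounded away from $0$ and $1$, so that the map and its Jacobian are smooth uniformly on $\mathcal{A}$.

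Next, I would compute the Jacobian $\mat J_{\vect F}(\vect x(s))$ entrywise, using $x_i(s+1) = \alpha(\vect x(s)) \gamma_i/(1-x_i(s))$ to obtain the compact expressions $\partial F_i/\partial x_i = x_i(s+1)(1-x_i(s+1))/(1-x_i(s))$ and $\partial F_i/\partial x_j = -x_i(s+1)x_j(s+1)/(1-x_j(s))$. Because the spectral radius of $\mat J_{\vect F}$ need not be strictly less than one in the standard basis, I would introduce the diagonal coordinate change $\mat\Theta(\vect x(s)) = \operatorname{diag}(1/(1-x_i(s)))$, which is uniformly nonsingular on $\mathcal{A}$. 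The transformed Jacobian $\mat H = \mat\Theta(\vect x(s+1)) \mat J_{\vect F}(\vect x(s)) \mat\Theta(\vect x(s))^{-1}$ simplifies remarkably: writing $\mat H = \mat\Theta(\vect x(s+1)) \mat\Phi(\vect x(s+1))$, the matrix $\mat\Phi$ has entries $\phi_{ii}=x_i(s{+}1)(1-x_i(s{+}1))$ and $\phi_{ij}=-x_i(s{+}1)x_j(s{+}1)$, which is symmetric with zero row sums. Hence $\mat\Phi$ is the Laplacian of a weighted complete graph, uniformly positive semidefinite with a simple zero eigenvalue; combined with uniform positive definiteness of $\mat\Theta$, Lemma~\ref{cor:AB_real} gives that $\mat H$ has real, nonnegative eigenvalues, exactly one of which is zero.

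The spectral information alone gives $\max_i \lambda_i(\mat H) < 1$ (since the trace equals $\sum_i x_i(s+1)=1$ and $n\ge 3$), but for contraction in a discrete-time $1$-norm sense I need a genuine \emph{norm} bound, not just a spectral radius bound. This is the crux of the argument and where I expect the main technical obstacle to lie: to conclude exponential convergence via Theorem~\ref{thm:gen_contraction_result}, I must show $\|\mat H\|_1 < 1-\eta$ uniformly for some $\eta>0$. I would establish this by a direct column-sum estimate, namely verifying that for each $i$,
\begin{equation*}
x_i + \sum_{j\ne i} \frac{x_i}{1-x_j}\, x_j < 1,
\end{equation*}
which follows because $x_i/(1-x_j) < 1$ whenever $x_i + x_j < 1$ (guaranteed when $n\ge 3$ since $\sum_k x_k = 1$). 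Uniform strictness comes from the compactness of $\mathcal{A}$.

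Finally, with $\|\mat H\|_1 \le 1-\eta$ uniformly on $\mathcal{A}$, the transformed virtual displacement $\delta\vect z(s)$ decays geometrically, and uniform nonsingularity of $\mat\Theta$ transfers this to $\delta\vect x(s)$. Convexity and compactness of $\mathcal{A}$ together with Theorem~\ref{thm:gen_contraction_result} and Corollary~\ref{cor:convex_contract_region} yield exponential convergence of all trajectories to a single common trajectory. Brouwer's Fixed Point Theorem (applied to the continuous map $\vect F$ on the compact convex set $\mathcal{A}$, using Property~\ref{prty01:continuousDF}) supplies at least one fixed point $\vect x^* \in \mathcal{A}$, which is itself a constant trajectory; the contraction argument then forces every other trajectory to converge exponentially to $\vect x^*$, and Property~\ref{prty03:alpha_positive} places $\vect x^* \in \operatorname{int}(\Delta_n)$. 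Uniqueness of $\vect x^*$ is automatic, as any two fixed points would be distinct trajectories, contradicting contraction.
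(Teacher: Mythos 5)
Your proposal is correct and follows essentially the same route as the paper's own proof: forward invariance of $\mathcal{A}$ via Property~\ref{prty02:contraction_like}, the same Jacobian formulas, the same diagonal transformation $\mat{\Theta} = \operatorname{diag}(1/(1-x_i))$ yielding the Laplacian decomposition $\mat{H} = \mat{\Theta}\mat{\Phi}$, the same observation that the spectral bound is insufficient and the same column-sum verification of $\Vert \mat{H}\Vert_1 < 1-\eta$, concluding with Theorem~\ref{thm:gen_contraction_result}, Corollary~\ref{cor:convex_contract_region} and Brouwer's fixed point theorem. You have also correctly identified the crux of the argument — the need for a uniform norm bound rather than a spectral radius bound — exactly where the paper places its emphasis.
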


\begin{corollary}[Vertex Equilibrium]
The fixed point $\mathbf{e}_i$ of the map $\vect{F}(\vect{x})$ is unstable if $\gamma_i < 1/2$. If $\gamma_i = 1/2$, i.e., $v_i$ is the centre node of a star graph, then the fixed point $\mathbf{e}_i$ is asymptotically stable, but is not exponentially stable.
\end{corollary}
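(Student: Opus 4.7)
The plan is to exploit a local expansion of the map $\vect{F}$ near the vertex $\mathbf{e}_i$. Setting $\epsilon(s) := 1 - x_i(s)$ and parametrising the off-vertex components by $x_j(s) = \epsilon(s)\,\hat{x}_j(s)$ for $j\neq i$ (where $\sum_{j\neq i}\hat{x}_j = 1$), I would Taylor-expand the normalising factor $\alpha(\vect{x})$ in \eqref{eq:map_F_DF} in powers of $\epsilon$. To leading order this gives $\alpha = \epsilon/\gamma_i - (1-\gamma_i)\epsilon^2/\gamma_i^2 + O(\epsilon^3)$, and since $F_i = \alpha\gamma_i/\epsilon$ the resulting scalar recursion for the deviation from $\mathbf{e}_i$ is
\begin{equation*}
\epsilon(s+1) = \frac{1-\gamma_i}{\gamma_i}\,\epsilon(s) + O(\epsilon(s)^2).
\end{equation*}

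For the first claim, when $\gamma_i < 1/2$ the linear factor $(1-\gamma_i)/\gamma_i$ exceeds $1$, so any sufficiently small nonzero $\epsilon(0)$ is amplified at each step until it is no longer small, forcing the trajectory to leave any small neighbourhood of $\mathbf{e}_i$. A cleaner alternative is to invoke Theorem~\ref{thm:DFmain} directly: every trajectory with $\vect{x}(0)\in\wt{\Delta}_n$, however close to $\mathbf{e}_i$, converges to the interior fixed point $\vect{x}^*\in\text{int}(\Delta_n)$, which lies at a fixed positive distance from $\mathbf{e}_i$; hence $\mathbf{e}_i$ fails Lyapunov stability.

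For the second claim, when $\gamma_i = 1/2$ Fact~\ref{fact:star_gamma} identifies $\mathcal{G}$ as a star with centre $v_i$, and Lemma~\ref{lem:star} already delivers global asymptotic stability on $\wt{\Delta}_n$, so only the failure of exponential stability remains. Here the linearisation is marginal, so I would extend the expansion one further order to obtain
\begin{equation*}
\epsilon(s+1) = \epsilon(s) - C(s)\,\epsilon(s)^2 + O(\epsilon(s)^3),\quad C(s) = 1 - 2\sum_{j\neq i}\gamma_j\hat{x}_j(s).
\end{equation*}
With $n\geq 3$, the conditions $\gamma_i = 1/2$, $\vect{\gamma}\succ 0$, and $\vect{1}^\top\vect{\gamma}=1$ force every $\gamma_j$, $j\neq i$, to be strictly below $1/2$, so $\sum_{j\neq i}\gamma_j\hat{x}_j \leq \max_{j\neq i}\gamma_j < 1/2$ uniformly and $C(s)$ is confined to a positive interval. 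Consequently $\epsilon(s+1)/\epsilon(s) = 1 - C(s)\epsilon(s) + O(\epsilon(s)^2) \to 1$, which is incompatible with any bound of the form $\epsilon(s) \leq M\rho^s\epsilon(0)$ with $\rho < 1$, ruling out exponential convergence.

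The main obstacle is precisely this marginal case: at $\gamma_i = 1/2$ the Jacobian of $\vect{F}$ at $\mathbf{e}_i$ has an eigenvalue exactly equal to $1$, so classical linearisation-based tests give no information and one is forced to argue at second order. Producing the recursion above cleanly requires a careful double expansion in $\epsilon$ and in the distribution $\hat{x}_j$, together with the uniform positive lower bound on $C(s)$ extracted from Fact~\ref{fact:star_gamma} and the assumption $n\geq 3$.
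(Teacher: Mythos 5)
Your proposal is correct, and it diverges from the paper's proof in an interesting way. The paper regularises $\vect{F}$ near the vertex (multiplying $\alpha(\vect{x})$ by $1/(1-x_1)$ and each entry $\gamma_i/(1-x_i)$ by $1-x_1$), computes the full Jacobian at $\mathbf{e}_1$, and finds a single nonzero eigenvalue $(1-\gamma_1)/\gamma_1$ with all others at $0$; instability for $\gamma_1<1/2$ then follows from that eigenvalue exceeding $1$, which is the same linear information your scalar recursion $\epsilon(s+1)=\tfrac{1-\gamma_i}{\gamma_i}\epsilon(s)+O(\epsilon^2)$ extracts, since $\Vert\vect{x}-\mathbf{e}_i\Vert_1=2\epsilon$ makes your one-dimensional reduction lossless. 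The real difference is in the marginal case $\gamma_i=1/2$: the paper observes that the Jacobian has an eigenvalue exactly at $1$ and invokes a discrete-time counterpart of the converse Lyapunov theorem (Theorem 4.15 of Khalil) to rule out exponential stability, with the proof of that counterpart omitted for space; you instead push the expansion to second order, obtain $\epsilon(s+1)=\epsilon(s)-C(s)\epsilon(s)^2+O(\epsilon^3)$ with $C(s)$ uniformly positive (your bound $\sum_{j\neq i}\gamma_j\hat{x}_j\leq\max_{j\neq i}\gamma_j<1/2$ is valid because $n\geq 3$ and irreducibility force every $\gamma_j$, $j\neq i$, strictly below $1/2$), and conclude $\epsilon(s+1)/\epsilon(s)\to 1$. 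Your route is more elementary and self-contained -- it avoids the unproved converse-Lyapunov machinery and in fact yields the sharper quantitative statement that $\epsilon(s)$ decays only like $1/s$ -- while the paper's Jacobian computation buys, as a by-product, a clean proof of continuity of $\vect{F}$ at the vertices and the full spectrum at $\mathbf{e}_i$. One small point worth making explicit if you use your ``cleaner alternative'' for instability: Theorem~\ref{thm:DFmain} assumes $\mathcal{G}$ is not a star, so when $\gamma_i<1/2$ but the graph is a star centred at some $v_j\neq v_i$ you must instead appeal to Lemma~\ref{lem:star} (convergence to $\mathbf{e}_j$, again bounded away from $\mathbf{e}_i$); your primary local-expansion argument covers all cases uniformly and does not need this case split. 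Both approaches rely on Lemma~\ref{lem:star} for the asymptotic-stability half of the star case.
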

\begin{proof}
Without loss of generality, consider $\mathbf{e}_1$. One can avoid $\vect{F}(\vect{x})$ in \eqref{eq:map_F_DF} (and its Jacobian) misbehaving as $\vect{x} \to \mathbf{e}_1$ by multiplying $\alpha(\vect{x})$ by $1/(1-x_1)$ and by multiplying each entry $\gamma_i/(1-x_i)$ by $1-x_1$. One can then differentiate and obtain $\vect{J}_{\vect{F}(\vect{x})}$ and evaluate it at $\vect{x} = \mathbf{e}_1$. Specifically, we obtain $\partial F_1/\partial x_1 = (1-\gamma_1)/\gamma_1$, $\partial F_i/\partial x_1 = -\gamma_i/\gamma_1$, $\partial F_i/\partial x_j = 0$ for all $i,j \neq 1$. Note that this immediately proves that $\vect{F}(\vect{x})$ is continuous at each vertex of the simplex $\Delta_n$, greatly simplifying the proof in Lemma 2.2 of \cite{jia2015opinion_SIAM}.

It follows that $\mat{J}_{\vect{F}(\vect{x})}$ has a single eigenvalue at $(1-\gamma_1)/\gamma_1$ and all other eigenvalues are $0$. If $\gamma_1 < 1/2$, then $(1-\gamma_1)/\gamma_1 > 1$ and the fixed point $\mathbf{e}_1$ is unstable. If $\gamma_1 = 1/2$, then $\mat{J}_{\vect{F}(\vect{x})}$ has a single eigenvalue at $1$. A discrete-time counterpart to Theorem 4.15 in \cite{khalil2002nonlinear} (converse Lyapunov theorem) then rules out $\vect{e}_1$ as an \emph{exponentially stable} fixed point of $\vect{F}(\vect{x})$ (asymptotic stability was established in Lemma~\ref{lem:star}). We omit the proof of the discrete-time counterpart to Theorem 4.15 of \cite{khalil2002nonlinear} due to space limitations. 
\end{proof}

\begin{remark}\label{rem:schur_stable}
When we first analyse $\mat{H}$, we establish that $\forall\,i$, $\lambda_i(\mat{H})$ is real, nonnegative and less than 1. This tells us that the trajectories of \eqref{eq:DF_system} about $\vect{x}^*$ are not oscillatory in nature. It also follows that the spectral radius of $\mat{H}$, given by $\rho(\mat{H})$, is strictly less than 1. In other words, $\mat{H}$ is Schur stable, and according to \cite{horn2012matrixbook}, there exists a submultiplicative matrix norm $\Vert \cdot \Vert^\prime$ such that $\Vert \mat H \Vert^\prime < 1$. However, we must recall that $\mat{H}(\vect{x}(s+1))$ is in fact a nonconstant matrix which changes over the trajectory of the system \eqref{eq:DF_system}. It is not immediately obvious, and in fact is not a consequence of the eigenvalue property, that a single submultiplicative matrix norm $\Vert \cdot \Vert^{\prime\prime}$ exists such that $\Vert \mat{H} \Vert^{\prime\prime} < 1$ \textbf{for all} $\vect{x} \in \mathcal{A}$. Existence of such a norm $\Vert \cdot \Vert^{\prime\prime}$ would establish the desired stability property.

In fact, the system $\delta\vect{z}(s+1) = \mat{H}(\vect{x}(s+1))\delta\vect{z}(s)$, with $\mat{H}\in \mathcal{M}$, $\mathcal{M} = \{\mat{H}(\vect{x}(s+1)) : \vect{x}(s+1) \in \mathcal{A} \}$, can be considered as a discrete-time linear switching system with state $\delta\vect{z}$, and thus under arbitrary switching, the system is stable if and only if the \textbf{joint spectral radius} is less than $1$, that is $\rho(\mathcal{M}) = \lim_{k\to\infty} \max_i \{\Vert \mat{H}_{i_1} \hdots \mat{H}_{i_k} \Vert^{1/k} : \mat{H}_i \in \mathcal{M} \} < 1$ \cite{blondel2000joint_spec_rad}. This is of course a more restrictive condition than simply requiring that $\rho(\mat{H}_i) < 1$. It is known that even when $\mathcal{M}$ is finite, computing the joint spectral radius is NP-hard \cite{tsitsiklis1997joint_spec_rad} and the question ``$\rho(\mathcal{M}) \leq 1$?" is an undecidable problem \cite{blondel2000joint_spec_rad}. The problem is made even more difficult because in this paper, the set $\mathcal{M}$ is not finite. We were therefore motivated to prove the stronger, and nontrivial, result that $\Vert \mat{H} \Vert_1 < 1,\forall\,\vect{x} \in \mathcal{A}$ in order to bypass this issue.
\end{remark}

\begin{remark}\label{rem:comment_deltaz}
For the given definition of $\delta{\vect{z}}$ in \eqref{eq:gen_virt_disp}, we are able to obtain $z_i(s+1) = - \ln (1-x_i(s+1))$ where $z_i$ is the $i^{th}$ element of $\vect{z}(\vect{x}(s))$. However, we did not present the above convergence arguments by firstly defining $\vect{z}(\vect{x}(s))$ and then seeking to study $\vect{z}(s+1) = \vect{G}(\vect{z}(s))$. This is because our proof arose from considering $\vect{x}(s+1) = \vect{F}(\vect{x}(s))$ using the nonlinear contraction ideas developed in \cite{lohmiller1998contraction}, which studied stability via differential concepts. It was through \eqref{eq:gen_virt_disp} that we were able to integrate\footnote{Note that in general, the entries of $\mat{\Theta}$ may have expressions which do not have analytic antiderivatives, and thus an analytic $\vect{z}(\vect{x}(s),s)$ cannot always be found, but $\delta \vect{z}(s)$ can always be defined.} and obtain $z_i = -\ln(1-x_i)$. Moreover, it will be observed in the sequel that by conducting analysis on the transformed Jacobian using nonlinear contraction theory, we are able to straightforwardly deal with dynamic relative interaction matrices. 
\end{remark}

\begin{remark}\label{rem:contraction_metrics}
It should be noted that \cite{lohmiller1998contraction} specifically discusses \emph{contraction in the Euclidean metric} $\Vert \delta\vect{z} \Vert_2 = \Vert \mat{\Theta}\delta\vect{x}\Vert_2$. A contraction region in the Euclidean metric requires $\lambda_{\max}\big(\mat{H}(\vect{x}(s))^\top \mat{H}(\vect{x}(s))\big) < 1$ to hold uniformly. This guarantees that $\delta \vect{z}(s)^\top \delta \vect{z}(s) = \delta\vect{x}(s)^\top \mat{M}(\vect{x}(s),s) \delta\vect{x}(s)$ shrinks to zero exponentially fast, where $\mat{M} = \mat{\Theta}^\top \mat{\Theta}$. However, our simulations showed that $\lambda_{\max}\big(\mat{H}(\vect{x}(s))^\top \mat{H}(\vect{x}(s))\big)$ was frequently and significantly greater than $1$, which indicated that $\delta\vect{z}(s)$ defined in \eqref{eq:gen_virt_disp} is not necessarily contracting in the Euclidean metric. This motivated us to consider contraction of $\delta\vect{z}(s)$ in the absolute sum metric, with appropriate adjustments to the proof presented in Section~\ref{ssec:contraction_background}. Such an approach is alluded to in Section 3.7 of \cite{lohmiller1998contraction}. 
\end{remark}

\subsection{Extending the Contraction-like Analysis}
In this subsection, we provide a result which significantly expands Property~\ref{prty02:contraction_like} by providing an explicit value for $r$ and introduces a stronger \emph{contraction-like result}, which is also applicable to social networks with star topology, unlike Property~\ref{prty02:contraction_like} established in \cite{jia2015opinion_SIAM}.

\begin{lemma}\label{lem:contract}
Suppose that $n\geq 3$, $\vect{x}(0) \in \wt{\Delta}_n$, and $\mathcal{G}$ is strongly connected. Define
\begin{equation}\label{eq:lem_contract_rj}
r_j = \frac{1-2\gamma_j}{1-\gamma_j} 
\end{equation}
where $\gamma_j$ is the $j^{th}$ entry of $\vect{\gamma}^\top$. 
If $\mathcal{G}$ does not have star topology, which implies, from Fact~\ref{fact:star_gamma}, that $r_j > 0$, then for any $0 < r \leq r_j$, there holds
\begin{equation}\label{eq:lem_contract_xF}
x_j \leq 1 - r \Rightarrow F_j(\vect{x}) < 1-r
\end{equation}
where $F_j(\vect{x})$ is the $j^{th}$ entry of $\vect{F}(\vect{x})$. 

If $\mathcal{G}$ has star topology with centre node $j$, which implies $r_j = 0$ in accordance with Fact~\ref{fact:star_gamma}, then $\nexists r > 0 : r \leq r_j$, and thus the contraction-like property in \eqref{eq:lem_contract_xF} does not hold. 
\begin{proof}
It has already been shown that for $\vect{x}(0) \in \wt{\Delta}_n$, there holds $\vect{x}(s) \in \text{int}(\Delta_n)$, i.e., $x_i(s) > 0$ for all $i$ and $s> 0$. Consider then $s > 0$. Suppose that $x_j \leq 1 - r$. Then, with $r \leq r_j$, there holds
\begin{align}
F_j(\vect{x}) & = \alpha(\vect{x})\frac{ \gamma_j}{1 - x_j} \nonumber\\
& = \frac{1}{\frac{\gamma_j}{1-x_j}(1+\frac{\sum_{k\neq j}^n \gamma_k/(1-x_k)}{\gamma_j/(1-x_j)})} \frac{\gamma_j}{1-x_j} \nonumber\\
& = \frac{1}{1+\frac{\sum_{k\neq j}^n \gamma_k/(1-x_k)}{\gamma_j/(1-x_j)}} \leq \frac{1}{    1+\sum_{k\neq j}^n \frac{r}{\gamma_j}\frac{\gamma_k}{(1-x_k)}    } \label{eq:lem_contract_01}
\end{align}
because $r \leq 1 - x_j$. From the fact that $1-x_k < 1$, we obtain $\gamma_k/(1-x_k) > \gamma_k$, which in turn implies that the right hand side of \eqref{eq:lem_contract_01} obeys 
\begin{align}
\frac{1}{    1+\sum_{k\neq j}^n \frac{r}{\gamma_j}\frac{\gamma_k}{(1-x_k)}    } & < \frac{1}{1+\sum_{k\neq j}^n \frac{\gamma_k r}{\gamma_j} } \label{eq:lem_contract_04}  \\
& = \frac{1}{1+\frac{(1-\gamma_j)r}{\gamma_j} } \nonumber\\
& = \frac{\gamma_j}{\gamma_j+(1-\gamma_j)r} \label{eq:lem_contract_02}
\end{align}
with the first equality obtained by noting that $\sum_{k\neq j}^n \gamma_k = 1-\gamma_j$ according to the definition of $\vect{\gamma}$. It follows from \eqref{eq:lem_contract_01} and \eqref{eq:lem_contract_02} that
\begin{align*}
1-r- F_j(\vect{x}) & > 1 - r - \frac{\gamma_j}{\gamma_j+(1-\gamma_j)r} \\
& = \frac{\gamma_j + (1-\gamma_j)r - r\gamma_j - (1-\gamma_j)r^2 - \gamma_j}{\gamma_j +(1-\gamma_j)r} & \\
& = \frac{r(1-2\gamma_j) - r^2(1-\gamma_j)}{\gamma_j +(1-\gamma_j)r} \\ 
& = \frac{ r(1-\gamma_j)\left[\frac{1-2\gamma_j}{1-\gamma_j} - r\right] }{\gamma_j +(1-\gamma_j)r} \label{eq:lem_contract_03}
\end{align*}
Substituting in $r_j$ from \eqref{eq:lem_contract_rj} then yields
\begin{equation}
1-r- F_j(\vect{x}) > \frac{ r(1-\gamma_j)(r_j - r) }{\gamma_j +(1-\gamma_j)r} \geq 0
\end{equation}
because $r_j \geq r$. In other words, $1-r > F_j(\vect{x})$, which completes the proof.
\end{proof}
\end{lemma}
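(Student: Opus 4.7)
The plan is to obtain the contraction-like bound by a direct calculation on the formula $F_j(\vect{x}) = \alpha(\vect{x})\gamma_j/(1-x_j)$, solving the inequality $F_j(\vect{x}) < 1-r$ for $r$ to extract the critical value $r_j$. First I would use Property~\ref{prty03:alpha_positive} to restrict attention to $s>0$, so that all entries $x_k$ are strictly positive and $\alpha(\vect x)$ is well defined.

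Next I would rewrite $F_j(\vect{x})$ by dividing the sum in $\alpha(\vect{x})^{-1}$ through by $\gamma_j/(1-x_j)$, yielding the convenient form
\begin{equation*}
F_j(\vect{x}) = \frac{1}{1 + \sum_{k\neq j} \frac{\gamma_k/(1-x_k)}{\gamma_j/(1-x_j)}}.
\end{equation*}
The hypothesis $x_j \leq 1-r$ gives $(1-x_j) \geq r$, which lets me lower bound the ratio by $\frac{r}{\gamma_j}\cdot\frac{\gamma_k}{1-x_k}$. Then, because $x_k > 0$ for $k\neq j$, I have $1-x_k < 1$, producing the further lower bound $\frac{r\gamma_k}{\gamma_j}$. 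This chain of lower bounds on the denominator gives an upper bound on $F_j(\vect{x})$ that depends only on the $\gamma$'s and on $r$.

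Applying $\sum_{k\neq j}\gamma_k = 1-\gamma_j$ collapses the bound to $F_j(\vect{x}) < \gamma_j / \bigl(\gamma_j + (1-\gamma_j)r\bigr)$. Substituting into $1-r-F_j(\vect{x})$ and clearing denominators, the numerator factors as $r(1-\gamma_j)\bigl[(1-2\gamma_j)/(1-\gamma_j) - r\bigr]$. Reading off the bracket yields precisely the threshold $r_j = (1-2\gamma_j)/(1-\gamma_j)$, and nonnegativity of the numerator whenever $r \leq r_j$ delivers $F_j(\vect{x}) < 1-r$ as claimed. I do not foresee any serious obstacle here; the main technical care is to justify each inequality in the chain strictly (so we get strict inequality in the final conclusion), which hinges on $x_k > 0$ for all $k\neq j$ (supplied by Property~\ref{prty03:alpha_positive}) and on $n\geq 3$ ensuring at least two such indices.

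For the star topology case, I would simply invoke Fact~\ref{fact:star_gamma}: if $\mathcal{G}$ is a star with centre $j$, then $\gamma_j = 1/2$, whence the formula gives $r_j = 0$, and no admissible $r>0$ exists. Conversely, in the non-star case the same fact guarantees $\gamma_j < 1/2$ for every $j$, so $r_j > 0$ and the contraction-like property is meaningful. This dichotomy immediately explains why the contraction-like property can fail near the vertex $\mathbf{e}_j$ of a star graph, consistent with Lemma~\ref{lem:star}.
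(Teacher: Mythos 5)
Your proposal follows essentially the same route as the paper's proof: the identical rewriting of $F_j(\vect{x})$ as $1/\bigl(1+\sum_{k\neq j}\tfrac{\gamma_k/(1-x_k)}{\gamma_j/(1-x_j)}\bigr)$, the same two lower bounds on the denominator via $1-x_j\geq r$ and $1-x_k<1$, the same collapse using $\sum_{k\neq j}\gamma_k=1-\gamma_j$, and the same factorisation of $1-r-F_j(\vect{x})$ to read off $r_j$. The argument is correct, and your handling of the star/non-star dichotomy via Fact~\ref{fact:star_gamma} matches the paper as well.
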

 
This contraction-like result is now used to establish an upper bound on the social power of an individual at equilibrium. We stress here that, it appears that no general result exists for analytical computation of the vector $\vect{x}^*$ given $\vect{\gamma}^\top$. Results exist for some special cases, though, such as for doubly stochastic $\mat{C}$ and for $\mathcal{G}$ with star topology\cite{jia2015opinion_SIAM}. While we do not provide an explicit equality relating $x_i^*$ to $\gamma_i$, we do provide an explicit \emph{inequality}.

\begin{corollary}[Upper bound on $x_i^*$]\label{cor:x_i_upper}
Suppose that $n\geq 3$ and $\vect{x}(0) \in \wt{\Delta}_n$. Suppose further that $\mathcal{G}$ is strongly connected, and is not a star graph. Then, $x_i^* < \gamma_i/(1-\gamma_i)$.
\begin{proof}
Lemma~\ref{lem:contract} establishes that, for any $j\in\{1, \hdots, n\}$, if $x_j \geq 1-r_j$, then the map will always contract in that $F_j(\vect{x}(s)) < x_j$. This is proved as follows. Suppose that $x_j \geq 1-r_j$. Define $r = 1-x_j$, which satisfies $r \leq r_j$ as in Lemma~\ref{lem:contract}. Then, we have $F_j (\vect{x}) < 1 - r = x_j$. It is then straightforward to conclude that the map $\vect{F}(\vect{x})$ continues to contract towards the centre of the simplex $\Delta_n$ until $x_i(s) < 1- r_i, \forall\,i$, where $r_i$ is given by \eqref{eq:lem_contract_rj}. 

Suppose that $x_j^* \geq 1 - r_j = \gamma_j/(1-\gamma_j)$. According to the arguments in the paragraph above, we have $F_j (\vect{x}^*) < 1 - r_j \leq x_j^*$. On the other hand, the definition of $\vect{x}^*$ as a fixed point of $\vect{F}$ implies that $x_j^* = F_j(\vect{x}^*)$, which leads to a contradiction. Therefore, $x_j^* < 1 - r_j = \gamma_j/(1-\gamma_j)$ as claimed. 
\end{proof}
\end{corollary}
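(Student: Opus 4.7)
The plan is a short proof by contradiction that leverages Lemma~\ref{lem:contract} directly; the quantitative content of the upper bound is already encoded in the threshold $r_j$.

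First I would record the algebraic identity
\begin{equation*}
1 - r_j \;=\; 1 - \frac{1-2\gamma_j}{1-\gamma_j} \;=\; \frac{\gamma_j}{1-\gamma_j},
\end{equation*}
which identifies the claimed upper bound $\gamma_j/(1-\gamma_j)$ with the contraction threshold $1-r_j$ from Lemma~\ref{lem:contract}. Since $\mathcal{G}$ is strongly connected but not a star, Fact~\ref{fact:star_gamma} gives $\gamma_j < 1/2$ for every $j$, and hence $r_j > 0$ for every $j$. This is exactly the regime in which Lemma~\ref{lem:contract} is applicable.

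Next I would argue by contradiction. Suppose that for some index $j$ we have $x_j^* \geq \gamma_j/(1-\gamma_j) = 1 - r_j$. By Theorem~\ref{thm:contract_DF} the fixed point satisfies $\vect{x}^* \in \text{int}(\Delta_n)$, so in particular $x_j^* < 1$. Therefore the quantity $r := 1 - x_j^*$ lies in the range $0 < r \leq r_j$, which is precisely the hypothesis needed to invoke Lemma~\ref{lem:contract}. Applying that lemma at $\vect{x} = \vect{x}^*$ with this choice of $r$ gives
\begin{equation*}
F_j(\vect{x}^*) \;<\; 1 - r \;=\; x_j^*,
\end{equation*}
which directly contradicts the fixed-point identity $x_j^* = F_j(\vect{x}^*)$. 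Hence $x_j^* < \gamma_j/(1-\gamma_j)$ for every $j$.

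There is no real obstacle beyond recognising that the threshold in Lemma~\ref{lem:contract} already \emph{is} the desired bound; the only care needed is to confirm $r > 0$ (guaranteed by $\vect{x}^* \in \text{int}(\Delta_n)$) and $r \leq r_j$ (guaranteed by the standing assumption $x_j^* \geq 1-r_j$), so that Lemma~\ref{lem:contract} applies without issue. No additional inequality manipulation or strengthening of the contraction argument is required.
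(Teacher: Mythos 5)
Your proposal is correct and follows essentially the same route as the paper: both argue by contradiction, setting $r = 1 - x_j^*$ so that $0 < r \leq r_j$, invoking Lemma~\ref{lem:contract} to get $F_j(\vect{x}^*) < x_j^*$, and contradicting the fixed-point identity. Your version is, if anything, slightly tidier in making explicit that $x_j^* < 1$ (hence $r>0$) follows from $\vect{x}^* \in \text{int}(\Delta_n)$.
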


Note that this result is separate from the result of Theorem~\ref{thm:contract_DF}, which concluded exponential convergence to a unique fixed point, $\vect{x}^*$. Here, we established an upper bound for the values of the entries of the unique fixed point $\vect{x}^*$, i.e., the social power at equilibrium, given $\vect{\gamma}$.

We mention two specific conclusions following from Corollary~\ref{cor:x_i_upper}. Firstly, suppose that $\mathcal{G}$ has star topology with centre node $v_1$. Then, $\gamma_1 = 0.5$ according to Fact~\ref{fact:star_gamma}, and thus $x_i$ does not contract. This is consistent with the findings in \cite{jia2015opinion_SIAM}, i.e., Lemma~\ref{lem:star}. Secondly, suppose that $\mathcal{G}$ is strongly connected and that $\gamma_i < 1/3,\,\forall\, i\in \{1, \hdots, n\}$. Then, no individual in the social network will have more than half of the total social power at equilibrium, i.e., $x_i^* < 1/2,\,\forall\,i \in \{1, \hdots, n\}$. This second result is relevant as it provides a sufficient condition on the social network topology to ensure that no individual has a dominating presence in the opinion discussion.

\begin{remark}\label{rem:bound_tightness}[Tightness of the Bound]
The tightness of the bound $x_i^* < \gamma_i/(1-\gamma_i)$ increases as $\gamma_k$  decreases $\forall\,k\neq i$. This is in the sense that the ratio $x_i^*(1-\gamma_i)/\gamma_i$ approaches $1$ from below as $\gamma_k$ decreases $\forall\,k\neq i$. We draw this conclusion by noting that in order to obtain \eqref{eq:lem_contract_04}, we make use of the inequality $1 - x_k < 1$. From the fact that $1-x_k$ approaches $1$ as $x_k \to 0$, and because the contraction-like property of Lemma~\ref{lem:contract} holds for $x_k \geq \gamma_k/(1-\gamma_k)$, we conclude that the tightness of the bound $x_i^* < \gamma_i/(1-\gamma_i)$ increases as $\gamma_k$ decreases $\forall\,k \neq i$. If there is a single individual $i$ with $\gamma_i \gg \gamma_k,\forall\,k\neq i$, we are in fact able to accurately estimate $x_i^*$. If $\gamma_i \geq 1/3$, and $n$ is large, then we are able to say, with reasonable confidence, that individual $i$ will hold more than half of the total social power at equilibrium, i.e., $x_i^* \geq 0.5$ is highly likely.
\end{remark}

\subsection{Convergence Rate for a Set of $\mat{C}$ Matrices}

We now present a result on the convergence rate for a constant $\mat{C}$ which is in a subset of all possible $\mat{C}$ matrices. 
\begin{lemma}[Convergence Rate]\label{lem:convergence_rate}
Suppose that $\mat{C}\in \mathcal{L}$, where $\mathcal{L} = \{\mat{C} \in \mathbb{R}^{n\times n} : \gamma_i < 1/3, \forall\,i, \; n \geq 3\}$\footnote{According to Fact~\ref{fact:star_gamma}, $\mathcal{L}$ does not contain any $\mat{C}$ whose associated graph has a star topology.} and $\gamma_i$ is the $i^{th}$ entry of the dominant left eigenvector $\vect{\gamma}^\top$ associated with $\mat{C}$. Then, for the system \eqref{eq:DF_system}, with $\vect{x}(0) \in \wt{\Delta}_n$, there exists a finite $s_1$ such that, for all $s \geq s_1$, there holds $
\Vert \mat{J}_{\vect{F}(\vect{x}(s))} \Vert_1 \leq 2\beta - \epsilon < 1 - \eta$, where $\beta = \max_i \gamma_i/(1-\gamma_i) < 1/2$ and $\epsilon, \eta$ are arbitrarily small positive constants. For $s \geq s_1$, the system \eqref{eq:DF_system} contracts to its unique equilibrium point $\vect{x}^*$ with a convergence rate obeying
\begin{equation*}
\Vert \vect{x}^* - \vect{x}(s+1) \Vert_1 \leq (2\beta - \epsilon) \Vert \vect{x}^* - \vect{x}(s) \Vert_1
\end{equation*}
\end{lemma}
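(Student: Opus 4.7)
The plan is to compute an explicit expression for $\Vert \mat{J}_{\vect{F}(\vect{x}(s))} \Vert_1$ by summing the absolute values of the Jacobian entries in \eqref{eq:J_diag} and \eqref{eq:J_offdiag} column by column, invoke the strict inequality $x_j^* < \gamma_j/(1-\gamma_j)$ from Corollary~\ref{cor:x_i_upper} together with $\vect{x}(s) \to \vect{x}^*$ from Theorem~\ref{thm:contract_DF} to secure the claimed bound on the norm, and finally convert the norm bound into the contraction inequality via the integral form of the mean value theorem applied along the segment joining $\vect{x}^*$ and $\vect{x}(s)$.

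First I would verify that the $j$th column sum of $|\mat{J}_{\vect{F}(\vect{x}(s))}|$, using $\sum_{i\neq j} x_i(s+1) = 1 - x_j(s+1)$, collapses to
\[
\frac{x_j(s+1)}{1-x_j(s)}\Bigl[(1-x_j(s+1)) + \sum_{i\neq j} x_i(s+1)\Bigr] = \frac{2 x_j(s+1)(1-x_j(s+1))}{1-x_j(s)},
\]
so that $\Vert \mat{J}_{\vect{F}(\vect{x}(s))} \Vert_1 = \max_j \frac{2 x_j(s+1)(1-x_j(s+1))}{1-x_j(s)}$. Because $\mat{C} \in \mathcal{L}$ forces $\gamma_i < 1/3$, we have $\beta = \max_i \gamma_i/(1-\gamma_i) < 1/2$; and Corollary~\ref{cor:x_i_upper} gives $x_j^* < \gamma_j/(1-\gamma_j) \leq \beta$ \emph{strictly} for every $j$, so the value of the column-sum expression at $\vect{x}^*$ is $\max_j 2 x_j^*$, which satisfies $\max_j 2 x_j^* \leq 2\beta - 2\epsilon$ for a suitably small $\epsilon > 0$; the same $\epsilon$ can be shrunk further so that $2\beta - \epsilon < 1 - \eta$ with $\eta > 0$ arbitrarily small. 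Since the column-sum expression is a continuous function of $\vect{x}$ on $\text{int}(\Delta_n)$, there is a convex neighbourhood $U \subset \text{int}(\Delta_n)$ of $\vect{x}^*$ on which this expression stays below $2\beta - \epsilon$; by Theorem~\ref{thm:contract_DF}, $\vect{x}(s) \to \vect{x}^*$, so there exists a finite $s_1$ with $\vect{x}(s) \in U$ for all $s \geq s_1$, which delivers the first part of the claim.

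For the contraction rate, convexity of $U$ ensures that whenever $\vect{x}(s) \in U$ the segment $\{\vect{x}^* + t(\vect{x}(s) - \vect{x}^*) : t \in [0,1]\}$ also lies in $U$. Applying the integral mean value theorem to the smooth map $\vect{F}$ across this segment,
\[
\vect{x}(s+1) - \vect{x}^* = \vect{F}(\vect{x}(s)) - \vect{F}(\vect{x}^*) = \Bigl(\int_0^1 \mat{J}_{\vect{F}}\bigl(\vect{x}^* + t(\vect{x}(s) - \vect{x}^*)\bigr)\, dt \Bigr)\bigl(\vect{x}(s) - \vect{x}^*\bigr),
\]
and taking $1$-norms with the uniform bound $(2\beta - \epsilon)$ on the Jacobian norm along the segment yields $\Vert \vect{x}^* - \vect{x}(s+1)\Vert_1 \leq (2\beta - \epsilon)\Vert \vect{x}^* - \vect{x}(s)\Vert_1$ for every $s \geq s_1$, as required. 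The main obstacle is securing the \emph{strict} gap $\max_j 2 x_j^* < 2\beta$ that makes room for a positive $\epsilon$: this is precisely where the strict inequality of Corollary~\ref{cor:x_i_upper} is indispensable, since the weaker bound $x_j^* \leq \beta$ alone would only deliver $\Vert \mat{J} \Vert_1 \leq 2\beta$, which is insufficient to conclude a uniform exponential rate.
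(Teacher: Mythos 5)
Your proposal is correct and follows essentially the same route as the paper: both reduce $\Vert \mat{J}_{\vect{F}(\vect{x}(s))}\Vert_1$ to $\max_j 2x_j(s+1)(1-x_j(s+1))/(1-x_j(s))$ via the column sums of the Jacobian, invoke the strict bound $x_j^* < \gamma_j/(1-\gamma_j)$ of Corollary~\ref{cor:x_i_upper} together with convergence to $\vect{x}^*$ to secure the uniform bound $2\beta-\epsilon$ for all $s \geq s_1$, and then pass to the contraction inequality. The only cosmetic differences are that the paper bounds the trajectory components $x_i(s)\leq\beta-\epsilon/2$ directly and exploits monotonicity of $x(1-x)$ on $[0,1/2]$, whereas you evaluate the norm at $\vect{x}^*$ and argue by continuity, and that you spell out the mean-value-theorem step on the segment joining $\vect{x}^*$ and $\vect{x}(s)$ which the paper compresses into a one-line Lipschitz remark.
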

\begin{proof}
From Corollary~\ref{cor:x_i_upper}, we conclude that $x_i^* < \beta_i$ where $\beta_i = \gamma_i/(1-\gamma_i) < 1/2$. Defining $\beta = \max_i \beta_i$, we conclude that $x_i^* \leq \beta - \epsilon_1$ for all $i$, where $\epsilon_1$ is an arbitrarily small positive constant. Note that we already established an exponential convergence result in Theorem~\ref{thm:contract_DF} and an asymptotic result in Lemma~\ref{lem:contract}, but that does not imply that $x_i(s) \leq \beta - \epsilon_1$ for some finite $s$. However, we are able to conclude that there exists a strictly positive $\epsilon$ satisfying  $\epsilon/2 < \epsilon_1$ and $s_1 < \infty$ such that $x_i(s) \leq \beta - \epsilon/2$ for all $s\geq s_1$. 

The Jacobian $\mat{J}_{\vect{F}(\vect{x}(s))}$ has column sum equal to 1. We obtain this fact by observing that, for any $i$,
\begin{align}
\frac{\partial F_i}{\partial x_i}& + \sum_{j = 1, j\neq i}^n \frac{\partial F_j}{\partial x_i} \nonumber \\
& = x_i(s+1) \frac{1 - x_i(s+1)}{1 - x_i(s)} - \sum_{j = 1, j\neq i}^n \frac{x_i(s+1) x_j(s+1) }{1 - x_i(s)} \nonumber \\
& = \frac{x_i(s+1)}{1-x_i(s)}\left[ 1 - x_i(s+1) - \sum_{j = 1, j\neq i}^n x_j(s+1) \right] = 0 \nonumber
\end{align}
because $x_i(s+1) + \sum_{j = 1, j\neq i}^n x_j(s+1) = 1$ by definition. Note also that the diagonal entries of the Jacobian are strictly positive and for $s \geq s_1$, there holds $\partial F_i/\partial x_i \leq \beta - \epsilon/2$, $\forall\,i$. This is because $x_i(1-x_i) \leq (\beta-\epsilon/2)(1-\beta+\epsilon/2)$  for $x_i \leq \beta-\epsilon/2 < 0.5$ and $1/(1-x_i) \leq 1/(1-\beta+\epsilon/2)$. Combining the column sum property and the fact that the off-diagonal entries of the Jacobian are strictly negative, we conclude that for $s \geq s_1$, there holds $\Vert \mat{J}_{\vect{F}(\vect{x}(s))}\Vert_1 = 2 \max_i \partial F_i/\partial x_i \leq 2\beta - \epsilon < 1-\eta$ where $\eta$ is an arbitrarily small positive constant.

The quantity $2\beta - \epsilon$, which is a Lipschitz constant associated with the iteration, upper bounds the $1$-norm of the untransformed Jacobian, and therefore is a lower bound on the convergence rate of the system. 
In fact, under the special assumption that $\gamma_i < 1/3,\, \forall\,i$, we are able to work directly with the Jacobian $\mat{J}_{\vect{F}}$, as opposed to the transformed Jacobian $\mat{H}$. It is in general much more difficult to compute an upper bound on $\Vert \mat{H} \Vert_1$ using $\vect{\gamma}$ and Corollary~\ref{cor:x_i_upper} when $\exists\,i : \gamma_i \geq 1/3$. 
\end{proof}

Note that $\mathcal{L}$ includes many of the topologies likely to be encountered in social networks. Topologies for which $\gamma_i \geq 1/3$ for some $i$ will have an individual who holds more than half the social power at equilibrium. Such topologies are more reflective of autocracy-like or dictatorship-like networks, as opposed to a group of equal peers discussing their opinions.

\section{Dynamic Relative Interaction Topology}\label{sec:dyn_top}

In this section, we will explore the evolution of individual social power when the relative interaction topology is \emph{issue- or individual-driven}, i.e., $\mat{C}(s)$ is a function of $s$. Motivations for dynamic $\mat{C}(s)$ have been discussed in detail in Sections~\ref{sec:intro} and \ref{sec:background}. This section will establish a theoretical result on the problem of dynamic $\mat{C}(s)$, conjectured and studied extensively with simulations in \cite{friedkin2016tevo_power} but without any proofs. In our earlier work \cite{ye2017DF_IFAC}, we provided analysis on the special case of periodically varying $\mat{C}(s)$, showing the existence of a periodic trajectory. This section provides complete analysis for general switching $\mat{C}(s)$ and extends the periodic result in \cite{ye2017DF_IFAC} as a special case.

Suppose that for a given social network with $n \geq 3$ individuals, there is a finite set $\mathcal{C}$ of $P$ possible relative interaction matrices, defined as $\mathcal{C} = \{\mat{C}_p \in \mathbb{R}^{n\times n} : p \in \mathcal{P}\}$ where $\mathcal{P} = \{1, 2, \ldots, P\}$. We assume that Assumption~\ref{assm:C_matrix} holds for all $\mat{C}_p,\, p \in \mathcal{P}$. For simplicity, we assume that $\nexists\,p$ such that the graph $\mathcal{G}_p$ associated with $\mat{C}_p$ has star topology. Let $\sigma(s) : [0,\infty) \to \mathcal{P}$ be a piecewise constant switching signal, determining the dynamic switching as $\mat{C}(s) = \mat{C}_{\sigma(s)}$. Then, the DeGroot-Friedkin model with dynamic relative interaction matrices is given by
\begin{equation}\label{eq:DF_system_dyn}
\vect x(s+1) = \vect F_{\sigma(s)}(\vect x(s))
\end{equation}
where the nonlinear map $\vect F_p(\vect x(s))$ for $p \in \mathcal{P}$, is defined as
\begin{align}\label{eq:map_F_DF_dyn}
\vect F_p( \vect x(s) ) =   \begin{cases} 
   \mathbf e_i & \hspace*{-6pt} \text{if } \vect x(s) = \mathbf e_i \,\, \text{for any } i \\ \\
   \alpha_p (\vect x(s)) \begin{bmatrix} \frac{\gamma_{p, 1}}{1-x_1(s)} \\ \vdots \\ \frac{\gamma_{p, n}}{1-x_n(s)} \end{bmatrix}       & \text{otherwise }
  \end{cases}
\end{align}
where $\alpha_p(\vect x(s)) = 1/\sum_{i=1}^n \frac{\gamma_{p,i}}{1- x_i(s)}$ and $\gamma_{p,i}$ is the $i^{th}$ entry of the dominant left eigenvector of $\mat{C}_p$, \mbox{$\vect{\gamma}_p = [\gamma_{p,1}, \gamma_{p,2}, \hdots, \gamma_{p,n}]^\top$}. Note that the derivation for \eqref{eq:map_F_DF_dyn} is a straightforward extension of the derivation \eqref{eq:map_F_DF} using Lemma 2.2 in \cite{jia2015opinion_SIAM}, from constant $\mat{C}$ to $\mat{C}(s) = \mat{C}_{\sigma(s)}$. We therefore omit this step. 

\begin{remark}
The system \eqref{eq:DF_system_dyn} is a nonlinear discrete-time switching system, which makes analysis using the usual techniques for switched systems difficult. For arbitrary switching, one might typically seek to find a common Lyapunov function, i.e., one which would establish convergence for any fixed value of $p \in \mathcal{P}$. This, however, appears to be difficult (if not impossible) for \eqref{eq:DF_system_dyn}. In the constant $\mat{C}$ case studied in \cite{jia2015opinion_SIAM}, the convergence result relied on $1)$ a Lyapunov function which was dependent on the unique equilibrium point $\vect{x}^*$, and $2)$ LaSalle's Invariance Principle for discrete-time systems. Both $1)$ and $2)$ are invalid when analysing \eqref{eq:DF_system_dyn}. In the case of $1)$, the system \eqref{eq:DF_system_dyn} does not have a unique equilibrium point $\vect{x}^*$ but rather a unique trajectory $\vect{x}^*(s)$ (as will be made clear in the sequel). In the case of $2)$, LaSalle's Invariance Principle is not applicable to general non-autonomous systems.  
\end{remark}

\subsection{Convergence for Arbitrary Switching}

We now state the main result of this section, the proof of which turns out to be fairly straightforward. This is a consequence of the analysis framework arising from the techniques used in the proof of Theorem~\ref{thm:contract_DF}. Note that in the theorem statement immediately below, a relaxation of the initial conditions is made; we no longer require $\sum_i x_i(0) = 1$. A social interpretation of this is given in Remark~\ref{rem:initial_relax} just following the theorem.

\begin{theorem}\label{thm:contract_dyn_DF}
Suppose that $\nexists\,p$ such that $\mat{C}_p \in \mathcal{C}$ is associated with a star topology graph. Then, system \eqref{eq:DF_system_dyn}, with initial conditions $0\leq x_i(0) < 1,\forall\,i$ and $\exists\,j : x_j(0) > 0$, converges exponentially fast to a unique trajectory $\vect{x}^*(s) \in \text{int}(\Delta_n)$. In other words, each individual $i$ forgets its initial estimate of its own social power, $x_i(0)$, at an exponential rate. For any given $s$, $\vect{x}^*(s+1)$ is determined solely by $\vect{\gamma}_{\sigma(s)}$. If $\vect{x}(0) = \mathbf{e}_i$ for some $i$, then $\vect{x}(s) = \mathbf{e}_i$ for all $s$.
\end{theorem}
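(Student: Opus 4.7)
The plan is to lift the contraction argument of Theorem~\ref{thm:contract_DF} to the switched setting, exploiting a structural observation hidden in the constant-$\mat{C}$ proof: after the coordinate transformation $\delta\vect{z}(s) = \mat{\Theta}(\vect{x}(s)) \delta\vect{x}(s)$, the entries of $\mat{H}(\vect{x}(s+1))$ simplify to $h_{ii} = x_i(s+1)$ and $h_{ij} = -x_i(s+1) x_j(s+1)/(1-x_i(s+1))$, which depend only on $\vect{x}(s+1)$ and not explicitly on whichever $\mat{C}_p$ produced $\vect{x}(s+1)$ from $\vect{x}(s)$. Consequently the inequality \eqref{eq:H_norm_ineq} and the uniform bound $\Vert \mat{H}_p(\vect{x}(s+1)) \Vert_1 < 1-\eta$ carry over unchanged for every $p \in \mathcal{P}$. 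This is the single structural reason the argument goes through under arbitrary switching.

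First I would dispose of the vertex case: if $\vect{x}(0) = \mathbf{e}_i$, the definition of $\vect{F}_p$ in \eqref{eq:map_F_DF_dyn} gives $\vect{F}_p(\mathbf{e}_i) = \mathbf{e}_i$ for every $p$, so $\vect{x}(s) \equiv \mathbf{e}_i$. Otherwise, the relaxed initial condition $0 \leq x_i(0) < 1$ with at least one $x_j(0) > 0$ falls in the ``otherwise'' branch of \eqref{eq:map_F_DF_dyn}; since $\gamma_{p,i} > 0$ by irreducibility of each $\mat{C}_p$ and $\alpha_{\sigma(0)}(\vect{x}(0)) > 0$, every component of $\vect{x}(1)$ is strictly positive with $\vect{1}_n^\top \vect{x}(1) = 1$, placing $\vect{x}(1) \in \text{int}(\Delta_n)$; this is precisely the social interpretation advertised in the footnote to \eqref{eq:x_update}. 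Next I would build a common invariant contraction region. Applying Lemma~\ref{lem:contract} to each $\mat{C}_p$ yields constants $r_{p,j} = (1-2\gamma_{p,j})/(1-\gamma_{p,j})$, all strictly positive by Fact~\ref{fact:star_gamma} since no $\mat{C}_p$ has star topology; finiteness of $\mathcal{P}$ then lets me set $r^* = \min_{p,j} r_{p,j} > 0$. The compact convex set $\mathcal{A}^* = \{\vect{x} \in \Delta_n : x_i \leq 1-r^*,\ \forall i\}$ is then $\vect{F}_p$-invariant for every $p$, and iterating Lemma~\ref{lem:contract} starting from $\vect{x}(1) \in \text{int}(\Delta_n)$ puts $\vect{x}(s) \in \mathcal{A}^*$ for all sufficiently large $s$.

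Inside $\mathcal{A}^*$ the transformed virtual dynamics becomes $\delta\vect{z}(s+1) = \mat{H}_{\sigma(s)}(\vect{x}(s+1))\,\delta\vect{z}(s)$, and by the opening observation the inequality $\Vert \mat{H}_{\sigma(s)}(\vect{x}(s+1)) \Vert_1 < 1-\eta$ holds uniformly in $s$ and in the switching signal $\sigma(\cdot)$. Hence $\delta\vect{z}(s) \to \vect{0}_n$, and by uniform nonsingularity of $\mat{\Theta}$ also $\delta\vect{x}(s) \to \vect{0}_n$, both exponentially fast. Invoking Theorem~\ref{thm:gen_contraction_result} and Corollary~\ref{cor:convex_contract_region} on the convex compact region $\mathcal{A}^*$ then delivers a unique trajectory $\vect{x}^*(s)$ to which every solution converges exponentially; that $\vect{x}^*(s) \in \text{int}(\Delta_n)$ follows by applying Property~\ref{prty03:alpha_positive} to each $\vect{F}_p$. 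The identity $\vect{x}^*(s+1) = \vect{F}_{\sigma(s)}(\vect{x}^*(s))$ then makes precise the claim that $\vect{x}^*(s+1)$ is controlled by $\vect{\gamma}_{\sigma(s)}$ (and the history encoded in $\vect{x}^*(s)$), with no residual dependence on any initial self-weight estimate $\vect{x}(0)$. The main obstacle is essentially the opening observation itself: once one recognises that $\mat{\Theta}$ absorbs the $\gamma_{p,i}$'s and renders the contraction computation effectively $\mat{C}$-free, the remainder is a careful reassembly of the single-$\mat{C}$ proof around the common invariant set $\mathcal{A}^*$, and no joint-spectral-radius argument (see Remark~\ref{rem:schur_stable}) is needed.
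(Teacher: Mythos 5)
Your proposal is correct and follows essentially the same route as the paper: the decisive observation that $\mat{H}(\vect{x}(s+1))$ depends only on $\vect{x}(s+1)$ and not on which $\mat{C}_p$ generated it, so the bound $\Vert\mat{H}\Vert_1<1-\eta$ is switching-independent, is exactly the paper's argument, as is the handling of the vertex case, the relaxed initial conditions, and the appeal to Theorem~\ref{thm:gen_contraction_result} and Corollary~\ref{cor:convex_contract_region} on a common compact convex invariant set. The only (inessential) difference is that you build that set from the explicit radii of Lemma~\ref{lem:contract}, whereas the paper takes $\bar r=\min_p r_p$ from the per-map version of Property~\ref{prty02:contraction_like}.
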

\begin{proof}
It is straightforward to conclude that Property~\ref{prty01:continuousDF}, as stated at the beginning of Section~\ref{sec:constant_C}, holds for each map $\vect{F}_p$. With initial conditions $x_i(0) < 1$, the map $\vect{F}_{\sigma(0)}(\vect{x}(s)) \neq \mathbf{e}_i$ for any $i$. We also easily verify that with these initial conditions, the matrix $\mat{W}(0)$ is row-stochastic, irreducible and aperiodic, which implies that the opinions converge for $s =0$ as in the constant $\mat{C}$ case. Because $\mat{C}(0)$ is irreducible, this implies that $\gamma_{\sigma(0),i} > 0$ for all $i$, and we conclude that $\alpha_{\sigma(0)}(\vect{x}(0)) > 0$ because $\exists\,j : x_j(0) > 0$. We thus conclude that $\vect{x}(1) = \vect{F}_{\sigma(0)}(\vect{x}(0)) \succ 0$, i.e., for issue $s = 1$, every individual's social power/self-weight is strictly positive, and the sum of the weights is 1. 

Moreover, because $\mat{C}_p$ is irreducible $\forall\,p$, this implies that for any $p$, there holds $\gamma_{p,i} > 0$ for all $i$. It follows that for $s\geq 1$, $\alpha_{\sigma(s)}(\vect{x}(s)) > 0$, which in turn guarantees that $\vect{x}(s+1) = \vect{F}_{\sigma(s)}(\vect{x}(s)) \succ 0$, i.e., $\vect{x}(s) \in \text{int}(\Delta_n)$ for all $s > 0$. This satisfies the requirements \cite{jia2015opinion_SIAM} on $\vect{x}(s)$ which ensures that $\forall\,s$, $\mat{W}(s)$ is row-stochastic, irreducible, and aperiodic, which implies that opinions converge for every issue. If $\vect{x}(0) = \mathbf{e}_i$ for some $i$, then \eqref{eq:map_F_DF_dyn} leads to the conclusion that $\vect{x}(s) = \mathbf{e}_i$ for all $s$.   

Denote the $i^{th}$ entry of $\vect{F}_p$ by $F_{p,i}$. Regarding Property~\ref{prty02:contraction_like}, stated at the beginning of Section~\ref{sec:constant_C}, for each map $\vect{F}_p$, define the set $\mathcal{A}_p(r_p) = \{\vect{x} \in \Delta_n : 1-r_p \geq x_i \geq 0, \forall\,i \in \{1, \hdots, n\}\}$, where $0 < r_p \ll 1$ is sufficiently small such that $x_i(s) \leq 1-r_p$ for all $i$, which implies that $F_{p,i}(\vect{x}(s)) = x_i(s+1) < 1-r_p$. Define $\bar{\mathcal{A}} = \{\vect{x} \in \Delta_n : 1-\bar{r} \geq x_i \geq 0, \forall\,i \in \{1, \hdots, n\}\}$ where $\bar{r} = \min_p r_p$. Because $\vect{F}_p(\bar{\mathcal{A}}) \subset \bar{\mathcal{A}}$, it follows that $\cup_{p=1}^P \mathcal{A}_p \subset \bar{\mathcal{A}}$, and that for the system \eqref{eq:DF_system_dyn}, for all $s> 0$, $\vect{x}(s) \in \bar{\mathcal{A}}$.

Denoting the Jacobian for the system \eqref{eq:DF_system_dyn} at issue $s$ as $\mat{J}_{\vect{F}_{\sigma(s)}} = \{\frac{\partial F_{\sigma(s),i}}{\partial x_j}\}$, we obtain
\begin{align*}
\frac{\partial F_{\sigma(s),i}}{\partial x_i}(\vect{x}(s)) & = \frac{\gamma_{\sigma(s),i} \alpha_{\sigma(s)}(\vect x(s)) }{(1 - x_i(s))^2} - \frac{\left[\gamma_{\sigma(s),i} \alpha_{\sigma(s)}(\vect x(s))\right]^2 }{(1-x_i(s))^3} \nonumber \\
& = x_i(s+1) \frac{1 - x_i(s+1)}{1 - x_i(s)} 
\end{align*}
Similarly, we obtain, for $j \neq i$,
\begin{align*}
\frac{\partial F_{\sigma(s),i}}{\partial x_j}(\vect{x}(s)) & = - \frac{\gamma_{\sigma(s),i} \gamma_{\sigma(s),j} \left[\alpha_{\sigma(s)}(\vect x(s))\right]^2 }{(1-x_i(s))(1 - x_j(s))^2} \nonumber \\
& = - \frac{x_i(s+1) x_j(s+1) }{1 - x_j(s)}
\end{align*}
Comparing to \eqref{eq:J_diag} and \eqref{eq:J_offdiag}, we note that the Jacobian of the non-autonomous system \eqref{eq:DF_system_dyn} with map \eqref{eq:map_F_DF_dyn} is expressible in the same form as the Jacobian of the original system \eqref{eq:DF_system} with map \eqref{eq:map_F_DF}. More precisely, it can be expressed in a form which is dependent on the trajectory of the system, and not explicitly dependent on $s$. Using the same transformation of $\delta\vect{z}$ given in \eqref{eq:gen_virt_disp} with the same $\mat{\Theta}(\vect{x}(s))$, we obtain the exact same transformed virtual dynamics \eqref{eq:gen_virt_dyn}, expressed as 
\begin{align}
\delta\vect{z}(s+1) = \mat{H}(\vect{x}(s+1))\delta\vect{z}(s)
\end{align}
and it was shown in the proof of Theorem~\ref{thm:contract_DF} that, for some arbitrarily small $\eta > 0$, there holds $\Vert \mat{H} \Vert_1 < 1 - \eta$ for all $\vect{x}(s) \in \bar{\mathcal{A}}$, independent of $p\in \mathcal{P}$. It follows that $\delta\vect{x}(s) \to \vect{0}_n$ exponentially fast for all $\vect{x}(s) \in \bar{\mathcal{A}}$. We thus conclude that $\bar{\mathcal{A}}$ is a generalised contraction region. Again, because $\bar{\mathcal{A}}$ is compact and convex, it follows from Theorem~\ref{thm:gen_contraction_result} and Corollary~\ref{cor:convex_contract_region} that all trajectories of $\vect{x}(s+1) = \vect{F}_{\sigma(s)}(\vect{x}(s))$ converge exponentially \emph{to a single trajectory}, which we denote $\vect{x}^*(s)$. We established earlier that $\vect{x}^*(s) \in \text{int}(\Delta_n)$. 

Exponential convergence to a single unique trajectory can be considered from another point of view as the system \eqref{eq:DF_system_dyn} \emph{forgetting its initial conditions at an exponential rate}. Note also that in one sense, $\mat{F}_{\sigma(s)}$ in \eqref{eq:map_F_DF_dyn} is parametrised by $\vect{\gamma}_{\sigma(s)}$. We conclude from these two points that the unique trajectory $\vect{x}^*(s)$ is such that $\vect{x}^*(s+1)$ depends only on $\vect{\gamma}_{\sigma(s)}$.

Finally, following the same analysis as in [pp.393, \cite{jia2015opinion_SIAM}], one can show that $\lim_{s\to\infty} \vect{\zeta}(s) = \vect{x}^*(s)$ and $\lim_{s\to\infty} \mat{W}(\vect{x}(s)) \!=\! \mat{X}^*(s) \!+\! (\mat{I}_n \!-\! \mat{X}^*(s))\mat{C}(s) \!=\! \mat{W}(\vect{x}^*(s))$. 
\end{proof}

The above result implies that the system \eqref{eq:DF_system_dyn}, with initial conditions satisfying $0\leq x_i(0) < 1,\forall\,i$ and $\exists\,j : x_j(0) > 0$, converges to a unique trajectory $\vect{x}^*(s)$ as $s\to \infty$. For convenience in future discussions and presentation of results, we shall call this the \emph{unique limiting trajectory of \eqref{eq:DF_system_dyn}}. This is a limiting trajectory in the sense that $\lim_{s\to\infty} \vect{x}(s) = \vect{x}^*(s)$.

\begin{remark}[Relaxation of the initial conditions]\label{rem:initial_relax}
Theorem~\ref{thm:contract_dyn_DF} contains a mild relaxation of the initial conditions of the original DeGroot-Friedkin model, and provides a more reasonable interpretation from a social context. One can consider $x_i(0)$ as individual $i$'s estimate of its individual social power (or perceived social power) in the group when the social network is first formed and before discussion begins on issue $s = 0$. The original DeGroot-Friedkin model requires $\vect{x}(0) \in \wt{\Delta}_n$ to avoid an autocratic system (an autocratic system is where $\vect{x}(s) = \mathbf{e}_i$ for some $i$, i.e., an individual holds all the social power). However, this is unrealistic because one cannot expect individuals to have estimates such that $\sum_i x_i(0) = 1$. On the other hand, we do show that the unique limiting trajectory satisfies further, as already commented, $\sum x_i(1)=1$, and then easily $\sum x_i(k)=1,\forall k > 1$ and $\vect{x}^*(s) \in \text{int}(\Delta_n)$, i.e., $x_i^*(s) > 0,\forall\,i$ and $\sum_i x_i^*(s) = 1,\forall\,s$. We therefore show that, as long as no individual $i$ estimates its social power to be autocratic $(x_i(0) = 1)$ and at least one individual estimates its social power to be strictly positive $(\exists\,j : x_j(0) > 0)$, then by sequential discussion of issues, \emph{every individual forgets its initial estimate of its individual social power at an exponential rate.} This occurs \textbf{even for dynamic relative interaction topologies}.
\end{remark}

\subsection{Contraction-Like Property with Arbitrary Switching}
We now extend Lemma~\ref{lem:contract}, Corollary~\ref{cor:x_i_upper} and Lemma~\ref{lem:convergence_rate} to the case of dynamic relative interaction matrices. 

\begin{lemma}\label{lem:contract_dyn}
For the system \eqref{eq:DF_system_dyn}, with initial conditions $0\leq x_i(0) < 1,\forall\,i$ and for at least one $k$, $x_k(0) > 0$, define 
\begin{equation}
\bar{r}_j = \frac{1 - 2\bar{\gamma}_j}{1 - \bar{\gamma}_j},\quad j \in \{1, \ldots, n\}
\end{equation}
where $\bar{\gamma}_j = \max_{p\in \mathcal{P}} \gamma_{p,j}$ and $\gamma_{p,j}$ is the $j^{th}$ entry of $\vect{\gamma}_p$. Then, for any $0 < r \leq \bar{r}_j$ and $p \in \mathcal{P}$, there holds
\begin{equation}
x_j \leq 1 - r \Rightarrow F_{p,j}(\vect{x}) < 1-r
\end{equation}
where $F_{p,j}(\vect{x})$ is the $j^{th}$ entry of $\vect{F}_{p}(\vect{x})$.
\end{lemma}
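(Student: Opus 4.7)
The plan is to reduce this statement to Lemma~\ref{lem:contract} applied to each individual relative interaction matrix $\mat{C}_p$ in the set $\mathcal{C}$, and then observe that the uniform bound $\bar{r}_j$ is consistent with every $p$. First I would fix an arbitrary $p \in \mathcal{P}$. By assumption no $\mathcal{G}_p$ has star topology, so Fact~\ref{fact:star_gamma} guarantees $\gamma_{p,j} < 1/2$ for every $j$ and every $p$, whence $\bar{\gamma}_j = \max_{p\in\mathcal{P}} \gamma_{p,j} < 1/2$ and $\bar{r}_j = (1-2\bar{\gamma}_j)/(1-\bar{\gamma}_j) > 0$ is well defined. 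Lemma~\ref{lem:contract}, applied to $\mat{C}_p$, then yields a scalar
\[
r_{p,j} = \frac{1-2\gamma_{p,j}}{1-\gamma_{p,j}} > 0
\]
such that for any $0 < r \le r_{p,j}$, whenever $x_j \le 1-r$ we have $F_{p,j}(\vect{x}) < 1-r$.

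The crux of the argument is to show that $\bar{r}_j \le r_{p,j}$ holds for every $p \in \mathcal{P}$, so that a value $r$ satisfying $0 < r \le \bar{r}_j$ automatically satisfies $0 < r \le r_{p,j}$ for every $p$. Introduce the scalar function $g(\gamma) = (1-2\gamma)/(1-\gamma)$ on $[0,1/2)$, and compute
\[
g'(\gamma) = \frac{-2(1-\gamma) + (1-2\gamma)}{(1-\gamma)^2} = \frac{-1}{(1-\gamma)^2} < 0,
\]
so $g$ is strictly decreasing on $[0,1/2)$. Since $\bar{\gamma}_j \ge \gamma_{p,j}$ by construction, monotonicity gives $\bar{r}_j = g(\bar{\gamma}_j) \le g(\gamma_{p,j}) = r_{p,j}$ for every $p \in \mathcal{P}$, as required.

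Having secured this comparison, the conclusion is immediate: choose any $0 < r \le \bar{r}_j$ and any $p \in \mathcal{P}$; then $r \le r_{p,j}$, so Lemma~\ref{lem:contract} applied to $\mat{C}_p$ gives $x_j \le 1-r \Rightarrow F_{p,j}(\vect{x}) < 1-r$. Since $p$ was arbitrary, the implication holds uniformly across the switching signal. I do not foresee a genuine obstacle here; the only subtlety is verifying the monotonicity of $g$ and invoking Fact~\ref{fact:star_gamma} to rule out $\bar{\gamma}_j = 1/2$, which would degenerate $\bar{r}_j$ to zero. As a minor additional remark, one should note that the initial-condition hypothesis $0 \le x_i(0) < 1$ with some $x_k(0) > 0$ plays the same role it did in Property~\ref{prty03:alpha_positive} for the constant case, ensuring that the iterates remain inside the region where the map $\vect{F}_p$ is defined by the nontrivial branch of \eqref{eq:map_F_DF_dyn}.
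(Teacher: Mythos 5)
Your proposal is correct and follows essentially the same route as the paper, which simply notes that the result of Lemma~\ref{lem:contract} holds separately for every map $\vect{F}_p$ with the given $\bar{r}_j$; your monotonicity argument for $g(\gamma) = (1-2\gamma)/(1-\gamma)$ is exactly the ``straightforward checking'' the paper leaves implicit, showing $\bar{r}_j \le r_{p,j}$ for all $p$.
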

\begin{proof}
The lemma is proved by straightforwardly checking that, for the given definition of $\bar{r}_j$, the result in Lemma~\ref{lem:contract} holds separately for every map $\vect{F}_p, \;p \in \mathcal{P}$. In other words, for all $i,p$, $x_i(s) \leq 1 - r \Rightarrow F_{p,i}(\vect{x}(s)) < 1 - r\,,\forall\,r\leq \bar{r}_i$.
\end{proof}

\begin{corollary}[Upper bound on $x_i^*(s)$]\label{cor:x_i_upper_dyn}
For the system \eqref{eq:DF_system_dyn}, with initial conditions $0\leq x_i(0) < 1,\forall\,i$ and for at least one $j$, $x_j(0) > 0$, there holds $x_i^*(s) \leq \bar{\gamma}_i/(1-\bar{\gamma}_i),\forall\,s$, where $\bar{\gamma}_j = \max_{p\in \mathcal{P}} \gamma_{p,j}$ and $x_i^*(s)$ is the $i^{th}$ entry of the unique limiting trajectory $\vect{x}^*(s)$.
\end{corollary}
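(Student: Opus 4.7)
The plan is to promote Lemma~\ref{lem:contract_dyn} from a component-wise contraction statement into forward invariance of the hyperrectangle
$\bar{\mathcal{B}} = \{ \vect{x} \in \Delta_n : x_i \leq \bar{\gamma}_i/(1-\bar{\gamma}_i),\; \forall i \}$
under every map $\vect{F}_p$, then combine this with the finite-time entry argument from the proof of Corollary~\ref{cor:x_i_upper} and the uniqueness of the limiting trajectory established in Theorem~\ref{thm:contract_dyn_DF}.

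First, I would apply Lemma~\ref{lem:contract_dyn} component-wise. For each index $i$, taking $r = \bar{r}_i = (1-2\bar{\gamma}_i)/(1-\bar{\gamma}_i)$ gives: whenever $x_i(s) \leq 1 - \bar{r}_i = \bar{\gamma}_i/(1-\bar{\gamma}_i)$, one has $F_{p,i}(\vect{x}(s)) < \bar{\gamma}_i/(1-\bar{\gamma}_i)$ for \emph{every} $p \in \mathcal{P}$. Reading this across all coordinates simultaneously yields $\vect{F}_p(\bar{\mathcal{B}}) \subset \bar{\mathcal{B}}$ for each $p$, so $\bar{\mathcal{B}}$ is forward invariant under any switching sequence $\sigma(\cdot)$.

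Second, I would import the contradiction-style argument from the proof of Corollary~\ref{cor:x_i_upper}, adapted to the switched setting: if at some issue $s$ one had $x_i(s) > \bar{\gamma}_i/(1-\bar{\gamma}_i)$ for a particular $i$, setting $r = 1 - x_i(s) < \bar{r}_i$ in Lemma~\ref{lem:contract_dyn} gives $x_i(s+1) = F_{\sigma(s),i}(\vect{x}(s)) < 1 - r = x_i(s)$, so $x_i$ strictly decreases while above the threshold. Using the invariance of the compact set $\bar{\mathcal{A}}$ from the proof of Theorem~\ref{thm:contract_dyn_DF} (which keeps each coordinate uniformly bounded away from $1$ after the first step), one obtains a uniform lower bound on the per-step decrement, so any trajectory enters $\bar{\mathcal{B}}$ in finite time and remains there by forward invariance.

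Finally, uniqueness of the limiting trajectory from Theorem~\ref{thm:contract_dyn_DF} delivers the conclusion: because every trajectory eventually lies in the closed set $\bar{\mathcal{B}}$ and all trajectories converge exponentially to $\vect{x}^*(s)$, we must have $x_i^*(s) \leq \bar{\gamma}_i/(1-\bar{\gamma}_i)$ for all $s$ past the entry time, and forward invariance of $\bar{\mathcal{B}}$ then propagates the bound forward. The main obstacle, and the only real subtlety, is justifying the bound \emph{for all} $s$ (including small $s$) when $\vect{x}^*(0)$ is not a priori known to lie in $\bar{\mathcal{B}}$; the cleanest way to resolve this is to interpret $\vect{x}^*(s)$ as the trajectory singled out by the pullback attraction of the uniformly contracting switched dynamics, which necessarily lives inside the forward-invariant set $\bar{\mathcal{B}}$ at every time index.
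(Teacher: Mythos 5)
Your overall route is the one the paper intends: the paper gives no proof at all for Corollary~\ref{cor:x_i_upper_dyn} beyond the remark that it is ``a straightforward extension of the proof of Corollary~\ref{cor:x_i_upper}'', and that extension is exactly your combination of the per-component bound from Lemma~\ref{lem:contract_dyn}, forward invariance, and the uniqueness of the limiting trajectory from Theorem~\ref{thm:contract_dyn_DF}. Your first step (forward invariance of the set where $x_i \leq \bar{\gamma}_i/(1-\bar{\gamma}_i)$ for every $i$, under every $\vect{F}_p$) and your closing observation about how to interpret ``for all $s$'' for a trajectory that is only pinned down asymptotically are both correct, and the latter is in fact more careful than anything the paper says.

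There is, however, one step in your second paragraph that does not hold as written: the claimed \emph{uniform} lower bound on the per-step decrement, and hence finite-time entry into the closed set $\{x_i \leq \bar{\gamma}_i/(1-\bar{\gamma}_i)\}$. The explicit estimate in the proof of Lemma~\ref{lem:contract}, namely $x_j - F_j(\vect{x}) > r(1-\gamma_j)(r_j - r)/\bigl(\gamma_j + (1-\gamma_j)r\bigr)$ with $r = 1 - x_j$, degenerates as $x_j \downarrow 1 - r_j$ because the factor $r_j - r$ tends to zero; membership in $\bar{\mathcal{A}}$ controls $r$ away from $0$ but not $r_j - r$ away from $0$. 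So a trajectory could in principle approach the threshold monotonically from above without ever crossing it in finite time. The conclusion survives with a small repair: by the component-wise form of Lemma~\ref{lem:contract_dyn}, once $x_i(s_0) \leq 1-\bar{r}_i$ one has $x_i(s) < 1-\bar{r}_i$ for all $s > s_0$; and if instead $x_i(s) > 1-\bar{r}_i$ for all $s$, then $x_i(s)$ is strictly decreasing, and its limit must equal $1-\bar{r}_i$ (a limit strictly above the threshold would give a genuinely uniform decrement by the displayed estimate, a contradiction). Either way $\limsup_{s\to\infty} x_i(s) \leq \bar{\gamma}_i/(1-\bar{\gamma}_i)$ for every trajectory, which is what the non-strict inequality of the corollary needs once you pass to the limiting trajectory; note this is also why the dynamic statement has $\leq$ where Corollary~\ref{cor:x_i_upper} has $<$.
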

\begin{proof}
The proof is a straightforward extension of the proof of Corollary~\ref{cor:x_i_upper}, and is therefore not included here.
\end{proof}

\begin{lemma}[Convergence Rate for Dynamic Topology]\label{lem:convergence_rate_dyn}
For all $p\in\mathcal{P}$, suppose that $\mat{C}_p\in \mathcal{L}$ where $\mathcal{L} = \{\mat{C}_p \in \mathbb{R}^{n\times n} : \gamma_{p,i} < 1/3, \forall\,i\}$ and $\gamma_{p,i}$ is the $i^{th}$ entry of the dominant left eigenvector $\vect{\gamma}_p$ associated with $\mat{C}_p$. Then, there exists a finite $s_1$ such that, for all $s \geq s_1$, there holds $\Vert \mat{J}_{\vect{F}_{\sigma(s)}(\vect{x}(s))} \Vert_1 \leq 2\bar{\beta} - \epsilon < 1 - \eta$, where $\bar{\beta} = \max_p \max_i \gamma_{p,i}/(1-\gamma_{p,i}) < 1/2$ and $\epsilon, \eta$ are arbitrarily small positive constants. For $s \geq s_1$, the system \eqref{eq:DF_system_dyn} contracts to its unique limiting trajectory $\vect{x}^*(s)$ with a convergence rate obeying 
\begin{equation}
\Vert \vect{x}^*(s) - \vect{x}(s+1) \Vert_1 \leq (2\bar{\beta} - \epsilon) \Vert \vect{x}^*(s) - \vect{x}(s) \Vert_1
\end{equation}
\end{lemma}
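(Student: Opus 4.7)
The plan is to mirror the proof of Lemma~\ref{lem:convergence_rate} almost line-by-line, exploiting the key observation already recorded in the proof of Theorem~\ref{thm:contract_dyn_DF}: the Jacobian $\mat{J}_{\vect{F}_{\sigma(s)}}(\vect{x}(s))$ is structurally identical to the constant-topology Jacobian, with entries $\partial F_{\sigma(s),i}/\partial x_i = x_i(s+1)(1-x_i(s+1))/(1-x_i(s))$ and $\partial F_{\sigma(s),i}/\partial x_j = -x_i(s+1)x_j(s+1)/(1-x_j(s))$ depending on $\sigma(s)$ only through $x_i(s+1) = F_{\sigma(s),i}(\vect{x}(s))$. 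Consequently, if uniform simplex-type bounds on the iterates $\vect{x}(s)$ can be secured, every subsequent estimate becomes automatically uniform in $p \in \mathcal{P}$.

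First, I would apply Corollary~\ref{cor:x_i_upper_dyn} to obtain $x_i^*(s) \leq \bar{\gamma}_i/(1-\bar{\gamma}_i) \leq \bar{\beta}$ uniformly in $i$ and $s$. Since $\bar{\gamma}_i < 1/3$ by hypothesis, we have $\bar{\beta} < 1/2$, which lets me insert a positive slack by writing $x_i^*(s) \leq \bar{\beta}-\epsilon_1$ for some $\epsilon_1 > 0$. Combining the exponential convergence $\vect{x}(s)\to\vect{x}^*(s)$ from Theorem~\ref{thm:contract_dyn_DF} with this strict uniform bound yields a finite $s_1$ and a positive $\epsilon$ with $\epsilon/2 < \epsilon_1$ such that $x_i(s) \leq \bar{\beta}-\epsilon/2$ for all $i$ and all $s \geq s_1$.

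Next, using the simplex identity $\sum_i x_i(s+1) = 1$ (valid for $s \geq 1$ by Theorem~\ref{thm:contract_dyn_DF}) together with the Jacobian entries recalled above, a direct summation shows that each column of $\mat{J}_{\vect{F}_{\sigma(s)}}$ sums to zero, exactly as in Lemma~\ref{lem:convergence_rate}. Since the diagonal entries are strictly positive and the off-diagonals strictly negative, one gets $\Vert \mat{J}_{\vect{F}_{\sigma(s)}}\Vert_1 = 2\max_i \partial F_{\sigma(s),i}/\partial x_i$. For $s \geq s_1$, the bound $x_i(s), x_i(s+1) \leq \bar{\beta}-\epsilon/2 < 1/2$ implies $x_i(s+1)(1-x_i(s+1)) \leq (\bar{\beta}-\epsilon/2)(1-\bar{\beta}+\epsilon/2)$ and $1/(1-x_i(s)) \leq 1/(1-\bar{\beta}+\epsilon/2)$, so that $\partial F_{\sigma(s),i}/\partial x_i \leq \bar{\beta}-\epsilon/2$. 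This yields $\Vert\mat{J}_{\vect{F}_{\sigma(s)}}\Vert_1 \leq 2\bar{\beta}-\epsilon$, uniformly over $p \in \mathcal{P}$; choosing $\eta < 1-2\bar{\beta}+\epsilon$ delivers the strict inequality $<1-\eta$.

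Finally, since $\vect{x}^*(s+1) = \vect{F}_{\sigma(s)}(\vect{x}^*(s))$ along the unique limiting trajectory and since the segment joining $\vect{x}(s)$ and $\vect{x}^*(s)$ lies in the convex region $\{x : x_i \leq \bar{\beta}-\epsilon/2\}$ for $s \geq s_1$, a mean value argument applied to $\vect{F}_{\sigma(s)}$ along this segment delivers the stated contraction rate (with the understanding that the left-hand side of the lemma's displayed inequality reads $\vect{x}^*(s+1) - \vect{x}(s+1)$, as in the constant case). The main obstacle is purely bookkeeping: one must verify that both the column-sum-zero property and the diagonal bound survive the switching \emph{uniformly} in $\sigma$, which follows from the observation in the opening paragraph that all $\sigma(s)$-dependence funnels through the iterates $\vect{x}(s+1)$, whose components always satisfy the simplex constraint.
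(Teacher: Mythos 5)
Your proposal is correct and follows essentially the same route as the paper, which itself simply defers to the constant-topology proof of Lemma~\ref{lem:convergence_rate} after noting (as you do) that the Jacobian of the switched map has the identical form, with all $\sigma(s)$-dependence entering through $\vect{x}(s+1)$; you merely supply the details the paper omits, using Corollary~\ref{cor:x_i_upper_dyn} and Theorem~\ref{thm:contract_dyn_DF} exactly where the constant case uses Corollary~\ref{cor:x_i_upper} and Theorem~\ref{thm:contract_DF}. Your parenthetical that the displayed contraction inequality should read $\vect{x}^*(s+1)-\vect{x}(s+1)$ on the left is a fair reading, since the limiting trajectory is no longer stationary in the dynamic case.
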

\begin{proof}
Again, the proof is a straightforward extension of the proof of Lemma~\ref{lem:convergence_rate}, by recalling from the proof of Theorem~\ref{thm:contract_dyn_DF} that the Jacobian takes on the same form. We thus omit the minor details.
\end{proof}

\begin{remark}[Self-Regulation]\label{rem:self_reg}
The exponential forgetting of initial conditions is a powerful notion. It implies that sequential discussion of topics combined with reflected self-appraisal is a method of ``self-regulation" for social networks, even in the presence of dynamic topology. Consider an individual $i$ who is extremely arrogant, e.g. $x_i(0) = 0.99$. However, individual $i$ is not likeable and others tend to not trust its opinions on any issue, e.g. $c_{ji}(s) \ll 1\,,\forall\,j,s$. Then, $\gamma_i(s) \ll 1$ because $\vect{\gamma}(s)^\top = \vect{\gamma}(s)^\top \mat{C}(s)$ implies $\gamma_i(s) = \sum_{j\neq i} \gamma_j(s) c_{ji}(s)$. Then, according to Corollary~\ref{cor:x_i_upper_dyn}, $x_i^*(s) \ll 1$, and individual $i$ exponentially loses its social power. An interesting future extension would be to expand on the reflected self-appraisal by modelling individual \textbf{personality}. For example, we can consider $x_i(s+1) = \phi_i(\zeta_i(s))$ where $\phi_i(\cdot)$ may capture arrogance or humility. 

We also conclude that, for large $s$, any individual wanting to have an impact on the discussion of topic $s+1$ should focus on ensuring it has a large impact on discussion of the prior topic $s$. This concept can be applied to e.g. \cite{ye2017socialpower_mod}.
\end{remark}

\subsection{Periodically Varying Topology}
In this subsection, we investigate an interesting, special case of issue-dependent topology, that of periodically varying $\mat{C}(s)$ which satisfies Assumption~\ref{assm:C_matrix} for all $s$. Preliminary analysis and results were presented in \cite{ye2017DF_IFAC} without convergence proofs. We now provide a complete analysis by utilising Theorem~\ref{thm:contract_dyn_DF}.

\emph{Motivation for Periodic Variations:} Consider \emph{Example 1} in Section~\ref{ssec:problem_def} of a government cabinet that meets to discuss the issues of defence, economic growth, social security programs and foreign policy. Since these issues are vital to the smooth running of the country, we expect the issues to be discussed \emph{regularly and repeatedly}. Regular meetings on the same set of issues for decision making/governance/management of a country or company then points to periodically varying $\mat{C}(s)$, i.e., social networks with periodic topology.

The system \eqref{eq:DF_system_dyn}, with periodically switching $\mat{C}(s)$, can be described by a switching signal $\sigma(s)$ of the form $\sigma(0) = P$, and for $s\geq 1$, $\sigma(Pq + p) = p$,\footnote{Note that any given $s\in \mathcal{S}$ can be uniquely expressed by a given fixed positive integer $P$, a nonnegative integer $q$, and positive $p\in \mathcal{P}$, as shown.} where $P < \infty$ is the period length, $p \in \mathcal{P} = \{1, 2, \hdots, P\}$ and $q \in \mathbb{Z}_{\geq 0}$ is any nonnegative integer. Note that in general, $\mat{C}_i \neq \mat{C}_j, \forall\,i,j \in \mathcal{P}$ and $i\neq j$. Theorem~\ref{thm:contract_dyn_DF} immediately allows us to conclude that system \eqref{eq:DF_system_dyn} with periodic switching converges exponentially fast to its unique limiting trajectory $\vect{x}^*(s)$. This subsection's key contribution is to use a transformation to obtain additional, useful information on the limiting trajectory.

For simplicity, we shall begin analysis by assuming that $\mathcal{P} = \{1,2\}$, i.e., there are two different $\mat{C}$ matrices, and the switching is of period 2. It will become apparent in the sequel that analysis for $\mathcal{P} = \{1, 2, \hdots, P\}$, with arbitrarily large but finite $P$, is a simple recursive extension on the analysis for $\mathcal{P} = \{1,2\}$. For the two matrices case, we obtain
\begin{equation}\label{eq:periodic_DF_system}
\vect x(s+1) = \begin{cases} 
   \vect F_1(\vect x(s)) &  \text{if } s \text{ is odd} \\
   \vect F_2(\vect x(s)) &  \text{if } s \text{ is even}
  \end{cases}
\end{equation}

We now seek to transform the periodic system into a time-invariant system. Define a new state $\vect y \in \mathbb{R}^{2n}$ (note that this is not the opinion state given in Section~\ref{sssec:degroot}) as
\begin{equation}\label{eq:y_definition}
\vect y(2q) = 
\begin{bmatrix} \vect y_1(2q) \\ \vect y_2(2q) \end{bmatrix} = \begin{bmatrix} \vect x(2q) \\ \vect x(2q+1) \end{bmatrix}
\end{equation}
and study the evolution of $\vect y (2q)$ for $q \in \{0, 1, 2, \hdots\}$. Note that
\begin{align}\label{eq:y_evolution}
\vect y(2(q+1)) & = 
\begin{bmatrix} \vect y_1(2(q+1)) \\ \vect y_2(2(q+1)) \end{bmatrix} = \begin{bmatrix} \vect x(2(q+1)) \\ \vect x(2(q+1)+1) \end{bmatrix}
\end{align} 
In view of the fact that $\vect x(2(q+1)) = \vect F_1(\vect x(2q+1))$ and $\vect x(2(q+1)+1) = \vect F_2(\vect x(2q+2))$ for any $q \in \{0,1,2,\hdots\}$, we obtain
\begin{equation}
\vect y(2(q+1)) = 
\begin{bmatrix} \vect F_1(\vect x(2q+1)) \\ \vect F_2(\vect x(2q+2)) \end{bmatrix}
\end{equation}
Similarly, notice that $\vect x(2q+1) = \vect F_2(\vect x(2q))$ and $\vect x(2q+2) = \vect F_1(\vect x(2q+1))$ for any $q \in \{0,1,2,\hdots\}$. From this, for $q \in \{0,1,2,\hdots\}$, we obtain that
\begin{align}
\vect y(2(q+1)) 
& = \begin{bmatrix} \vect F_1\Big(\vect F_2(\vect y_1(2q))\Big) \\ \vect F_2\Big(\vect F_1( \vect y_2(2q) )\Big) \end{bmatrix}  = \begin{bmatrix} \vect G_1 (\vect y_1(2q)) \\ \vect G_2 (\vect y_2(2q)) \end{bmatrix} \label{eq:y_update_G}
\end{align}
for the time-invariant nonlinear composition functions $\vect G_1 = \vect{F}_1 \circ \vect{F}_2$ and $\vect G_2 = \vect{F}_2 \circ \vect{F}_1$. We can thus express the periodic system \eqref{eq:periodic_DF_system} as the nonlinear time-invariant system
\begin{equation}\label{eq:TI_DF_system}
\vect y(2q+2) = \bar{\vect G} (\vect y(2q))
\end{equation}
where $\bar{\vect{G}} = [\vect{G}_1^\top, \vect{G}_2^\top]^\top$.

\begin{theorem}\label{thm:periodic_2}
The system \eqref{eq:periodic_DF_system}, with initial conditions $0\leq x_i(0) < 1,\forall\,i$ and $\exists\,j : x_j(0) > 0$, converges exponentially fast to a unique limiting trajectory $\vect{x}^*(s) \in \text{int}(\Delta_n)$. This trajectory is a periodic sequence, which obeys 
\begin{equation}\label{eq:periodic_sequence_2}
\vect x^*(s) = \begin{cases} 
   \vect y_1^* &  \text{if } s \text{ is odd} \\
   \vect y_2^* &  \text{if } s \text{ is even}
  \end{cases}
\end{equation}
where $\vect y_1^* \in \text{int}(\Delta_n)$ and $\vect y_2^* \in \text{int}(\Delta_n)$ are the unique fixed points of $\vect{G}_1$ and $\vect G_2$, respectively.
\end{theorem}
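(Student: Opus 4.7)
The plan is to leverage Theorem~\ref{thm:contract_dyn_DF} directly. The period-$2$ switching signal used in \eqref{eq:periodic_DF_system} is a particular instance of an arbitrary switching signal over $\mathcal{P}=\{1,2\}$, and by hypothesis neither $\mat{C}_1$ nor $\mat{C}_2$ corresponds to a star graph. Thus the hypotheses of Theorem~\ref{thm:contract_dyn_DF} are met, and we inherit for free that, under the stated initial conditions, $\vect{x}(s)$ converges exponentially fast to a unique limiting trajectory $\vect{x}^*(s) \in \text{int}(\Delta_n)$. The remaining task is to characterise this trajectory as the period-$2$ sequence claimed in \eqref{eq:periodic_sequence_2}.

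To establish periodicity of $\vect{x}^*$, I would use a shift-invariance argument. Define $\tilde{\vect{x}}(s) = \vect{x}(s+2)$, viewed as a trajectory of \eqref{eq:periodic_DF_system} with initial condition $\tilde{\vect{x}}(0) = \vect{x}(2)$. Because the switching signal has period $2$, $\tilde{\vect{x}}$ obeys the same dynamics as $\vect{x}$. By the uniqueness part of Theorem~\ref{thm:contract_dyn_DF}, both $\vect{x}(s) \to \vect{x}^*(s)$ and $\tilde{\vect{x}}(s) \to \vect{x}^*(s)$ exponentially fast; on the other hand $\tilde{\vect{x}}(s) = \vect{x}(s+2) \to \vect{x}^*(s+2)$. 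Comparing limits gives $\vect{x}^*(s+2) = \vect{x}^*(s)$ for all $s$, so the limiting trajectory is $2$-periodic.

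Hence $\vect{x}^*$ takes at most two distinct values; label them $\vect{y}_1^*$ and $\vect{y}_2^*$ in accordance with \eqref{eq:periodic_sequence_2}. Using the composed-map representation $\vect{x}(2q+2) = \vect{G}_1(\vect{x}(2q))$ and $\vect{x}(2q+3) = \vect{G}_2(\vect{x}(2q+1))$ derived in \eqref{eq:y_update_G}, together with continuity of $\vect{G}_1$ and $\vect{G}_2$ (inherited from Property~\ref{prty01:continuousDF}), passing to the limit $q\to\infty$ shows that $\vect{y}_1^* = \vect{G}_1(\vect{y}_1^*)$ and $\vect{y}_2^* = \vect{G}_2(\vect{y}_2^*)$. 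Uniqueness of these fixed points within $\text{int}(\Delta_n)$ is a bootstrap: any other fixed point $\vect{y}$ of $\vect{G}_1$ lying in $\text{int}(\Delta_n)$ would, via $\vect{x}(0) = \vect{y}$, generate a genuinely $2$-periodic trajectory of \eqref{eq:periodic_DF_system} distinct from $\vect{x}^*$, contradicting uniqueness of the limiting trajectory already established; the same reasoning applies to $\vect{G}_2$.

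I do not foresee significant obstacles. All the hard analytic work---uniform contraction of $\Vert \mat{H}(\vect{x}(s+1))\Vert_1$ below $1$ across every map $\vect{F}_p$ on $\bar{\mathcal{A}}$---was already executed in the proof of Theorem~\ref{thm:contract_dyn_DF}. The only minor point to verify is that the shift argument is sound: the shifted initial condition $\tilde{\vect{x}}(0) = \vect{x}(2)$ must satisfy the hypotheses $0 \leq \tilde{x}_i(0) < 1$ with $\tilde{x}_j(0) > 0$ for some $j$. This is automatic because $\vect{x}(s) \in \text{int}(\Delta_n)$ for all $s \geq 1$ whenever the original initial condition is admissible, as was shown en route to Theorem~\ref{thm:contract_dyn_DF}. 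The extension to $\mathcal{P}=\{1,\dots,P\}$ is then truly a recursion: the shift argument with shift $P$ (rather than $2$) gives $\vect{x}^*(s+P) = \vect{x}^*(s)$, and the $P$ distinct phase values are the fixed points of the $P$ cyclic compositions of the $\vect{F}_p$.
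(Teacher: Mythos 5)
There is a genuine gap in your periodicity step. From the two facts $\tilde{\vect{x}}(s)-\vect{x}^*(s)\to\vect{0}_n$ and $\tilde{\vect{x}}(s)=\vect{x}(s+2)$ with $\vect{x}(s+2)-\vect{x}^*(s+2)\to\vect{0}_n$, you may only conclude $\Vert\vect{x}^*(s+2)-\vect{x}^*(s)\Vert\to 0$ as $s\to\infty$, not the exact equality $\vect{x}^*(s+2)=\vect{x}^*(s)$ for all $s$. The reason is that ``the unique limiting trajectory'' of Theorem~\ref{thm:contract_dyn_DF} is determined only up to asymptotic equivalence: in a contracting system \emph{every} trajectory is a trajectory to which all others converge, so uniqueness in the literal pointwise sense does not hold, and shift-invariance of the dynamics therefore yields only \emph{asymptotic} $2$-periodicity. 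This defect propagates: your subsequent step of ``passing to the limit $q\to\infty$'' to obtain $\vect{y}_1^*=\vect{G}_1(\vect{y}_1^*)$ presupposes that $\vect{x}(2q)$ converges to a single point $\vect{y}_1^*$, which is exactly what is in question, so the existence of the fixed points $\vect{y}_1^*,\vect{y}_2^*$ is never actually established in your argument.

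The missing ingredient is an explicit fixed-point existence step, which is what Part 1 of the paper's proof supplies: $\vect{G}_1=\vect{F}_1\circ\vect{F}_2$ and $\vect{G}_2=\vect{F}_2\circ\vect{F}_1$ are continuous self-maps of the compact convex set $\bar{\mathcal{A}}$, so Brouwer's fixed-point theorem gives points $\vect{y}_1^*=\vect{G}_1(\vect{y}_1^*)$ and $\vect{y}_2^*=\vect{F}_1(\vect{y}_1^*)=\vect{G}_2(\vect{y}_2^*)$; these generate an \emph{exactly} $2$-periodic trajectory of \eqref{eq:periodic_DF_system} valid for all $s$, which is then identified as the distinguished representative of the limiting class. (One could equally invoke the Banach fixed-point theorem, since $\vect{G}_1$ is a contraction on $\bar{\mathcal{A}}$ in the uniformly equivalent transformed metric $\Vert\cdot\Vert_{\mat{\Theta},1}$.) Once existence is in hand, your final contradiction argument for uniqueness of the fixed points is sound and is the same as the paper's. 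So the proposal is repairable, but as written the existence of the periodic representative is assumed rather than proved.
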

\begin{proof}
As mentioned above, one can immediately apply Theorem~\ref{thm:contract_dyn_DF} to show $\lim_{s\to\infty} \vect{x}(s) = \vect{x}^*(s)$. This proof therefore focuses on using the time-invariant transformation to show that $\vect{x}^*(s)$ has the properties described in the theorem statement.

\emph{Part 1:} In this part, we prove that the map $\vect{G}_i$, $i = 1,2$ has at least one fixed point. Firstly, we proved in Theorem~\ref{thm:contract_dyn_DF} that the system \eqref{eq:DF_system_dyn}, with initial conditions $0\leq x_i(0) < 1,\forall\,i$ and for at least one $j$, $x_j(0) > 0$, will have $\vect{x}(s) \in \text{int}(\Delta_n)$ for all $s > 0$, which implies that $\vect{x}^*(s) \in \text{int}(\Delta_n)$. Let $p \in \{1,2\}$. The fact that $\vect F_p : \Delta_n \rightarrow \Delta_n$ is continuous on $\wt{\Delta}_n$ is straightforward since $\vect{F}_p$ is an analytic function in $\wt{\Delta}_n$. Lemma 2.2 in \cite{jia2015opinion_SIAM} shows that $\vect F_p$ is Lipschitz continuous about $\mathbf{e}_i$ with Lipschitz constant $2\sqrt{2}/\gamma_{i,p}$. It is then straightforward to verify that the composition of two continuous functions, $\vect G_1 = \vect F_1 \circ \vect F_2 : \Delta_n \rightarrow \Delta_n$ is continuous. Similarly, $\vect G_2 = \vect F_2 \circ \vect F_1 : \Delta_n \rightarrow \Delta_n$ is also continuous.

The proof of Theorem~\ref{thm:contract_dyn_DF} also showed that for all $p$, $\vect{F}_p \in \bar{\mathcal{A}}$ where $\bar{\mathcal{A}} = \{\vect{x} \in \Delta_n : 1-\bar{r} \geq x_i \geq 0, \forall\,i \in \{1, \hdots, n\}\}$ and $\bar{r}$ is some small strictly positive constant. For the system \eqref{eq:periodic_DF_system} with $p = 1,2$, it follows that $\vect F_1(\bar{\mathcal{A}}) \subset \bar{\mathcal{A}} \Rightarrow \vect F_2(\vect F_1(\bar{\mathcal{A}})) \subset \bar{\mathcal{A}}$, which implies that $\vect G_1(\bar{\mathcal{A}}) \subset \bar{\mathcal{A}}$. Similarly, $\vect{G}_2(\bar{\mathcal{A}})  \subset \bar{\mathcal{A}}$. Brouwer's fixed-point theorem then implies that there exists at least one fixed point $\vect y_1^* \in \bar{\mathcal{A}}$ such that $\vect y_1^* = \vect G_1(\vect y_1^*)$ (respectively $\vect y_2^* \in \bar{\mathcal{A}}$ such that $\vect y_2^* = \vect G_2(\vect y_2^*)$) because $\vect G_1$ (respectively $\vect{G}_2$) is a continuous function on the compact, convex set $\mathcal{A}$. The arguments in \emph{Part 1} appeared in \cite{ye2017DF_IFAC}, but proofs were omitted due to space limitations.

\emph{Part 2:} In this part, we prove that the unique limiting trajectory of \eqref{eq:periodic_DF_system} obeys \eqref{eq:periodic_sequence_2}. Let $\vect{y}_1^*$ be a fixed point of $\vect{G}_1$. We will show below that $\vect{y}_1^*$ is in fact unique. Observe that $\vect y_1^* = \vect F_2( \vect F_1 (\vect y_1^*) )$. Define $\vect y_2^* = \vect F_1(\vect y_1^*)$. We thus have $\vect y_1^* = \vect F_2(\vect y_2^*)$. Observe that $\vect F_1 (\vect y_1^*)  = \vect F_1 (\vect F_2 (\vect y_2^*))$, which implies that $\vect y_2^* = \vect F_1 (\vect F_2(\vect y_2^*)) = \vect G_2 (\vect y_2^*)$. \emph{In other words, $\vect y_2^*$ is a fixed point of $\vect G_2$ (but at this stage we have not yet proved its uniqueness)}. 

We now prove uniqueness. Theorem~\ref{thm:contract_dyn_DF} allows us to conclude that \emph{all trajectories} of \eqref{eq:periodic_DF_system} converge exponentially fast to a unique limiting trajectory $\vect{x}^*(s) \in \text{int}(\Delta_n)$. It follows, from \eqref{eq:TI_DF_system} and the definition of $\vect{y}(2q)$, that for all $s\geq 0$, \eqref{eq:periodic_sequence_2} \emph{is a trajectory} of the system \eqref{eq:periodic_DF_system}; the critical point here is that \eqref{eq:periodic_sequence_2} \emph{holds for all} $s$. Combining these arguments, it is clear that \eqref{eq:periodic_sequence_2} is precisely \emph{the unique limiting trajectory.}

Lastly, we show that $\vect{y}_1^*$ and $\vect{y}_2^*$ are the unique fixed point of $\vect{G}_1$ and $\vect{G}_2$, respectively. To this end, suppose that, to the contrary, at least one of $\vect{y}_1^*$ and $\vect{y}_2^*$ is not unique. Without loss of generality, suppose in particular that $\vect{y}_1^\prime \!\neq\! \vect{y}_1^*$ is any other fixed point of $\vect{G}_1$. Then, $\vect{y}_2^\prime \!=\! \vect{F}_1(\vect{y}_1^\prime)$ is a fixed point of $\vect{G}_2$, and
\begin{equation}\label{eq:periodic_traj_false}
\vect x(s) = \begin{cases} 
   \vect y_1^\prime &  \text{if } s \text{ is odd} \\
   \vect y_2^\prime &  \text{if } s \text{ is even}
  \end{cases}
\end{equation}
is a trajectory of \eqref{eq:periodic_DF_system} that holds for all $s\geq 0$, and is \emph{different from the trajectory \eqref{eq:periodic_sequence_2} because} $\vect{y}_1^\prime \neq \vect{y}_1^*$. On the other hand, Theorem~\ref{thm:contract_dyn_DF} implies that all trajectories of \eqref{eq:periodic_DF_system} converge exponentially fast \emph{to a unique limiting trajectory}, which is a contradiction. Thus, $\vect{y}_1^*$ and $\vect{y}_2^*$ are the unique fixed point of $\vect{G}_1$ and $\vect{G}_2$, respectively, and \eqref{eq:periodic_DF_system} converges exponentially fast to the unique limiting trajectory \eqref{eq:periodic_sequence_2}.
\end{proof}

We now provide the generalisation to periodically switching topology $\mat{C}(s) = \mat{C}_{\sigma(s)}$, where $\sigma(s)$ is of the form $\sigma(0) = P$, and for $s\geq 1$, $\sigma(Pq + p) = p$. Here, $2 \le P < \infty$, $p \in \mathcal{P} = \{1, 2, \hdots, P\}$ and $q \in \mathbb{Z}_{\geq 0}$. The periodic DeGroot-Friedkin model is described by 
\begin{equation}\label{eq:periodic_DF_system_M_top}
\vect x(s+1) = \begin{cases} 
   \vect F_P(\vect x(s)) &  \text{for } s = 0 \\
   \vect F_p(\vect x(s = Pq + p ))  &  \text{for all } s \geq 1 
  \end{cases}
\end{equation}

A transformation of \eqref{eq:periodic_DF_system_M_top} to a time-invariant system can be achieved by following a procedure similar to the one detailed for the case $p=2$. A new state variable $\vect y \in \mathbb{R}^{Pn}$ is defined as 
\begin{equation}\label{eq:y_definition_M}
\vect y(Pq) = 
\begin{bmatrix} \vect y_1(Pq) \\ \vect y_2(Pq) \\ \vdots \\ \vect y_P(Pq) \end{bmatrix} = \begin{bmatrix} \vect x(Pq) \\ \vect x(P q +1 ) \\ \vdots \\ \vect x(P q + P -1) \end{bmatrix}
\end{equation}
and we study the evolution of $\vect y(Pq)$ for $q \in \{0,1, \hdots\}$. It follows that
\begin{align*}
\vect y_p(P (q+1)) = \vect x( P(q +1) + p - 1 )\,, \; \forall\, p \in \mathcal{P}
\end{align*}
Following the logic in the 2 period case, but with the precise steps omitted, we obtain
\begin{align}
\vect y(P (q+1)) &
 = \begin{bmatrix} \vect F_{P-1} ( \vect F_{P-2} (\hdots (\vect F_P (\vect y_1 ( Pq ) ) ) ) ) \\
\vect F_P ( \vect F_{P-1} (\hdots (\vect F_1 (\vect y_2 ( Pq ) ) ) ) ) \\
\vdots \\
\vect F_{P-2} ( \vect F_{P-1} (\hdots (\vect F_P (\vect y_{P-1} ( Pq ) ) ) ) ) \end{bmatrix}  \nonumber \\
& = \bar {\vect G}(\vect y(Pq)) \label{eq:transformed_DF_M_top} 
\end{align}
where $\bar{\vect{G}}(\vect y)  = [\vect G_1 (\vect y_1 ), 
\vect G_2 ( \vect y_2 ) 
,\hdots, 
\vect G_P ( \vect y_P)]^\top$.
This leads to the following generalisation of Theorem~\ref{thm:periodic_2}.

\begin{theorem}\label{thm:periodic_P_gen}
The system \eqref{eq:periodic_DF_system_M_top}, with initial conditions $0\leq x_i(0) < 1,\forall\,i$ and for at least one $j$, $x_j(0) > 0$, converges exponentially fast to a unique limiting trajectory $\vect{x}^*(s) \in \text{int}(\Delta_n)$. This trajectory is a periodic sequence, which for any $q \in \mathbb{Z}_{\geq 0}$, obeys 
\begin{equation}\label{eq:periodic_sequence_M}
\vect x^*(Pq + p - 1) = \vect y_p^*, \; \text{for all } p\in \{1, 2, \ldots, P\} 
\end{equation}
where $\vect y_p^* \in \text{int}(\Delta_n)$ is the unique fixed point of $\vect G_p$.
\end{theorem}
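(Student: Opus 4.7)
The plan is to follow the same three-stage scheme used in the proof of Theorem~\ref{thm:periodic_2}, extended recursively from $P=2$ to general $P$. First, I would invoke Theorem~\ref{thm:contract_dyn_DF} directly, since the periodic switching signal $\sigma(s)$ is a legitimate special case of an arbitrary switching signal with $\mat{C}_{\sigma(s)} \in \mathcal{C}$. This immediately delivers existence of a unique limiting trajectory $\vect{x}^*(s) \in \text{int}(\Delta_n)$ to which all trajectories with admissible initial conditions converge exponentially fast. What remains is to identify this limiting trajectory with the periodic sequence \eqref{eq:periodic_sequence_M} determined by the fixed points $\vect{y}_p^*$ of $\vect{G}_p$.

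Second, I would establish existence of at least one fixed point for each $\vect G_p$. Each $\vect F_p$ is continuous on $\Delta_n$ (Property~\ref{prty01:continuousDF} applied to each map, together with the Lipschitz argument near vertices from Lemma~2.2 of \cite{jia2015opinion_SIAM}), so any composition $\vect G_p = \vect F_{p-1} \circ \vect F_{p-2} \circ \cdots \circ \vect F_P$ (indices mod $P$) is continuous. The proof of Theorem~\ref{thm:contract_dyn_DF} established that $\vect F_p(\bar{\mathcal{A}}) \subset \bar{\mathcal{A}}$ for each $p$, so by induction any composition also maps $\bar{\mathcal{A}}$ into itself. Since $\bar{\mathcal{A}}$ is compact and convex, Brouwer's fixed point theorem yields a fixed point $\vect y_p^* \in \bar{\mathcal{A}}$ of $\vect G_p$ for each $p\in\mathcal{P}$.

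Third, I would chain the fixed points together. If $\vect y_1^*$ is a fixed point of $\vect G_1$, then defining $\vect y_{p+1}^* = \vect F_p(\vect y_p^*)$ for $p = 1, \ldots, P-1$ gives a compatible collection: a direct calculation, using that the composition in $\vect G_{p+1}$ is just the composition in $\vect G_p$ cyclically shifted by one, shows $\vect G_{p+1}(\vect y_{p+1}^*) = \vect y_{p+1}^*$. Consequently, the sequence defined by $\vect x(Pq + p - 1) = \vect y_p^*$ is indeed a trajectory of \eqref{eq:periodic_DF_system_M_top} that holds for \emph{all} $q \geq 0$ and all $p\in\mathcal{P}$. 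Since this is a trajectory and Theorem~\ref{thm:contract_dyn_DF} guarantees that every trajectory converges exponentially fast to a \emph{unique} limiting trajectory, this periodic sequence must coincide with $\vect{x}^*(s)$. The fact that $\vect{x}^*(s) \in \text{int}(\Delta_n)$ then ensures $\vect y_p^* \in \text{int}(\Delta_n)$ for all $p$.

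Finally, uniqueness of each $\vect y_p^*$ follows by contradiction: any second fixed point $\vect y_p^\prime \neq \vect y_p^*$ of some $\vect G_p$ would, by the same chaining argument, generate a periodic trajectory of \eqref{eq:periodic_DF_system_M_top} distinct from \eqref{eq:periodic_sequence_M} and valid for all $s$, contradicting the uniqueness of the limiting trajectory. I do not anticipate a genuinely hard step; the main subtlety is purely bookkeeping, namely being careful with the cyclic index shifts in defining $\vect G_p$ so that the map $\vect y_p^* \mapsto \vect F_p(\vect y_p^*) = \vect y_{p+1}^*$ correctly transports fixed points between the $P$ maps. Once this is handled, the uniqueness-via-contradiction step is immediate from Theorem~\ref{thm:contract_dyn_DF}, which shoulders all the real analytic work.
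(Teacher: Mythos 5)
Your proposal is correct and follows exactly the route the paper intends: the paper's own proof of Theorem~\ref{thm:periodic_P_gen} is a one-line remark that one recursively applies the techniques of Theorem~\ref{thm:periodic_2}, and your three stages (invoke Theorem~\ref{thm:contract_dyn_DF} for the unique limiting trajectory, Brouwer on $\bar{\mathcal{A}}$ for existence of fixed points of each $\vect G_p$, then chaining via $\vect y_{p+1}^* = \vect F_p(\vect y_p^*)$ and uniqueness by contradiction) are precisely Parts 1 and 2 of that proof generalised to arbitrary $P$. Your remark about the cyclic index bookkeeping is apt, as that is the only place the general case differs from $P=2$.
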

\begin{proof}
The proof is obtained by recursively applying the same techniques used in the proof of Theorem~\ref{thm:periodic_2}. We therefore omit the details.
\end{proof}

Note that Lemmas~\ref{lem:contract_dyn} and \ref{lem:convergence_rate_dyn} and Corollary~\ref{cor:x_i_upper_dyn} are all applicable to the periodic system \eqref{eq:periodic_DF_system_M_top} because \eqref{eq:periodic_DF_system_M_top} is just a special case of the general switching system \eqref{eq:DF_system_dyn}.

\subsection{Convergence to a Single Point}\label{ssec:dyn_unique_point}

We conclude Section~\ref{sec:dyn_top} by showing that if the set $\mathcal{C}$ of possible switching matrices has a special property, then the unique limiting trajectory $\vect{x}^*(s) \in \text{int}(\Delta_n)$ is in fact a stationary point.

Define $\mathcal{K}(\wt{\vect{\gamma}}) = \{\mat{C}_p \in \mathbb{R}^{n\times n} : \vect{\gamma}_p = \wt{\vect{\gamma}}, \forall\,p \in \mathcal{P} = \{1, 2, \hdots, P\} \}$ where $P$ is finite. In other words, $\mathcal{K}(\wt{\vect{\gamma}})$ is a set of $\mat{C}$ matrices which all have the same dominant left eigenvector $\wt{\vect{\gamma}}^\top$. Perhaps the most well-known set is $\mathcal{K}(\vect{1}_n/n)$, i.e., the set of $n\times n$ doubly-stochastic $\mat{C}$ matrices. 

\begin{theorem}
Suppose that $\mat{C}(s) = \mat{C}_{\sigma(s)} \in \mathcal{K}(\wt{\vect{\gamma}})$. Then, the system \eqref{eq:DF_system_dyn}, with initial conditions $0\leq x_i(0) < 1,\forall\,i$ and for at least one $j$, $x_j(0) > 0$, converges exponentially fast to a unique point $\vect{x}^* \in \text{int}(\Delta_n)$. 

There holds $x^*_i < x^*_j$ if and only if $\wt{\gamma}_i < \wt{\gamma}_j$, for any $i,j$, where $\wt{\gamma}_i$ and $x^*_i$ are the $i^{th}$ entry of the dominant left eigenvector $\wt{\vect\gamma}$ and $\vect{x}^*$, respectively. There holds $x^*_i = x^*_j$ if and only if $\wt{\gamma}_i = \wt{\gamma}_j$. 
\end{theorem}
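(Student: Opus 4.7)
The plan is to exploit the striking observation that the map $\vect{F}_p(\vect{x})$ defined in \eqref{eq:map_F_DF_dyn} depends on $\mat{C}_p$ only through its dominant left eigenvector $\vect{\gamma}_p$. Specifically, neither $\alpha_p(\vect{x})$ nor the vector entries $\gamma_{p,i}/(1-x_i(s))$ involve $\mat{C}_p$ in any other way. Therefore, whenever $\vect{\gamma}_p = \wt{\vect{\gamma}}$ for every $p\in\mathcal{P}$, the family $\{\vect{F}_p\}_{p \in \mathcal{P}}$ collapses to a single map $\vect{F}$ parametrised by $\wt{\vect{\gamma}}$, and the non-autonomous switching system \eqref{eq:DF_system_dyn} reduces to the time-invariant DeGroot-Friedkin system $\vect{x}(s+1) = \vect{F}(\vect{x}(s))$ on the self-weight state.

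With this reduction in hand, the remaining work is essentially to invoke the constant-topology results already established. First, I would handle the relaxed initial conditions $0\leq x_i(0) < 1$ with $\exists\,j: x_j(0)>0$ exactly as in the proof of Theorem~\ref{thm:contract_dyn_DF}: irreducibility of $\mat{C}(0)$ gives $\wt{\vect{\gamma}}\succ 0$, and combined with $\exists\,j: x_j(0) > 0$ this yields $\alpha(\vect{x}(0))>0$, so that $\vect{x}(1)\in\text{int}(\Delta_n)\subset\wt{\Delta}_n$. From $s=1$ onwards, Theorem~\ref{thm:contract_DF} applies directly, since the standing assumption of the section rules out star topology for every $\mat{C}_p$ (and indeed Fact~\ref{fact:star_gamma} forces $\Vert\wt{\vect{\gamma}}\Vert_\infty < 0.5$). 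This delivers exponential convergence to a unique stationary point $\vect{x}^*\in\text{int}(\Delta_n)$, and the ordering statements then follow immediately from Theorem~\ref{thm:DFmain}(ii)--(iii) applied to the effective constant map determined by $\wt{\vect{\gamma}}$.

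I do not anticipate a substantial obstacle: the entire result is a corollary of the parametrisation observation together with previously established constant-$\mat{C}$ theorems. The one subtlety worth emphasising in the write-up, to avoid confusion, is that the underlying influence matrix $\mat{W}(s) = \mat{X}(s) + (\mat{I}_n - \mat{X}(s))\mat{C}(s)$ genuinely switches between issues; different matrices $\mat{C}_p\in\mathcal{K}(\wt{\vect{\gamma}})$ produce genuinely different opinion pooling at each issue, so individuals really are being influenced by different relative interpersonal weight patterns across issues. It is only the reflected self-appraisal dynamics in $\vect{x}(s)$ that become effectively time-invariant, because the self-weight update reads off $\vect{\zeta}(s)$ whose formula \eqref{eq:map_F_DF_dyn} depends on $\mat{C}(s)$ only via $\vect{\gamma}(s)$, and by hypothesis this quantity is frozen at $\wt{\vect{\gamma}}$ throughout the sequence of issues.
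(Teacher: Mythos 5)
Your proposal is correct and follows essentially the same route as the paper: the paper's own proof consists precisely of noting that $\vect{F}_{\sigma(s)}$ is parametrised only by $\vect{\gamma}_{\sigma(s)}$, so the map becomes time-invariant on $\mathcal{K}(\wt{\vect{\gamma}})$, and then invoking Theorem~\ref{thm:contract_DF} (with the ordering claims coming from Theorem~\ref{thm:DFmain}). Your additional care with the relaxed initial conditions and the remark that $\mat{W}(s)$ still genuinely switches are both consistent with, and slightly more explicit than, the paper's terse argument.
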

\begin{proof}
The map $\vect{F}_{\sigma(s)}$ is parametrised simply by the vector $\vect{\gamma}_{\sigma(s)}$. Under the stated condition of $\mat{C}(s) = \mat{C}_{\sigma(s)} \in \mathcal{K}(\wt{\vect{\gamma}})$, the map $\vect{F}_{\sigma(s)}$ is time-invariant. The result in Theorem~\ref{thm:contract_DF} is then used to complete the proof. 
\end{proof}

\section{Simulations}\label{sec:sim}
In this section, we provide a short simulation for a network with $6$ individuals to illustrate our key results. The set of topologies is given as $\mathcal{C} = \{\mat{C}_1, \hdots, \mat{C}_5\}$, i.e., $\mathcal{P} = \{1, 2, \hdots, 5\}$. The switching signal $\sigma(s)$ is generated such that for any given $s$, there is equal probability that $\sigma(s) = p,\forall\,p \in\mathcal{P}$. The precise numerical forms of $\mat{C}_p$ given in the appendix.

Figure~\ref{fig:IC_set1} shows the evolution of individual social power over a sequence of issues for the system as described in the above paragraph, initialised from a set of initial conditions, $\wh{\vect{x}}(0)$.  Figure~\ref{fig:IC_set2} shows the system with a different set of initial conditions $\wt{\vect{x}}(0) \neq \wh{\vect{x}}(0)$. Notice that individuals $1,2,3$ have large perceived social power $\wh{x}_i(0) = 0.95$, while individuals $4,5,6$ have $\wh{x}_i(0) = 0$. In the other set of initial conditions, $\wt{x}_i(0)$ is large for $i = 4,6$. Through sequential discussion and reflected self-appraisal, it is clear that the initial conditions are exponentially forgotten and both plots show convergence to the same unique limiting trajectory $\vect{x}^*(s)$ by about $s = 10$. This is shown in Fig.~\ref{fig:IC_compare}, which displays the individual social powers of selected individuals $1,3$ and $6$. The solid lines correspond to initial condition set $\wh{\vect{x}}(0)$ while the dotted lines correspond to initial condition set $\wt{\vect{x}}(0)$. Figure~\ref{fig:IC_compare} shows the exponential convergence of the dotted and solid trajectories. Note that for individual $4$, its social power is always strictly positive, although for several issues, $x_4(s)$ is close to $0$.

For each individual, with $\bar{\gamma}_i = \max_{p\in\mathcal{P}} \gamma_{p,i}$, we computed $\bar\gamma_{1} = 0.4737, \bar{\gamma}_{2} = 0.2371, \bar\gamma_3 = 0.2439, \bar\gamma_4 = 0.2439, \bar\gamma_5 = 0.2439, \bar\gamma_6 = 0.2392$. Note that $\sum_i \bar{\gamma}_i \neq 1$ in general due to the definition of $\bar{\gamma}_i$. According to Corollary~\ref{cor:x_i_upper_dyn}, we have $\vect{x}^*(s) \preceq [0.9, 0.3108, 0.3226, 0.3226, 0.3226,0.3144]$. This is precisely what is shown in Figs.~\ref{fig:IC_set1} and \ref{fig:IC_set2}. Since only $\bar{\gamma}_1 > 1/3$, we observe that after the first 10 or so issues, only $x_1^*(s) > 0.5$, i.e., only individual $1$ can hold more than half the social power in the limit, under arbitrary switching. Simulations for periodically-varying topology are available in \cite{ye2017DF_IFAC}.

\begin{figure}
\begin{center}
\includegraphics[width=0.85\linewidth]{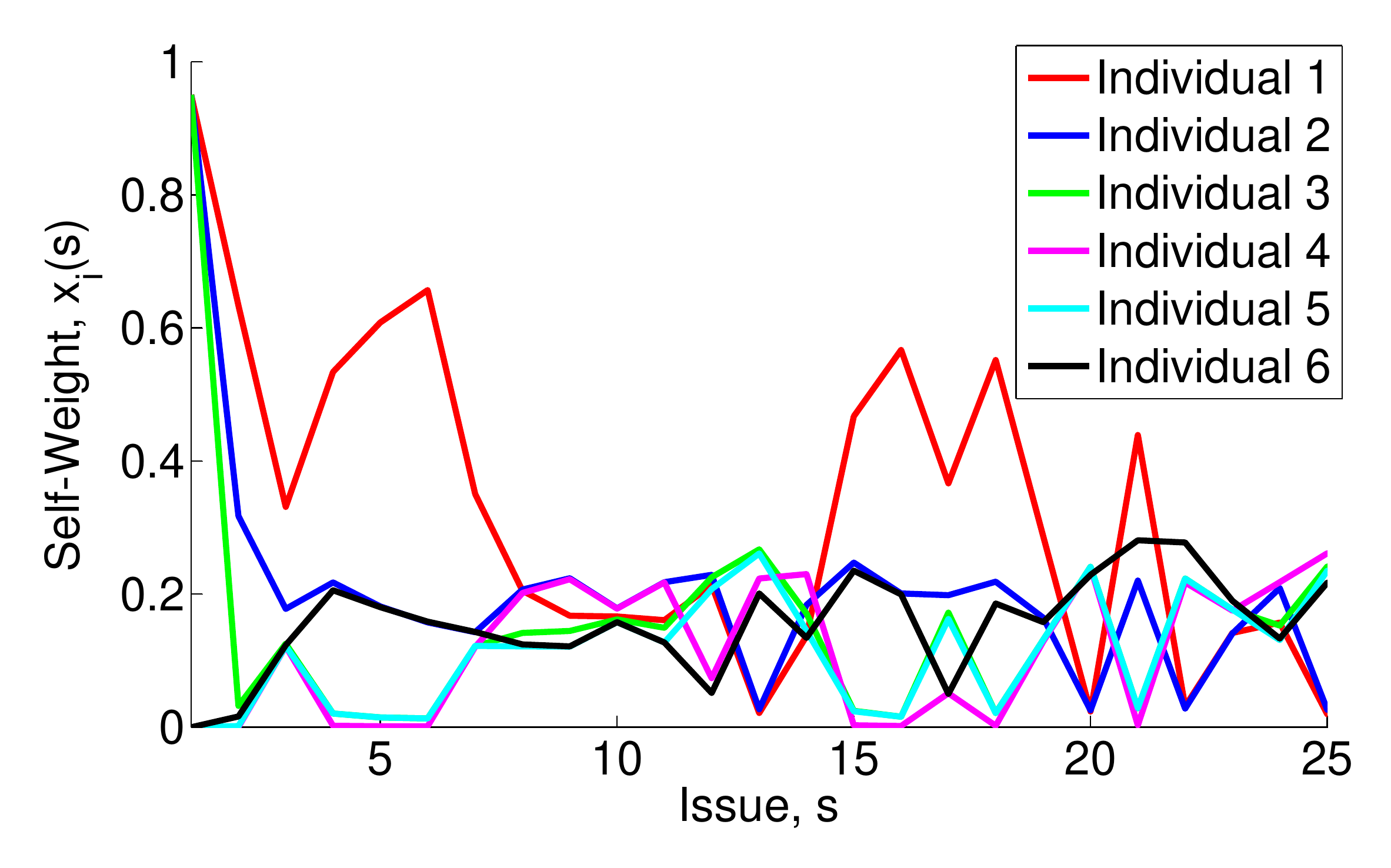}
\caption{Evolution of individuals' social powers $\vect{x}(s)$ for initial condition set $\wh{\vect{x}}(0)$.}
\label{fig:IC_set1}
\end{center}
\end{figure}

\begin{figure}
\begin{center}
\includegraphics[width=0.85\linewidth]{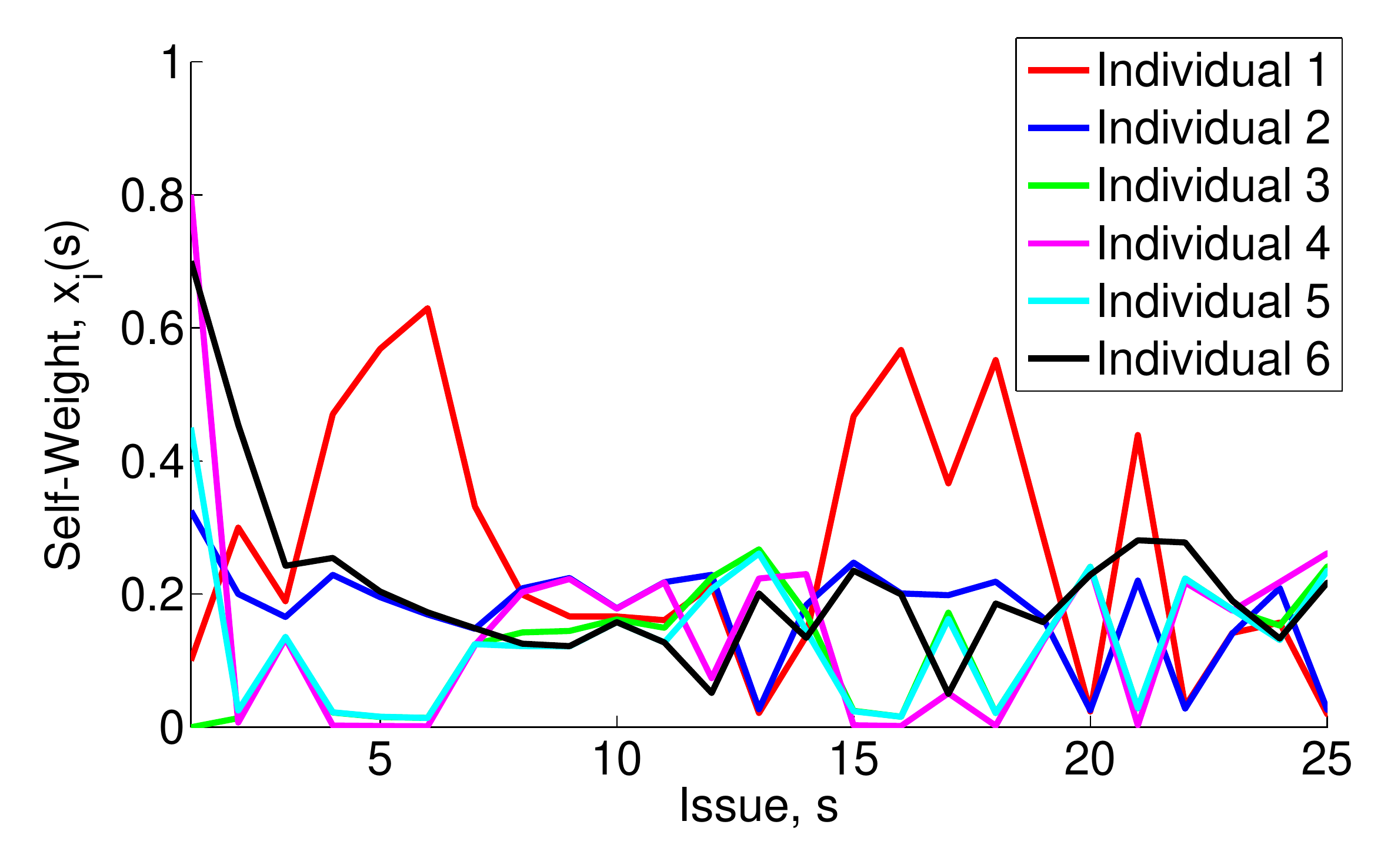}
\caption{Evolution of individuals' social powers $\vect{x}(s)$ for initial condition set $\wt{\vect{x}}(0)$.}
\label{fig:IC_set2}
\end{center}
\end{figure}

\begin{figure}
\begin{center}
\includegraphics[width=0.85\linewidth]{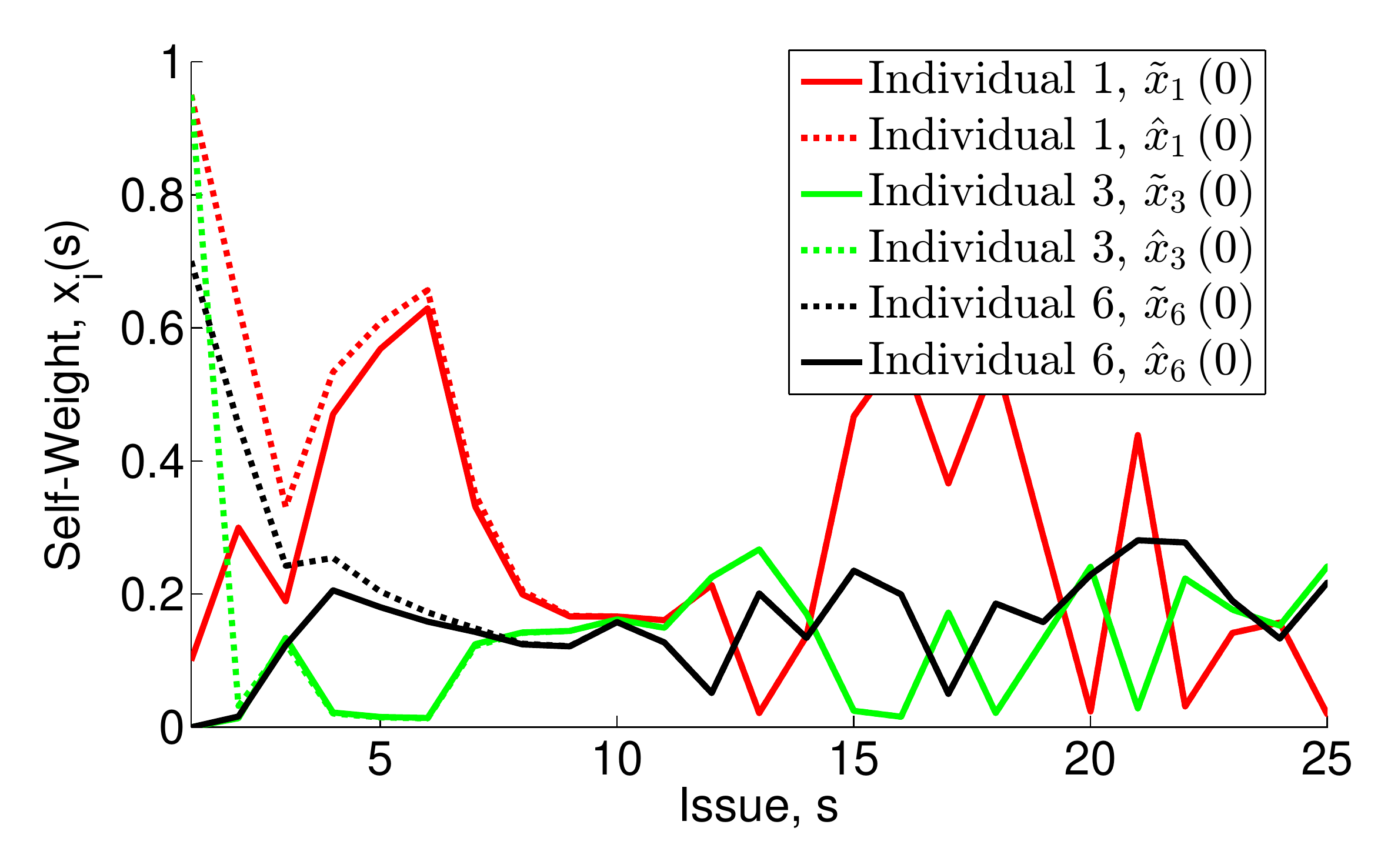}
\caption{Evolution of selected individuals' social powers $x_i(s)$: a comparison of different initial condition sets $\wh{\vect{x}}(0)$ and $\wt{\vect{x}}(0)$.}
\label{fig:IC_compare}
\end{center}
\end{figure}

\section{Conclusion}\label{sec:conclusion}
In this paper, we have presented several novel results on the DeGroot-Friedkin model. For the original model, convergence to the unique equilibrium point has been shown to be exponentially fast. The nonlinear contraction analysis framework allowed for a straightforward extension to dynamic topologies. The key conclusion of this paper is that, according to the DeGroot-Friedkin model, sequential opinion discussion, combined with reflected self-appraisal between any two successive issues, removes perceived (initial) individual social power at an exponential rate. True social power in the limit is determined by the network topology, i.e., interpersonal relationships and their strengths. An upper bound on each individual's limiting social power is computable, depending only on the network topology. 

A number of questions remain. Firstly, we aim to relax the graph topology assumption from strongly connected (i.e., the relative interaction matrix is irreducible) to containing a directed spanning tree (i.e., the relative interaction matrix is reducible). Moreover, one may consider a graph whose union over a set of issues is strongly connected, but for each issue, the graph is not strongly connected. Stubborn individuals (i.e., the Friedkin-Johnsen model) should be incorporated; only partial results are currently available \cite{mirtabatabaei2014stubborn_agent_OD}. Effects of noise and other external inputs should be studied, as well as the concept of personality affecting the reflected self-appraisal mechanism (as mentioned in Remark~\ref{rem:self_reg}).

\appendix  

The relative interaction matrices used in the simulation are given by
\begin{align*}
\mat{C}_{1} & = \begin{bmatrix}
0 & 0 & 0 & 0 & 0 & 1 \\
1 & 0 & 0 & 0 & 0 & 0 \\
0 & 1 & 0 & 0 & 0 & 0 \\
0 & 0 & 1 & 0 & 0 & 0 \\
0 & 0 & 0 & 1 & 0 & 0 \\
0 & 0 & 0 & 0 & 1 & 0 \\
\end{bmatrix} \\
\mat{C}_{2} & = \begin{bmatrix}
0 & 0 & 0 & 0 & 1 & 0 \\
0.8 & 0 & 0 & 0 & 0 & 0.2 \\
0 & 0.1 & 0 & 0 & 0 & 0.9 \\
0 & 0 & 1 & 0 & 0 & 0 \\
0 & 0 & 0 & 1 & 0 & 0 \\
0 & 0 & 0 & 0 & 1 & 0 \\
\end{bmatrix}\\
\mat{C}_{3} & = \begin{bmatrix}
0 & 0 & 0 & 0.2 & 0 & 0.8 \\
0.3 & 0 & 0.7 & 0 & 0 & 0 \\
0 & 0 & 0 & 1 & 0.5 & 0 \\
0 & 1 & 0 & 0 & 0 & 0 \\
0.75 & 0 & 0 & 0.25 & 0 & 0 \\
0 & 0 & 0 & 0 & 1 & 0 \\
\end{bmatrix} \\
\mat{C}_{4} & = \begin{bmatrix}
0 & 0 & 0 & 0 & 0.85 & 0.15 \\
1 & 0 & 0 & 0 & 0 & 0 \\
0 & 0.7 & 0 & 0.3 & 0 & 0 \\
0 & 0 & 0.5 & 0 & 0.5 & 0 \\
0 & 0 & 0.9 & 0 & 0 & 0.1 \\
0 & 1 & 0 & 0 & 0 & 0 \\
\end{bmatrix} \\
\mat{C}_{5} & = \begin{bmatrix}
0 & 0.5 & 0 & 0 & 0 & 0.5 \\
0.9 & 0 & 0.1 & 0 & 0 & 0 \\
0.9 & 0 & 0 & 0 & 0 & 0.1 \\
0.9 & 0.1 & 0 & 0 & 0 & 0 \\
0.9 & 0 & 0 & 0.1 & 0 & 0 \\
0.9 & 0 & 0 & 0 & 0.1 & 0 \\
\end{bmatrix}
\end{align*}

%


\section*{Acknowledgement}
The work of Ye, Anderson, and Yu was supported by the Australian Research Council (ARC) under grants \mbox{DP-130103610} and \mbox{DP-160104500}, and by Data61-CSIRO. The work of Liu and Ba\c{s}ar was supported in part by Office of Naval Research (ONR) MURI Grant N00014-16-1-2710, and in part by NSF under grant CCF 11-11342.
\ifCLASSOPTIONcaptionsoff
  \newpage
\fi



%
%
%

\bibliographystyle{IEEEtran}
\bibliography{MYE_ANU}
%
\begin{IEEEbiography}
[{\includegraphics[width=1in,height=1.25in,clip,keepaspectratio]{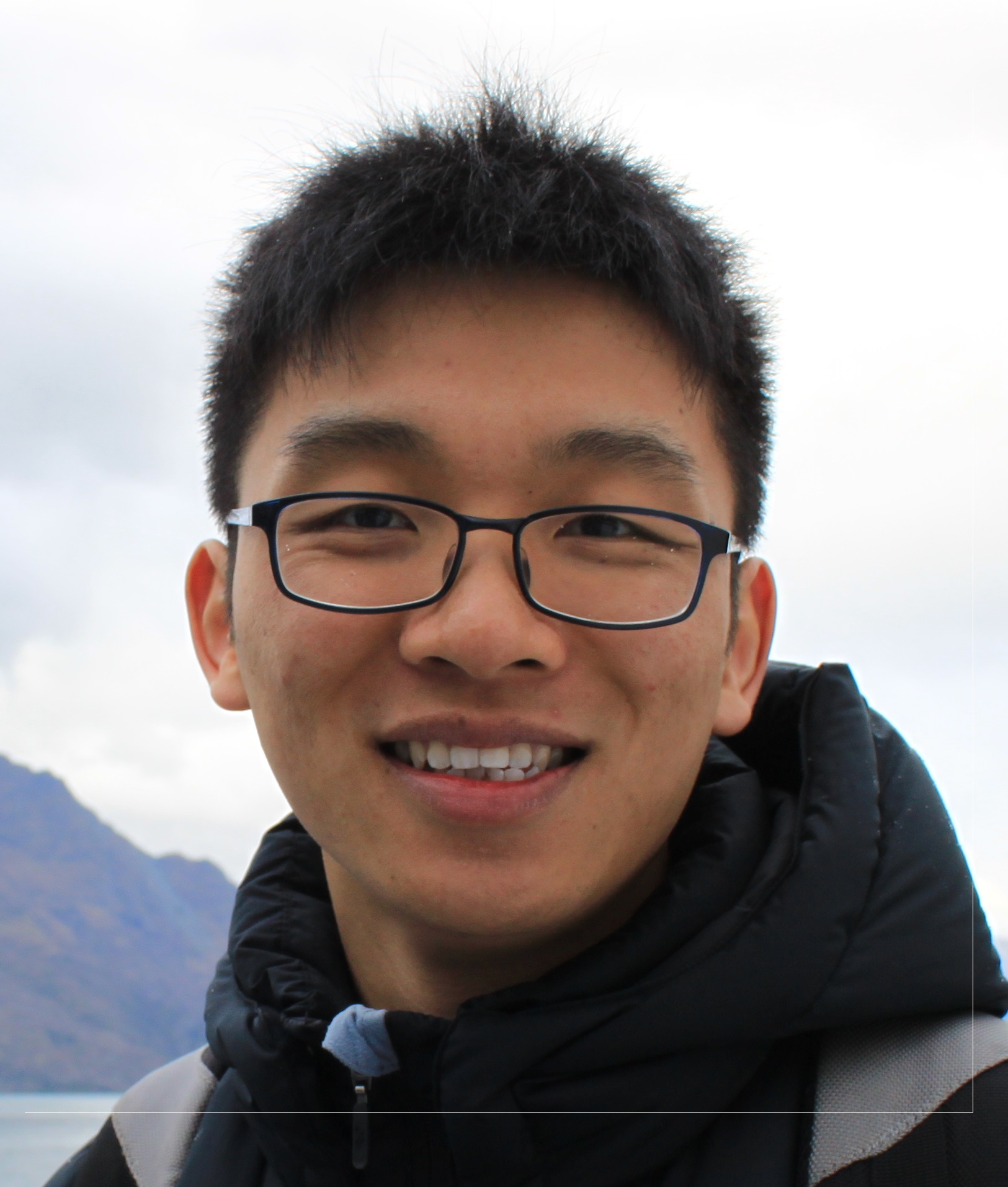}}]{Mengbin Ye}
was born in Guangzhou, China. He received the B.E. degree (with First Class Honours) in mechanical engineering from the University of Auckland, Auckland, New Zealand. He is currently pursuing the Ph.D. degree in control engineering at the Australian National University, Canberra, Australia.

His current research interests include opinion dynamics and social networks, consensus and synchronisation of Euler-Lagrange systems, and localisation using bearing measurements.
\end{IEEEbiography}

\begin{IEEEbiography}
[{\includegraphics[width=1in,height=1.25in,clip,keepaspectratio]{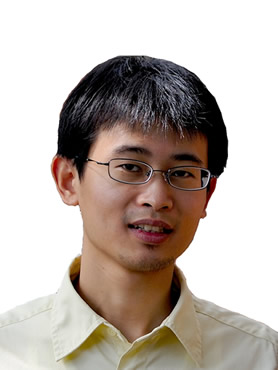}}]{Ji Liu} received the B.S. degree in information engineering from Shanghai Jiao Tong University, Shanghai, China, in 2006, and the Ph.D. degree in electrical engineering from Yale University, 
New Haven, CT, USA, in 2013. He is currently a Postdoctoral Research Associate at the Coordinated Science Laboratory, University of Illinois at Urbana-Champaign, Urbana, IL, USA. 

His current research interests include distributed control and computation, multi-agent systems, social networks, epidemic networks, and power networks.

\end{IEEEbiography}

\begin{IEEEbiography}
[{\includegraphics[width=1in,height=1.25in,clip,keepaspectratio]{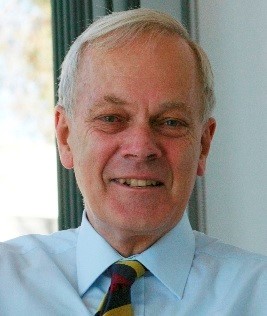}}]{Brian D.O. Anderson} (M'66-SM'74-F'75-LF'07) was born in Sydney, Australia. He received the B.Sc. degree in pure mathematics in 1962, and B.E. in electrical engineering in 1964, from the Sydney University, Sydney, Australia, and the Ph.D. degree in electrical engineering from Stanford University, Stanford, CA, USA, in 1966.

He is an Emeritus Professor at the Australian National University, and a Distinguished Researcher in Data61-CSIRO (previously NICTA) and a Distinguished Professor at Hangzhou Dianzi University. His awards include the IEEE Control Systems Award of 1997, the 2001 IEEE James H Mulligan, Jr Education Medal, and the Bode Prize of the IEEE Control System Society in 1992, as well as several IEEE and other best paper prizes. He is a Fellow of the Australian Academy of Science, the Australian Academy of Technological Sciences and Engineering, the Royal Society, and a foreign member of the US National Academy of Engineering. He holds honorary doctorates from a number of universities, including Universit\'{e} Catholique de Louvain, Belgium, and ETH, Z\"{u}rich. He is a past president of the International Federation of Automatic Control and the Australian Academy of Science. His current research interests are in distributed control, sensor networks and econometric modelling. 
\end{IEEEbiography}

\begin{IEEEbiography}
[{\includegraphics[width=1in,height=1.25in,clip,keepaspectratio]{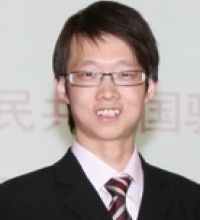}}]{Changbin Yu} received the B.Eng (Hon 1) degree from Nanyang Technological University, Singapore in 2004 and the Ph.D. degree from the Australian National University, Australia, in 2008. Since then he has been a faculty member at the Australian National University and subsequently holding various positions including a specially appointed professorship at Hangzhou Dianzi University.

He had won a competitive Australian Post-doctoral Fellowship (APD) in 2007 and a prestigious ARC Queen Elizabeth II Fellowship (QEII) in 2010. He was also a recipient of Australian Government Endeavour Asia Award (2005) and Endeavour Executive Award (2015), Chinese Government Outstanding Overseas Students Award (2006), Asian Journal of Control Best Paper Award (2006--2009), etc. His current research interests include control of autonomous aerial vehicles, multi-agent systems and human--robot interactions. He is a Fellow of Institute of Engineers Australia, a Senior Member of IEEE and a member of IFAC Technical Committee on Networked Systems. He served as a subject editor for International Journal of Robust and Nonlinear Control and was an associate editor for System \& Control Letters and IET Control Theory \& Applications.
\end{IEEEbiography}

\begin{IEEEbiography}
[{\includegraphics[width=1in,height=1.25in,clip,keepaspectratio]{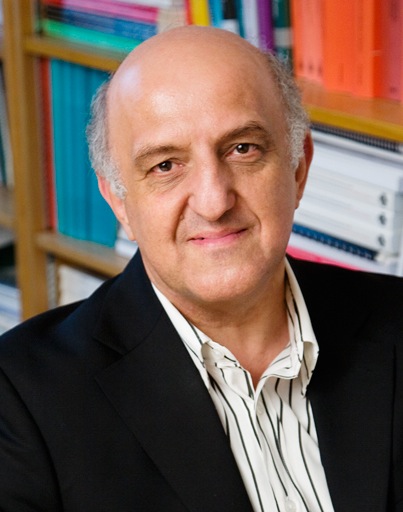}}]{Tamer Ba\c{s}ar} (S'71-M'73-SM'79-F'83-LF'13) is with the University of Illinois at Urbana-Champaign (UIUC), where he holds the academic positions of 
Swanlund Endowed Chair;    
Center for Advanced Study Professor of  Electrical and Computer Engineering; 
Research Professor at the Coordinated Science
Laboratory; and Research Professor  at the Information Trust Institute. 
He is also the Director of the Center for Advanced Study.
He received B.S.E.E. from Robert College, Istanbul,
and M.S., M.Phil, and Ph.D. from Yale University. He is a member of the US National Academy
of Engineering,  the European Academy of Sciences, and Fellow of IEEE, IFAC and SIAM, and has served as president of IEEE CSS,
ISDG, and AACC. He has received several awards and recognitions over the years, including  the IEEE Control Systems Award, 
the highest awards of IEEE CSS, IFAC, AACC, and ISDG, and a number of international honorary
doctorates and professorships. He has over 800 publications in systems, control, communications, networks,
and dynamic games, including books on non-cooperative dynamic game theory, robust control,
network security, wireless and communication networks, and stochastic networked control. He was
the Editor-in-Chief of Automatica between 2004 and 2014, and is currently  editor of several book series. His current research interests
include stochastic teams, games, and networks; security; and cyber-physical systems.
\end{IEEEbiography}




\end{document}